\documentclass[a4paper,fleqn]{cas-sc}

\usepackage[numbers]{natbib}
\usepackage{algorithm}
\usepackage{algpseudocode}
\usepackage{amsthm}
\usepackage{amsmath}
\usepackage{amssymb}
\usepackage{gensymb}
\usepackage{ulem}
\usepackage{booktabs} % 用于生成专业的表格线（\toprule, \midrule, \bottomrule）
\usepackage{caption} % 提升图表标题的控制能力
\usepackage{array} % 提供对表格列格式化的更精细控制
\usepackage{xcolor}
\usepackage{multirow}
\usepackage[flushleft]{threeparttable} % ← 关键：左对齐
\usepackage{mathtools}
\usepackage{adjustbox}

\newtheorem{theorem}{Theorem}
\newtheorem{lemma}[theorem]{Lemma}

\def\tsc#1{\csdef{#1}{\textsc{\lowercase{#1}}\xspace}}
\tsc{WGM}
\tsc{QE}
\begin{document}
\let\WriteBookmarks\relax
\def\floatpagepagefraction{1}
\def\textpagefraction{.001}

% Short title
\shorttitle{Curve resampling based high-quality high-order unstructured quadrilateral mesh generation}    

% Short author
\shortauthors{Weng et~al.}  

% Main title of the paper
\title[mode = title]{Curve resampling based high-quality high-order unstructured quadrilateral mesh generation}  

% Title footnote mark
% eg: \tnotemark[1]
\tnotemark[1] 

% Title footnote 1.
% eg: \tnotetext[1]{Title footnote text}
\tnotetext[1]{This research was supported by the National Key Laboratory of Computational Physics (No. JK2025-06), National Natural Science Foundation of China (Nos. 62272402, 62372389) and Natural Science Foundation of Fujian Province (No. 2024J01513243).} 

% First author
%
% Options: Use if required
% eg: \author[1,3]{Author Name}[type=editor,
%       style=chinese,
%       auid=000,
%       bioid=1,
%       prefix=Sir,
%       orcid=0000-0000-0000-0000,
%       facebook=<facebook id>,
%       twitter=<twitter id>,
%       linkedin=<linkedin id>,
%       gplus=<gplus id>]

\cortext[cor1]{Corresponding author}
\author[label1]{Yongjia Weng}
\ead{wengyongjia@stu.xmu.edu.cn}
\author[label2]{Lufeng Liu}
\ead{liu_lufeng@iapcm.ac.cn}
\author[label3]{Zhonggui Chen}
\ead{chenzhonggui@xmu.edu.cn}
\author[label2]{Xuan Zhou}
\ead{zhou_xuan@iapcm.ac.cn}
\author[label1]{Juan Cao\corref{cor1}}[orcid=0000-0002-8154-4397]
\ead{juancao@xmu.edu.cn}
\affiliation[label1]{organization={School of Mathematical Sciences, Xiamen University},%Department and Organization
            % addressline={}, 
            city={Xiamen},
            postcode={361005}, 
            state={Fujian},
            country={China}}
\affiliation[label2]{organization={Institute of Applied Physics and Computational Mathematics},%Department and Organization
            % addressline={}, 
            % city={Beijing},
            postcode={100094}, 
            state={Beijing},
            country={China}}
            
\affiliation[label3]{organization={School of Informatics, Xiamen University},%Department and Organization
            % addressline={}, 
            city={Xiamen},
            postcode={361005}, 
            state={Fujian},
            country={China}}

% \author[1]{}%[<options>]

% % Corresponding author indication
% \cormark[1]

% % Footnote of the first author
% \fnmark[1]

% % Email id of the first author
% \ead{}

% % URL of the first author
% \ead[url]{}

% % Credit authorship
% % eg: \credit{Conceptualization of this study, Methodology, Software}
% \credit{}

% % Address/affiliation
% \affiliation[1]{organization={},
%             addressline={}, 
%             city={},
% %          citysep={}, % Uncomment if no comma needed between city and postcode
%             postcode={}, 
%             state={},
%             country={}}

% \author[2]{}%[]

% % Footnote of the second author
% \fnmark[2]

% % Email id of the second author
% \ead{}

% % URL of the second author
% \ead[url]{}

% % Credit authorship
% \credit{}

% % Address/affiliation
% \affiliation[2]{organization={},
%             addressline={}, 
%             city={},
% %          citysep={}, % Uncomment if no comma needed between city and postcode
%             postcode={}, 
%             state={},
%             country={}}

% % Corresponding author text
% \cortext[1]{Corresponding author}

% Footnote text
% \fntext[1]{}

% For a title note without a number/mark
%\nonumnote{}

% Here goes the abstract
\begin{abstract}
High-order quadrilateral meshes offer superior accuracy and computational efficiency in numerical simulations. However, existing methods struggle to simultaneously preserve boundary/interface features, ensure high quality, and achieve efficient generation, particularly for complex geometries where degenerate and inverted elements frequently occur. To address this issue, this paper proposes a high-quality high-order unstructured quadrilateral mesh generation method based on geometric error-bounded curve reconstruction, which employs an indirect approach to enforce interface consistency. By optimization-based curve reconstruction strategies, our method improves mesh quality while maintaining the validity of high-order elements. Compared to direct high-order mesh optimization techniques, our approach reduces the optimization problem to curve reconstruction problem, significantly lowering computational complexity and enhancing efficiency. Experimental results demonstrate that the proposed method efficiently generates high-quality high-order quadrilateral meshes while preserving boundary/interface geometric features, offering improved adaptability and numerical stability in complex geometries.
\end{abstract}

% Use if graphical abstract is present
%\begin{graphicalabstract}
%\includegraphics{}
%\end{graphicalabstract}

% Research highlights
% \begin{highlights}
% \item 
% \item 
% \item 
% \end{highlights}

% Keywords
% Each keyword is seperated by \sep
\begin{keywords}
High-order quadrilateral mesh \sep Curve reconstruction \sep High-quality mesh \sep Geometric error-bounded
\end{keywords}

\maketitle

% Main text
\section{Introduction}
\label{sec1}
In recent years, high-order numerical methods have attracted widespread attention in both academia and industry due to their higher accuracy and faster convergence rates \cite{lopez2020cad,ferguson2023high,le2023second,dobrev2012high}. High-order numerical methods employ high-order elements with more degrees of freedom (DOFs), which not only enhance the approximation capability of numerical schemes but also allow for better geometric representation, thereby improving the overall simulation accuracy.

The generation techniques for high-order triangular or tetrahedral meshes are relatively mature, and include approaches based on shell mapping \cite{jiang2021bijective}, optimization-based strategies \cite{feng2018curved,ji2024evolutionary}, constructive methods \cite{mandad2020bezier,khanteimouri20233d}, and sampling techniques \cite{hu2019triwild}. However, compared to triangular meshes, quadrilateral meshes at the same mesh resolution support higher-order interpolation spaces, thus enhancing the approximation power and convergence order of numerical methods \cite{schneider2022large}. In numerical simulations, quadrilateral meshes often exhibit superior numerical stability and convergence efficiency, making them widely adopted in high-precision simulation fields such as computational fluid dynamics. For example, in large-deformation multi-phase flow problems, the widely used high-order arbitrary Lagrangian–Eulerian (ALE) method~\cite{anderson2018high} critically requires high-order, high-quality quadrilateral meshes to enhance its robustness.
%High-order arbitrary Lagrangian–Eulerian (ALE) framework~\cite{anderson2018high} is widely used in the simulation of large-deformation problems and plays a crucial role in applications such as fluid-structure interaction.

Quadrilateral meshes can be classified into four categories: regular mesh, semi-regular mesh, valence semi-regular mesh, and unstructured mesh \cite{bommes2013quad}. For models with complex geometries, unstructured meshes are particularly suitable for capturing intricate geometric features. In many application domains \cite{liu2023topology}, such as multi-phase flow problems, it is often necessary to perform mesh partitioning on models composed of different materials. These models typically contain material interfaces, and the generated mesh must maintain interface constraints while ensuring consistency across the interface.

Compared to linear meshes, high-order meshes offer stronger geometric representation capabilities and can more accurately conform to model boundaries. This allows for a reduction in the number of required mesh elements while maintaining a desired level of both geometric and numerical accuracy. Additionally, high-order mesh elements possess more internal DOFs, which facilitates the development of algorithms with high computational density and low communication overhead—advantages that are beneficial for parallel acceleration \cite{mittal2025general}.

Moreover, degradation in mesh quality can significantly impact the accuracy of subsequent numerical simulations. Especially when inverted elements appear, the numerical solution may become severely compromised \cite{liu2023topology}. This issue becomes even more critical in moving-mesh computations, where mesh quality directly affects both the stability and efficiency of the numerical solver. High-quality meshes can effectively reduce numerical errors, accelerate algorithmic convergence, and, to some extent, allow for larger time steps, thereby improving computational efficiency.

However, generating high-order quadrilateral meshes that are feature-preserving, efficient, and high-quality remains a significant challenge. Existing methods often struggle to achieve all three objectives simultaneously. For example, quadtree-based templated refinement approaches~\cite{kopriva2024hohqmesh} are relatively efficient but tend to produce low quality elements near input curves, sometimes resulting in inverted elements, and they cannot handle non-closed interface curves.  Block-structured methods~\cite{marcon2019naturally} and optimization-based approaches~\cite{dobrev2019target} can produce higher-quality meshes but at considerable computational cost. In particular, optimization-based methods require managing a large number of variables for high-order elements, making them sensitive to initialization and expensive to compute.

This paper presents a novel method for efficiently generating high-quality, high-order, planar, unstructured quadrilateral meshes based on geometric error-bounded curve reconstruction. The proposed approach leverages the strong correlation between mesh quality and boundary curve segment distribution in high-order quadrilateral meshes, simplifying the high-dimensional optimization problem into a boundary curve reconstruction task. As a result, the method achieves both computational efficiency and high mesh quality, while preserving boundary and interface features with geometric accuracy within the prescribed error tolerance. Extensive numerical experiments demonstrate that the method outperforms state-of-the-art techniques in terms of efficiency, mesh quality, and interface conformity, making it a promising solution for practical applications requiring high-fidelity discretizations. Our specific contributions include:

\begin{itemize} 
\item We propose a geometric error-bounded high-quality high-order quadrilateral mesh generation method. The method effectively preserves complex geometric features embedded in input polynomial curves while ensuring conformity across material interfaces. It also avoids low-quality or degenerate elements, which is crucial for reliable downstream simulations.

\item We develop an efficient high-order quadrilateral mesh generation algorithm by introducing an adaptive boundary curve reconstruction strategy that reformulates high-order mesh optimization as a boundary reconstruction problem. This reduction in optimization DOFs significantly enhances the efficiency and robustness of the entire meshing pipeline.

\item We integrate the proposed meshing approach into a high-order ALE framework, effectively mitigating mesh distortion typical of traditional Lagrangian method. This integration enhances stability and accuracy in numerical simulations, demonstrating the robustness and practical applicability of our method for large-deformation problems, while preserving clear material interfaces compared to existing high-order ALE methods~\cite{anderson2018high}.
\end{itemize}

\section{Related work}

The methods for generating high-order quadrilateral meshes can primarily be categorized into direct methods and indirect methods. Direct methods generate high-order quadrilateral meshes directly, whereas indirect methods typically involve first generating linear quadrilateral meshes and then using these as a basis to construct high-order quadrilateral meshes.

\textbf{Direct methods.} High-order quadrilateral mesh generation via direct methods includes approaches based on block decomposition \cite{marcon2019naturally} and quadtree-based methods \cite{kopriva2024hohqmesh}. The patching-based approach \cite{marcon2019naturally} initially generates high-order triangular meshes, constructs guiding fields from them to determine the positions of irregular nodes, and calculates separation lines to partition the model into high-order quadrilateral elements. While this method can produce high-quality high-order quadrilateral meshes, it incurs significant computational costs and is challenging to apply to complex geometries or problems requiring multiple remeshing iterations. HOHQMesh \cite{kopriva2024hohqmesh}, an open-source automated generator for high-order quadrilateral meshes, employs a quadtree-based template subdivision strategy to directly generate adaptive high-order quadrilateral meshes. Although quadtree-based methods are efficient, they suffer from poor element quality near input curves, may invert, and cannot support non-closed interface curves. Our proposed method efficiently generates high-quality high-order quadrilateral meshes and supports non-closed interface curves.

\textbf{Indirect methods.} Indirect methods for generating high-order quadrilateral meshes include those based on element deformation techniques. These methods start with a linear quadrilateral mesh, extract all high-order points within the mesh, and use optimization \cite{turner2018curvilinear,ruiz2016generation}, solving Winslow equations \cite{fortunato2016high}, or linear elasticity analogies \cite{xie2013generation} to adjust the positions of high-order points so that the curves conform closely to boundaries while maintaining element quality. However, the key challenge lies in generating initial linear quadrilateral meshes such that post-deformation elements remain free from inversion and maintain high quality. Noting the profound relationship between the quality of high-order quadrilateral meshes and boundary curve segment distribution, we propose a geometric error-bounded boundary reconstruction algorithm to provide a robust foundation of linear quadrilateral meshes for high-order mesh generation.

Generating high-quality high-order quadrilateral meshes often necessitates mesh optimization. Prominent among these methods is the target-matrix optimization paradigm (TMOP) \cite{dobrev2019target}. This approach usually starts with a poorly conditioned high-order mesh, defines mesh quality as an energy function, and optimizes the energy to achieve high-quality high-order quadrilateral meshes. However, in such optimization-based methods, all element nodes serve as variables, leading to a large number of DOFs, which increases the risk of getting trapped in local minima. Moreover, the optimization results heavily depend on the initial high-order mesh and are computationally expensive. Our proposed method leverages the intrinsic connection between mesh quality and boundary curve distribution, transforming the mesh quality optimization problem into an optimization of boundary point distribution, significantly reducing the number of optimization variables, lowering optimization complexity, and enhancing efficiency.

In isogeometric analysis, parametrization methods \cite{xu2018constructing,zheng2019boundary,wang2022tcb,khanteimouri2025c} can also yield high-order quadrilateral meshes, focusing on both the quality of parametric mappings and their continuity. For complex geometries, achieving both mapping continuity and low distortion simultaneously poses a significant challenge, making it difficult to attain high-quality parametrization. Such methods generally require maintaining high-order continuity of the mapping, which inevitably leads to degenerate elements when handling input boundaries with high-order continuity. As our method does not require higher-order geometric continuity between elements, maintaining only $C_{0}$ continuity across elements, it enables the generation of high-quality high-order quadrilateral meshes for complex geometries.

\section{Preliminary}
In this paper, both Lagrange and Bernstein polynomials are used interchangeably as basis functions for representing high-order curves and elements \cite{ergatoudis1968curved,farin2001curves}. We employ Bernstein polynomials for geometric analysis and Lagrange polynomials for high-order point sampling and high-order element deformation. The following sections introduce the specific definitions and notations employed.

\textbf{Lagrange polynomial curves.}  
Given $n+1$ interpolation points $\mathbf{P}_0, \dots, \mathbf{P}_n \in \mathbb{R}^2$, an degree-$n$ Lagrange polynomial curve $C(t)$ interpolating these points is defined as
\begin{equation}
C(t) = \sum_{i=0}^{n} \mathbf{P}_i L_i(t), \quad t \in [0, 1],
\end{equation}
where $L_i(t)$ are the Lagrange basis functions given by
\begin{equation}
L_i(t) = \prod_{\substack{0 \le j \le n \\ j \ne i}} \frac{t - t_j}{t_i - t_j}.
\end{equation}
Each basis function satisfies $L_i(t_j) = \delta_{ij}$, where $\delta_{ij}$ is the Kronecker delta. The parameter values $t_i$ correspond to the points $\mathbf{P}_i$, such that $C(t_i) = \mathbf{P}_i$. In particular, $\mathbf{P}_0$ and $\mathbf{P}_n$ are the endpoints of the curve; the remaining interpolation points $\mathbf{P}_i$ ($1 \le i \le n-1$) are called high-order interpolation points. In this work, the input curves are uniformly parameterized with $t_i = \frac{i}{n}$.

\textbf{B\'{e}zier curves.}
Given $n+1$ control points $\mathbf{Q}_0, \dots, \mathbf{Q}_n \in \mathbb{R}^2$, an degree-n B\'{e}zier curve $C_B(t)$ is defined as
\begin{equation}
C_B(t) = \sum_{i=0}^{n} \mathbf{Q}_i B_i^n(t), \quad t \in [0,1],
\end{equation}
where $B_i^n(t)$ are the Bernstein polynomials,
\begin{equation}
B_i^n(t) = \binom{n}{i} t^i (1 - t)^{n-i}, \quad i = 0, 1, \ldots, n.
\end{equation}
The Bernstein polynomials possess the property of nonnegative and partition of unity, i.e.,
\begin{equation}
B_{i}^{n}(t)\geq 0,\quad \sum_{i=0}^{n} B_{i}^{n}(t) \equiv 1,\quad t\in[0,1].
\end{equation}
The curve passes through the first and last control points, i.e., $C_B(0) = \mathbf{Q}_0$ and $C_B(1) = \mathbf{Q}_n$, and lies entirely within the convex hull of its control polygon. Define $\Delta_i = \mathbf{Q}_{i+1} - \mathbf{Q}_i$ as the control vector of the B\'{e}zier curve at control point $\mathbf{Q}_i$.

\textbf{High-order quadrilateral element.}   
A high-order quadrilateral mesh is composed of a collection of high-order quadrilateral elements. Let $Q$ be a degree-$n$ quadrilateral element with curved boundaries, defined by a mapping $\psi: S \to Q$ from the reference square $S = [0,1]^2$ to $Q$, so that $Q = \psi(S)$. 

\begin{figure}
  \centering
\includegraphics[width=.9\textwidth]{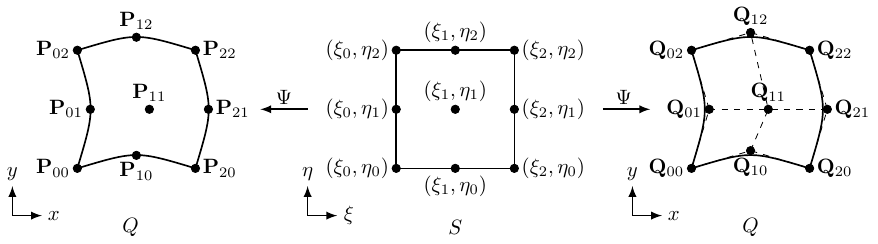}

     \parbox[t]{.31\textwidth}{\centering
            (a) 
           }
  \parbox[t]{.31\textwidth}{\centering
          (b)
           }
     \parbox[t]{.31\textwidth}{\centering
          (c)
           }

  \caption{\label{high_order lagrange element definition}
          Quadratic quadrilateral element representations. (b) Reference square and nodes; (a,c) the same element represented using Lagrange basis functions and Bernstein polynomials, respectively.}
\end{figure}  

Given the nodes $(\xi_0,\eta_0), \dots, (\xi_n,\eta_n)$ of the reference unit square (see Fig.~\ref{high_order lagrange element definition}(b)), the mapping $\psi(\xi,\eta)$ can be expressed using Lagrange basis functions as:  
\begin{equation}  \label{lagrange representation}
\psi(\xi,\eta) = \sum \mathbf{P}_{ij} L_{ij}(\xi,\eta), \quad (\xi,\eta) \in S,
\end{equation}  
where $\mathbf{P}_{00}, \dots, \mathbf{P}_{nn} \in \mathbb{R}^2$ are the Lagrange interpolation nodes of the degree-$n$ quadrilateral, and $L_{ij}(\xi,\eta) = L_i(\xi)L_j(\eta)$ are the tensor-product Lagrange basis functions satisfying $L_{ij}(\xi_i,\eta_j) = 1$ and $L_{ij}(\xi_{\hat{i}},\eta_{\hat{j}}) = 0$ for $(\hat{i},\hat{j}) \neq (i,j)$. Thus, $\psi(\xi_i,\eta_j) = \mathbf{P}_{ij}$, ensuring interpolation at the nodes (see Fig.~\ref{high_order lagrange element definition}(a)). For notational simplicity, we write $\psi$ instead of $\psi(\xi,\eta)$ when no ambiguity arises.

% \begin{figure}
%   \centering
% \includegraphics[width=.5\textwidth]{figures/figure1_bezier_element_show.pdf}

%   \caption{\label{high_order bezier element definition}
%            The definition of a quadratic quadrilateral element based on Bernstein polynomials.}
% \end{figure} 

Alternatively, $\psi$ can equivalently be defined using Bernstein polynomials as:  
\begin{equation}  \label{bezier representation}
\psi(\xi,\eta) = \sum \mathbf{Q}_{ij} B_{ij}^n(\xi,\eta),\quad (\xi,\eta) \in S,  
\end{equation}  
where $\mathbf{Q}_{ij}$ are the control points, and $B_{ij}^n(\xi,\eta) = B_i^n(\xi)B_j^n(\eta)$ are the tensor-product Bernstein polynomials (see Fig.~\ref{high_order lagrange element definition}(c)). Unlike Lagrange representations, Bernstein-based mappings do not interpolate all control points. The two definitions of the mapping $\psi$ (Eq.~(\ref{lagrange representation}) and Eq.~(\ref{bezier representation})) are equivalent, as they can be mutually transformed; the explicit conversion is provided in \cite{ainsworth2016computing}. Bernstein-based mappings facilitate geometric analysis through the control polygon, enabling the establishment of a relationship between element quality and its control polygon, which can guide boundary curve refinement.

The quality of the high-order quadrilateral element is determined by the properties of the mapping $\psi$. Specifically, the Jacobian matrix $J = \partial \psi$ measures local geometric distortion. The mesh quality metric used in this work is based on a variant of the element’s Jacobian; see Section~\ref{quality metrics} for details.

\section{Algorithmic and theoretical foundations} 
The Hausdorff distance is commonly employed to quantify the geometric discrepancy between the model boundary and the mesh boundary~\cite{remacle2014optimizing}. In this section, we present algorithms and theorems for computing an upper bound on the Hausdorff distance between two B\'ezier curves, providing the basis for curve reconstruction within a prescribed error threshold~$\epsilon$ in the subsequent section. We then derive a sufficient condition for the validity of high-order quadrilateral elements, which supports algorithms for generating high-quality high-order quadrilateral meshes. Proofs of all results are deferred to \ref{Proof of Lemma 1}-\ref{proof of theorem valid high-order quad}.

\textbf{Hausdorff distance estimation.} Given two degree-$n$ self-intersection free Bézier curves $\mathbf{l}_1$ and $\mathbf{l}_2$ with control points $\{\mathbf{Q}_{i1}\}_{i=0}^{n}$ and $\{\mathbf{Q}_{i2}\}_{i=0}^{n}$, respectively, the Hausdorff distance ~\cite{surazhsky2004sampling} between them is defined as  
\begin{equation}  
d_{H}(\mathbf{l}_{1},\mathbf{l}_{2})= \max\left\{ \adjustlimits\sup_{\mathbf{x} \in \mathbf{l}_1} \inf_{\mathbf{y} \in \mathbf{l}_2} d(\mathbf{x}, \mathbf{y}), \adjustlimits\sup_{\mathbf{y} \in \mathbf{l}_2} \inf_{\mathbf{x} \in \mathbf{l}_1} d(\mathbf{x}, \mathbf{y}) \right\},  
\end{equation}  
where $d(\mathbf{x}, \mathbf{y}) = \|\mathbf{x} - \mathbf{y}\|_2$. 
In our framework, verifying $d_H(\mathbf{l}_1, \mathbf{l}_2)<\epsilon$ is a recurring task. Sampling-based methods~\cite{surazhsky2004sampling} are sensitive to the number of samples, making it difficult to balance accuracy and efficiency. We instead estimate an upper bound of the Hausdorff distance directly from control point distances to test the condition $d_H(\mathbf{l}_1, \mathbf{l}_2) < \epsilon$ efficiently. Let $d_m(\mathbf{l}_1, \mathbf{l}_2) = \max\limits_{i} d(\mathbf{Q}_{i1}, \mathbf{Q}_{i2})$ denote the maximum distance between corresponding control points of curves $\mathbf{l}_1$ and $\mathbf{l}_2$. 
\begin{lemma}\label{lemma 1} The Hausdorff distance between $ \mathbf{l}_1 $ and $ \mathbf{l}_2 $ satisfies  
\begin{equation}
d_H(\mathbf{l}_1, \mathbf{l}_2) \leq d_{m}(\mathbf{l}_1,\mathbf{l}_2).
\end{equation}
\end{lemma}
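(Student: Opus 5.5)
The plan is to compare the two curves pointwise at equal parameter values, which sidesteps the infimum in the definition of $d_H$. Write $\mathbf{l}_1(t)=\sum_{i=0}^{n}\mathbf{Q}_{i1}B_i^n(t)$ and $\mathbf{l}_2(t)=\sum_{i=0}^{n}\mathbf{Q}_{i2}B_i^n(t)$ for $t\in[0,1]$. Fix any $t\in[0,1]$. Because $\sum_i B_i^n(t)\equiv 1$, the difference is again a Bernstein combination,
\begin{equation*}
\mathbf{l}_1(t)-\mathbf{l}_2(t)=\sum_{i=0}^{n}\left(\mathbf{Q}_{i1}-\mathbf{Q}_{i2}\right)B_i^n(t).
\end{equation*}
Apply the triangle inequality and the nonnegativity of the Bernstein polynomials:
\begin{equation*}
\|\mathbf{l}_1(t)-\mathbf{l}_2(t)\|_2\le \sum_{i=0}^{n}B_i^n(t)\,d(\mathbf{Q}_{i1},\mathbf{Q}_{i2})\le d_m(\mathbf{l}_1,\mathbf{l}_2)\sum_{i=0}^{n}B_i^n(t)=d_m(\mathbf{l}_1,\mathbf{l}_2).
\end{equation*}
Equivalently, the difference vector lies in the convex hull of the vectors $\mathbf{Q}_{i1}-\mathbf{Q}_{i2}$, and all of these lie in the closed ball of radius $d_m$.

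To pass to the Hausdorff distance, take any point $\mathbf{x}\in\mathbf{l}_1$. It has the form $\mathbf{x}=\mathbf{l}_1(t)$ for some $t\in[0,1]$. Hence
\begin{equation*}
\inf_{\mathbf{y}\in\mathbf{l}_2}d(\mathbf{x},\mathbf{y})\le d(\mathbf{l}_1(t),\mathbf{l}_2(t))\le d_m(\mathbf{l}_1,\mathbf{l}_2).
\end{equation*}
Taking the supremum over $\mathbf{x}\in\mathbf{l}_1$ bounds the first term in the definition of $d_H$. The argument is symmetric in the two curves, so the second term obeys the same bound, and therefore $d_H(\mathbf{l}_1,\mathbf{l}_2)\le d_m(\mathbf{l}_1,\mathbf{l}_2)$.

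There is no real obstacle in this argument. The only point that needs care is that both curves are written over the common parameter domain $[0,1]$ with the same degree $n$. This is what lets the correspondence $t\mapsto t$ be used as a coupling between the two curves. The self-intersection-free assumption is not needed for this bound.
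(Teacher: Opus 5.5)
Your proof is correct and follows the paper's own argument. Both use the identity-parameter coupling to bound $d_H(\mathbf{l}_1,\mathbf{l}_2)$ by $\max_t\|\mathbf{l}_1(t)-\mathbf{l}_2(t)\|_2$, and then apply the triangle inequality together with nonnegativity and partition of unity of the Bernstein basis; you spell out the sup--inf step for both directed distances, which the paper compresses into a single line.
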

% However, $d_m$ alone is too loose as an upper bound for the Hausdorff distance. To tighten this bound, we subdivide the curves using the following algorithm.

% By the subdivision convergence and convex hull properties of B\'ezier curves \cite{farin2001curves}, the bound on $d_H(\mathbf{l}_1, \mathbf{l}_2)$ converges under recursive subdivision, as stated in Lemma~\ref{lemma 2}.
We subdivide the curves using the following algorithm, the bound $b_k$ is monotonically decreasing under recursive subdivision, as stated in Lemma~\ref{lemma 2}.

\begin{algorithm}[htbp!]
    \caption{The upper bound of $d_H(\mathbf{l}_1,\mathbf{l}_2)$ computation.}
    \label{alg:Framwork}
    \begin{algorithmic}[1]
      \Require
        Two self-intersection free B\'{e}zier curves $\mathbf{l}_1,\mathbf{l}_2$; the given geometric error threshold $\epsilon$; the maximum refinement iteration $k_{0}$.
      \Ensure
        The upper bound $b_{k}$ of $d_H(\mathbf{l}_1,\mathbf{l}_2)$.
    \State{Refinement iteration count $k \xleftarrow{} 0$ and $b_0 \xleftarrow{} d_m(\mathbf{l}_1,\mathbf{l}_2)$.}
    \State{Set the subdivided curve sets $\mathcal{L}_{1}=\{\mathbf{l}_{1}\},\mathcal{L}_{2}=\{\mathbf{l}_{2}\}$.}
     \While{$b_k>\epsilon$}
     \State{$k=k+1$.}
     \State{Bisect $\mathbf{l}_1$ at $\mathbf{x}^k = \mathbf{l}_1(0.5)$, replace it in $\mathcal{L}_1$ with the subcurves, and reindex the updated set as $\{\mathbf{l}_1^i\}_{i=0}^{k}$.}
     \State{Find the closest point $\mathbf{y}^k$ on $\mathbf{l}_2$ to $\mathbf{x}^k$, split $\mathbf{l}_2$ at $\mathbf{y}^k$, and update $\mathcal{L}_2$ with the resulting segments, reindexed as $\{\mathbf{l}_2^i\}_{i=0}^{k}$.}
     \State{Set $b_k=\min\{b_{k-1},\max\limits_{i\in\{0,\dots,k\}}\{d_{m}(\mathbf{l}_{1}^{i},\mathbf{l}_{2}^{i})\}\}$, and find $j$ s.t. $d_{m}(\mathbf{l}_{1}^{j},\mathbf{l}_{2}^{j})=b_{k}$.}
     \State{$\mathbf{l}_1\xleftarrow{}\mathbf{l}_1^{j}$,$\mathbf{l}_2\xleftarrow{}\mathbf{l}_2^{j}$.}
     \If{$k>k_{0}$}
     \Return $b_k$.
     \EndIf 
     \EndWhile \\
     \Return $b_k$.
    \end{algorithmic}
\end{algorithm}

\begin{lemma}\label{lemma 2} The Hausdorff distance between $ \mathbf{l}_1 $ and $ \mathbf{l}_2 $ satisfies  
\begin{equation}   d_H(\mathbf{l}_1, \mathbf{l}_2) \leq b_k,  
\end{equation}
where $b_k$ is obtained using Algorithm~\ref{alg:Framwork}, and the bound $b_k$ is monotonically decreasing.

% Moreover, the bound becomes tighter as $k$ increases.
\end{lemma}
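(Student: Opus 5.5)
Monotonicity is immediate from the update rule $b_k=\min\{b_{k-1},\cdot\}$, so $b_k\le b_{k-1}$ for every $k$. The substance of the lemma is therefore the bound $d_H(\mathbf{l}_1,\mathbf{l}_2)\le b_k$, which I would prove by induction on $k$ with Lemma~\ref{lemma 1} as the main tool. The base case $b_0=d_m(\mathbf{l}_1,\mathbf{l}_2)$ is exactly Lemma~\ref{lemma 1}. For the inductive step, if the minimum is attained by $b_{k-1}$ there is nothing to show. Otherwise $b_k=\max_i d_m(\mathbf{l}_1^i,\mathbf{l}_2^i)$ over the current pieces, and I must show this maximum also bounds $d_H$ of the original curves.

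The key structural fact is that at each step the current families $\{\mathbf{l}_1^i\}$ and $\{\mathbf{l}_2^i\}$ are ordered partitions of the original curves into consecutive subarcs, with matching indices. I would maintain this as an invariant. Bisecting a piece of $\mathbf{l}_1$ at $\mathbf{x}^k$ and splitting the matching piece of $\mathbf{l}_2$ at the closest point $\mathbf{y}^k$ replaces one matched pair by two matched pairs. Each subarc is again a Bézier curve of degree $n$ (de Casteljau subdivision), so Lemma~\ref{lemma 1} applies to each pair, giving $d_H(\mathbf{l}_1^i,\mathbf{l}_2^i)\le d_m(\mathbf{l}_1^i,\mathbf{l}_2^i)$.

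Next comes the gluing step. Since $\mathbf{l}_1=\bigcup_i \mathbf{l}_1^i$ and $\mathbf{l}_2=\bigcup_i\mathbf{l}_2^i$, I will use the elementary inequality $d_H(\bigcup A_i,\bigcup B_i)\le \max_i d_H(A_i,B_i)$. To see it, take any $\mathbf{x}\in\mathbf{l}_1$. It lies in some $\mathbf{l}_1^i$, so $\inf_{\mathbf{y}\in\mathbf{l}_2}d\le \inf_{\mathbf{y}\in\mathbf{l}_2^i}d\le d_H(\mathbf{l}_1^i,\mathbf{l}_2^i)$, and the argument is symmetric for $\mathbf{y}\in\mathbf{l}_2$. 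Combining the three steps gives $d_H(\mathbf{l}_1,\mathbf{l}_2)\le \max_i d_m(\mathbf{l}_1^i,\mathbf{l}_2^i)$, and hence $d_H\le b_k$.

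The main obstacle is making the invariant precise against the algorithm's somewhat loose wording. Line 8 reassigns $\mathbf{l}_1,\mathbf{l}_2$ to the worst pair $\mathbf{l}_1^j,\mathbf{l}_2^j$, and later steps subdivide only that pair while the other pieces stay in $\mathcal{L}_1,\mathcal{L}_2$. I need the partition property and the index correspondence to survive this. One point to check is that the closest point $\mathbf{y}^k$ lies in the interior of the current $\mathbf{l}_2$ piece. If it is an endpoint, one subpiece degenerates to a single point, and it must be verified that a degenerate Bézier curve (all control points equal) still satisfies Lemma~\ref{lemma 1}; it does, trivially, by partition of unity. Another is that the ordering is preserved, which holds because splits occur inside a single piece. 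Since the gluing inequality only needs the pieces to cover their curves with matched indices, no further geometric property of $\mathbf{y}^k$ is needed.
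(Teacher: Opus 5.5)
Your proposal is correct and follows the paper's own proof. The paper also obtains $d_H(\mathbf{l}_1,\mathbf{l}_2)\le\max_i d_H(\mathbf{l}_1^i,\mathbf{l}_2^i)$ by splitting the supremum over the pieces of $\mathbf{l}_1$ and using $\mathbf{l}_2^i\subseteq\mathbf{l}_2$, which is your gluing inequality; it then applies Lemma~\ref{lemma 1} to each matched pair and reads off monotonicity from the $\min$ in the update. Your induction on the $\min$, your remark that de Casteljau subdivision preserves the degree, and your treatment of the degenerate endpoint split are more careful versions of the same steps, not a different route.
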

%: $b_k \to {\color{blue}d_t}(\mathbf{l}_1, \mathbf{l}_2)$ as $k \to \infty$

Swapping the roles of $\mathbf{l}_1$ and $\mathbf{l}_2$ in Algorithm~\ref{alg:Framwork} yields another bound $b'_k$, i.e., $d_H(\mathbf{l}_1, \mathbf{l}_2) \leq b'_k$. Setting $\hat{b}_k(\mathbf{l}_1, \mathbf{l}_2) = \min\{b_k, b'_k\}$ provides a tighter bound:

\begin{theorem}\label{hausdorff geometric error theorem}
The Hausdorff distance between $ \mathbf{l}_1 $ and $ \mathbf{l}_2 $ satisfies 
\begin{equation}  
d_H(\mathbf{l}_1, \mathbf{l}_2) \leq \hat{b}_k(\mathbf{l}_1, \mathbf{l}_2).
\end{equation}
\end{theorem}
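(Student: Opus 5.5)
The theorem follows from Lemma~\ref{lemma 2} together with the symmetry of the Hausdorff distance, so the plan is to reduce it to two applications of that lemma. First, Lemma~\ref{lemma 2} applied to the ordered pair $(\mathbf{l}_1,\mathbf{l}_2)$ gives $d_H(\mathbf{l}_1,\mathbf{l}_2)\le b_k$. Second, I run Algorithm~\ref{alg:Framwork} with the roles swapped, bisecting $\mathbf{l}_2$ at its parametric midpoint and splitting $\mathbf{l}_1$ at the closest point. This is the same algorithm applied to the pair $(\mathbf{l}_2,\mathbf{l}_1)$, and both curves are degree-$n$, self-intersection free B\'ezier curves, so the hypotheses of Lemma~\ref{lemma 2} still hold. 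The lemma then yields $d_H(\mathbf{l}_2,\mathbf{l}_1)\le b'_k$.

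Next I use the definition of $d_H$. It is the maximum of the two one-sided suprema, and swapping the arguments only swaps these two terms. Since $d(\mathbf{x},\mathbf{y})=d(\mathbf{y},\mathbf{x})$, we get $d_H(\mathbf{l}_2,\mathbf{l}_1)=d_H(\mathbf{l}_1,\mathbf{l}_2)$. Hence $d_H(\mathbf{l}_1,\mathbf{l}_2)$ is bounded above by both $b_k$ and $b'_k$, and therefore by their minimum $\hat b_k(\mathbf{l}_1,\mathbf{l}_2)$.

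The only point needing care is that the algorithm may stop at different iteration counts in the two runs, because of the $\epsilon$ test or the cap $k_0$. I handle this by reading $b_k$ and $b'_k$ as the values each run returns. Lemma~\ref{lemma 2} holds for every iterate, since the bounds are monotone and each is valid, so any returned value is a valid bound. There is no real obstacle: all the substantive work, namely the convex-hull and subdivision argument showing that $\max_i d_m(\mathbf{l}_1^i,\mathbf{l}_2^i)$ bounds the distance between the full curves, is already contained in Lemmas~\ref{lemma 1} and~\ref{lemma 2}.
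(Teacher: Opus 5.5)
Your proof is correct and follows the paper's argument. The paper applies Lemma~\ref{lemma 2} to both orderings of the curves and takes the minimum of the two bounds; you make explicit what the paper leaves implicit, namely the symmetry $d_H(\mathbf{l}_2,\mathbf{l}_1)=d_H(\mathbf{l}_1,\mathbf{l}_2)$ and the possibility that the two runs stop at different iterations, and you state the bound with $\le$ rather than the paper's strict inequality, which Lemma~\ref{lemma 2} does not actually give.
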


Given a piecewise curve $\mathcal{A} = \{\mathbf{l}_0, \dots, \mathbf{l}_j\}$ and a single curve $\mathbf{l}$, where $\mathbf{l}_i(1) = \mathbf{l}_{i+1}(0)$ for all $i$, we locate on $\mathbf{l}$ the closest point to each endpoint of $\mathcal{A}$, partitioning $\mathbf{l}$ into segments $\mathcal{B} = \{\mathbf{l}^*_0, \dots, \mathbf{l}^*_j\}$. 

\begin{theorem}\label{hausdorff geometric error theorem for multicurve}
The Hausdorff distance between $ \mathcal{A} $ and $\mathbf{l}$ satisfies 
\begin{equation}  
d_H(\mathcal{A}, \mathbf{l}) \leq \max\limits_i \, \hat{b}_k(\mathbf{l}_i, \mathbf{l}^*_i).
\end{equation}
\end{theorem}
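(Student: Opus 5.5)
The plan is to reduce the multi-curve statement to Theorem~\ref{hausdorff geometric error theorem} applied piecewise, using the elementary fact that the Hausdorff distance between two finite unions is at most the maximum of the pairwise Hausdorff distances between matched pieces. Set
\[
\delta=\max_i \hat{b}_k(\mathbf{l}_i,\mathbf{l}^*_i).
\]

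First I will record the structure of the partition $\mathcal{B}$. The points on $\mathbf{l}$ closest to the consecutive endpoints of $\mathcal{A}$ split $\mathbf{l}$ into pieces $\mathbf{l}^*_0,\dots,\mathbf{l}^*_j$. Each piece is a subcurve of a Bézier curve, so it is again a Bézier curve of the same degree, obtained for instance by de Casteljau subdivision. Moreover, $\bigcup_i \mathbf{l}^*_i=\mathbf{l}$, because the pieces are consecutive and cover the whole parameter interval, assuming as in the construction that the first and last closest points are the endpoints of $\mathbf{l}$.

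Second, I will prove a union lemma. Let $X=\bigcup_i X_i$ and $Y=\bigcup_i Y_i$ with $d_H(X_i,Y_i)\le\delta$ for every $i$.
\begin{itemize}
\item Take any $\mathbf{x}\in X$. Then $\mathbf{x}\in X_i$ for some $i$, so $\inf_{\mathbf{y}\in Y} d(\mathbf{x},\mathbf{y}) \le \inf_{\mathbf{y}\in Y_i} d(\mathbf{x},\mathbf{y}) \le d_H(X_i,Y_i)\le\delta$.
\item The symmetric argument gives the same bound for every $\mathbf{y}\in Y$.
\end{itemize}
Taking suprema yields $d_H(X,Y)\le\delta$. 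This is a one-line argument from the definition of $d_H$.

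Third, I will apply Theorem~\ref{hausdorff geometric error theorem} to each pair $(\mathbf{l}_i,\mathbf{l}^*_i)$. This gives $d_H(\mathbf{l}_i,\mathbf{l}^*_i)\le \hat{b}_k(\mathbf{l}_i,\mathbf{l}^*_i)\le \delta$. Combining this with the union lemma, with $X_i=\mathbf{l}_i$ and $Y_i=\mathbf{l}^*_i$, proves the claim.

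The main obstacle is the hypotheses needed to invoke the earlier theorem on the pieces. Theorem~\ref{hausdorff geometric error theorem} requires self-intersection-free Bézier curves of the same degree $n$. Each $\mathbf{l}^*_i$ inherits being self-intersection-free from $\mathbf{l}$ and, after reparametrizing its subinterval to $[0,1]$, is a degree-$n$ Bézier curve. I will also need the closest-point parameters to be monotone along $\mathbf{l}$, so that the pieces are well defined and cover $\mathbf{l}$. If monotonicity fails, I would fall back on ordering the split parameters and noting that the argument only uses coverage $\bigcup_i\mathbf{l}^*_i=\mathbf{l}$ together with the pairing. The bound may then become weaker, but it stays valid provided each $\hat{b}_k$ is computed for the pairs actually formed.
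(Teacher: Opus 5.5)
Your proposal is correct and follows the paper's own route. The paper likewise bounds $d_H(\mathcal{A},\mathbf{l})$ by $\max_i d_H(\mathbf{l}_i,\mathbf{l}^*_i)$, using the union inequality from the proof of Lemma~\ref{lemma 2}, and then applies Theorem~\ref{hausdorff geometric error theorem} to each pair. Your derivation of that union bound directly from the definition of $d_H$, with the coverage condition $\bigcup_i \mathbf{l}^*_i=\mathbf{l}$ stated explicitly, is a self-contained version of the same step.
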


\textbf{Validity condition for high-order quadrilateral.} Let $\mathbf{Q}_{ij}$, $0 \leq i,j \leq n$, be the control points of a degree-$n$ B\'ezier quadrilateral $\psi$, and define 0- and 1-vectors as $\Delta_{ij}^{0} = \mathbf{Q}_{(i+1)j} - \mathbf{Q}_{ij}$, $\Delta_{ij}^{1} = \mathbf{Q}_{i(j+1)} - \mathbf{Q}_{ij}$, $0 \leq i,j \leq n-1$, collectively the control vectors of $\psi$. Let $r_k$ the sector spanned by $\{\Delta_{ij}^k \mid 0 \leq i,j \leq n-1\}$, $k = 0,1$, see Fig.~\ref{high_order bezier element valid condition}. By the partition of unity and non-negativity of Bernstein bases, these sectors determine element validity, as formalized in the following theorem.

\begin{figure}
  \centering
\includegraphics[width=.4\textwidth]{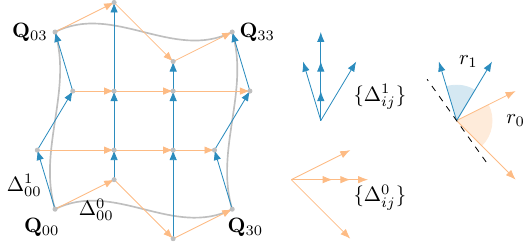}

  \caption{\label{high_order bezier element valid condition}
Illustration of vector sets and their corresponding sectors. The sectors spanned by the orange vectors $\{\Delta_{ij}^{0}|0 \leq i,j \leq n\}$ and the blue vectors $\{\Delta_{ij}^{1}|0 \leq i,j \leq n\}$ are shown as orange and blue convex cones, respectively, on the right.}
\end{figure} 

\begin{theorem}\label{valid high-order quad theorem}
If $r_0 \cap r_1 = \emptyset$ and a line exists such that $r_0$ and $r_1$ lie strictly on the same side (see Fig.~\ref{high_order bezier element valid condition}(right)), then the high-order quadrilateral element is valid, i.e., $\psi$ is injective and its Jacobian satisfies $\det(J) > 0$.
\end{theorem}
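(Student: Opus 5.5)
Our plan is to express the partial derivatives of $\psi$ in Bernstein form and read off their directions from the control vectors. Differentiating the tensor-product representation Eq.~(\ref{bezier representation}) gives
\begin{equation*}
\partial_\xi \psi(\xi,\eta) = n\sum_{i=0}^{n-1}\sum_{j=0}^{n} \Delta^{0}_{ij} B_i^{n-1}(\xi)B_j^{n}(\eta),\qquad
\partial_\eta \psi(\xi,\eta) = n\sum_{i=0}^{n}\sum_{j=0}^{n-1} \Delta^{1}_{ij} B_i^{n}(\xi)B_j^{n-1}(\eta).
\end{equation*}
The summation ranges here include the vectors $\Delta^0_{nj}$-type index $j=n$ and $\Delta^1_{in}$-type index $i=n$. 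We therefore interpret $r_0$ and $r_1$ as the cones spanned by all such difference vectors, as in Fig.~\ref{high_order bezier element valid condition}. By non-negativity and partition of unity of the Bernstein bases, each derivative is a nonnegative combination of its control vectors, not all of whose weights vanish. Hence $\partial_\xi\psi \in r_0$ and $\partial_\eta\psi\in r_1$ for every $(\xi,\eta)\in S$.

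Next we show that both derivatives are nonzero. The hypothesis gives a line through the origin, with normal $\mathbf{m}$, such that $\mathbf{m}\cdot \mathbf{v}>0$ for every nonzero $\mathbf{v}\in r_0\cup r_1$. This forces each cone to be pointed and to lie in an open half-plane. Since the generators are finitely many and all lie strictly on one side, $\mathbf{m}\cdot\partial_\xi\psi>0$. In particular $\partial_\xi\psi\neq 0$, and the same holds for $\partial_\eta\psi$.

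For the Jacobian, the two cones are closed convex sectors inside the same open half-plane and are disjoint apart from the origin. One of them, say $r_0$, therefore lies angularly strictly clockwise of $r_1$; ordering by the angle measured from the boundary line makes this precise. Every $\mathbf{a}\in r_0\setminus\{0\}$ and $\mathbf{b}\in r_1\setminus\{0\}$ then satisfy $0<\angle(\mathbf{a},\mathbf{b})<\pi$ with a consistent orientation. This gives $\det[\mathbf{a},\mathbf{b}]>0$, or uniformly $<0$; after the paper's orientation convention, which we take to be counterclockwise $\Delta^0$ to $\Delta^1$, this is $\det J>0$ everywhere on $S$.

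\textbf{Injectivity is the main obstacle.} Take two distinct points $p,q\in S$ and set $d=q-p=(d_\xi,d_\eta)$. Then
\begin{equation*}
\psi(q)-\psi(p)=\int_0^1 \bigl(d_\xi\,\partial_\xi\psi + d_\eta\,\partial_\eta\psi\bigr)(p+td)\,dt .
\end{equation*}
If $d_\xi,d_\eta\ge 0$, the integrand lies in the cone $r_0+r_1$, which is pointed by the separating line. Pairing with $\mathbf{m}$ gives a strictly positive quantity, so $\psi(q)\neq\psi(p)$.

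If the signs differ, say $d_\xi>0>d_\eta$, the integrand lies in $r_0+(-r_1)$. Because $r_0$ and $r_1$ are angularly separated inside a common half-plane, there is a second line separating $r_0$ from $r_1$. Its normal $\mathbf{m}'$ satisfies $\mathbf{m}'\cdot r_0>0>\mathbf{m}'\cdot r_1$, so $r_0$ and $-r_1$ lie strictly on the same side of it. Pairing with $\mathbf{m}'$ again gives a nonzero result.

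The delicate point is existence of this second separating direction. It comes from disjointness plus pointedness of the closed cones, via the finite-dimensional separating hyperplane theorem applied to $r_0\setminus\{0\}$ and $r_1\setminus\{0\}$ normalized to the unit circle. We also need strictness when one of $d_\xi,d_\eta$ vanishes; this is handled by the same pairings.
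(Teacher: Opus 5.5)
Your argument for $\det J$ matches the paper's: the Bernstein-form derivatives are nonnegative combinations of the control vectors, so $\partial_\xi\psi\in r_0$ and $\partial_\eta\psi\in r_1$, and angular separation inside a common open half-plane forces a constant nonzero sign.

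The real difference is injectivity. The paper's proof ends with ``linearly independent and form a right-handed system, which implies $J>0$''. It never proves the injectivity that the statement claims, and $\det J>0$ on $S$ by itself gives only local injectivity. Your segment argument fills this gap correctly. Since $S$ is convex, $\psi(q)-\psi(p)=\int_0^1 D\psi(p+td)\,d\,dt$. For each fixed direction $d$ you find a single functional that is strictly positive on $D\psi(x)d$ for every $x\in S$: $\mathbf m$ when $d_\xi d_\eta\ge 0$, and the separator $\mathbf m'$ when $d_\xi d_\eta<0$.

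You are also more explicit than the paper about two things it uses silently. First, the derivative formulas involve $\Delta^0_{in}$ and $\Delta^1_{nj}$, which the paper's definition with $0\le i,j\le n-1$ omits. Second, the hypothesis is symmetric in $r_0,r_1$, so the sign of $\det J$ is fixed only by an orientation convention; the paper's ``right-handed'' is exactly that unstated convention.

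Two small repairs are needed.
\begin{itemize}
\item Obtain $\mathbf m'$ from the angular ordering you already established, by taking a direction strictly between the two sectors. Alternatively, use the pointedness of the closed cone $r_0-r_1$, which follows from $r_0\cap r_1=\{0\}$ and the pointedness of each cone. Strictly separating the normalized arcs (or their convex hulls) only gives an affine separating line, not necessarily one through the origin.
\item State the strict-side hypothesis for each control vector individually, not merely for nonzero vectors of the cones. A zero $\Delta^0_{00}$ would give $\partial_\xi\psi(0,0)=0$ and hence $\det J=0$ at that corner. You do use the hypothesis in this form when you say all generators lie strictly on one side, but your phrasing ``every nonzero $\mathbf v$'' is weaker.
\end{itemize}
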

Theorem~\ref{valid high-order quad theorem} provides a sufficient condition for generating valid high-order quadrilateral elements, which will be used to guide the curve refinement process for producing high-quality meshes in the next section.

\section{Method}
\subsection{Algorithm overview}

Given $m$ regions $\Omega_1, \dots, \Omega_m$ enclosed by degree-$n$ piecewise Lagrange polynomial curves, denoted collectively as $\mathcal{C}_{\text{ori}}$  (Fig.~\ref{input and output}(a)), any two regions satisfy  
\begin{equation}  
\text{Area}(\Omega_i \cap \Omega_j) = 0, 
\end{equation}  
implying they may intersect only at points, along curves (interface), or remain disjoint. Without loss of generality, we assume throughout that the piecewise Lagrange curve enclosing each region $\Omega_i$ is non-self-intersecting; otherwise, it is split at self-intersection points into self-intersection free curves. Our algorithm takes as input the regions $\Omega_1, \dots, \Omega_m$, a target element size $l_t$, and a geometric error tolerance $\epsilon$, and outputs a high-quality degree-$n$ quadrilateral mesh $\mathcal{M}$ that preserves geometric fidelity, ensures smooth transitions, and maintains maximum element sizes close to $l_t$. Note that enforcing zero geometric deviation of the input curves would require excessive refinement, leading to overly fine meshes. To address this, we allow a geometric error bounded by $\epsilon$ between the mesh and the boundary, thereby balancing accuracy and mesh density. When two regions share an interface (i.e., their intersection is a curve), the mesh must preserve conformity: elements on both sides of the interface must align without hanging nodes (Fig.~\ref{input and output}(b)). 

\begin{figure}
  \centering
\includegraphics[width=.353\textwidth]{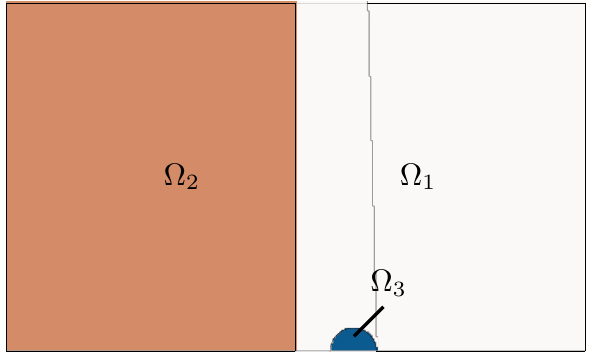}
\includegraphics[width=.353\textwidth]{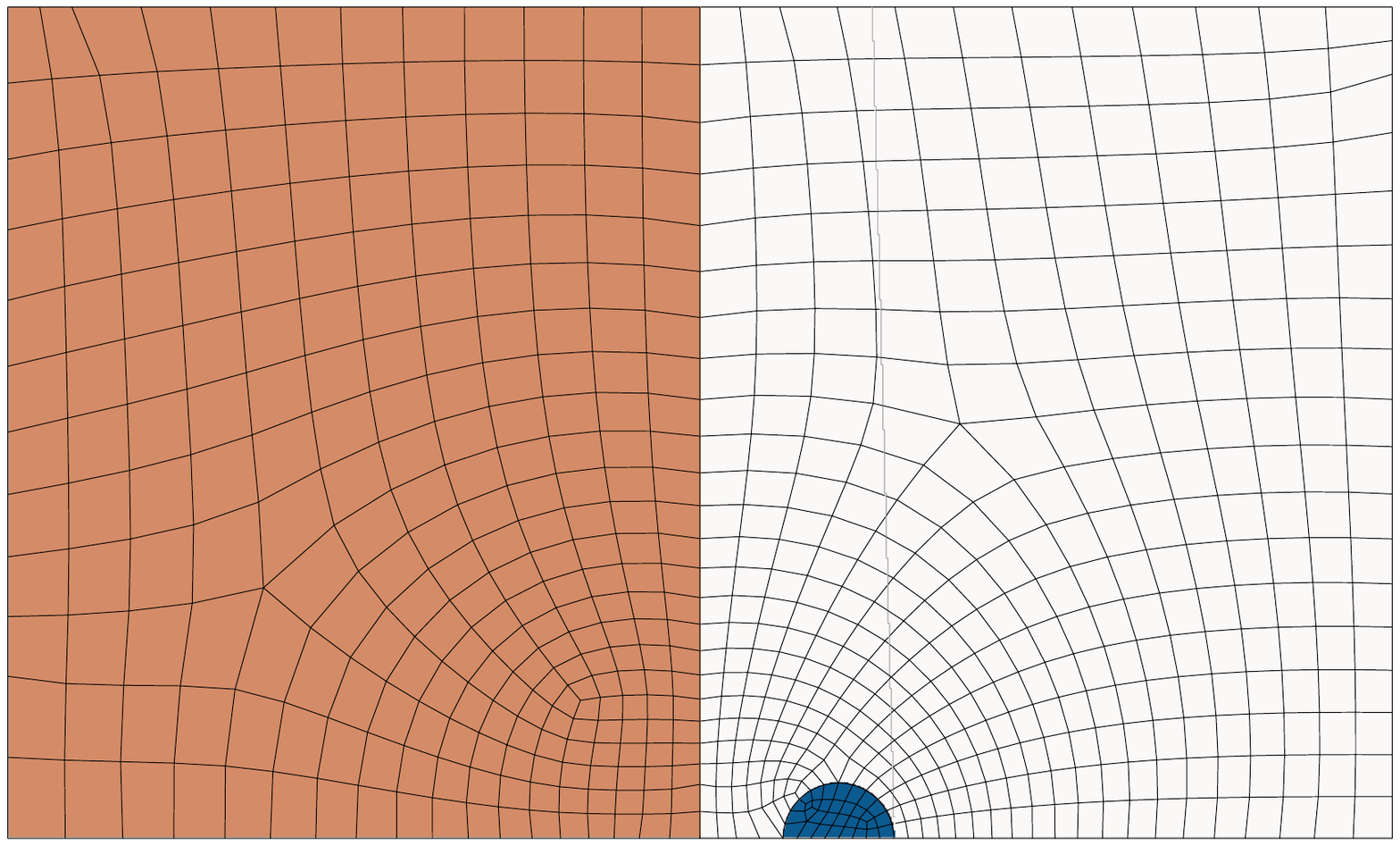}
     \parbox[t]{.35\textwidth}{\centering
            (a) 
           }
  \parbox[t]{.35\textwidth}{\centering
          (b)
           }
           
  \caption{\label{input and output}
           Illustration of input and output. (a) An input region showing interfaces between subregions, and (b) output mesh with consistent interfaces.}
\end{figure}

{\color{black} Conventional ``generate-then-optimize'' strategies heavily relies on mesh initialization, often failing to produce high-quality high-order meshes when the initial mesh mismatches the target geometry \cite{dobrev2019target}, as illustrated in Fig.~\ref{remeshing_using_our}. Furthermore, this approach suffers from a high and fixed DOF during optimization, hindering efficient generation of high-quality meshes. To tackle these challenges, we propose a curve reconstruction-based method for generating high-order quadrilateral meshes. To avoid excessive meshing, we allow the reconstructed boundary to approximate the input curves within a prescribed geometric tolerance, achieving reasonable mesh density (Fig.~\ref{para epsilon}) while ensuring well-distributed segmentation and smooth transitions for high mesh quality. Our method reduces mesh optimization to a geometric error-bounded curve reconstruction problem, lowering complexity and enabling more efficient and effective generation of high-quality high-order meshes.} 

As illustrated in Fig.~\ref{pipeline}, {\color{black} our method first performs optimization-driven, geometric error-bounded reconstruction of closed high-order Lagrangian curves.} High-quality straight-sided triangular meshes are then generated based on the \textcolor{black}{linear segments connecting the approximated curve endpoints}. A modified blossom-quad algorithm merges these triangles into a linear quadrilateral mesh that preserves interfaces and avoids degenerate elements. Finally, the linear mesh is elevated to high-order, with boundary and interface elements deformed via a mean value coordinates strategy to produce the final high-quality high-order quadrilateral mesh. In the rest of this section, we describe in detail the three main components of our algorithm: geometric error-bounded curve reconstruction, linear quadrilateral meshing, and high-order quadrilateral meshing. 

\begin{figure}
  \centering
    \includegraphics[width=.23\textwidth]{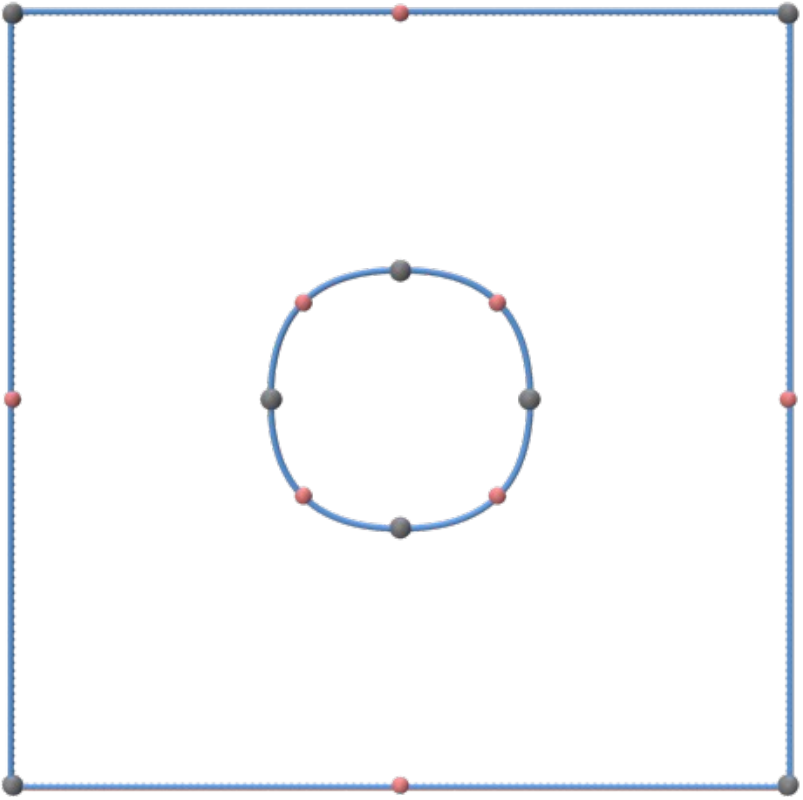}
  \includegraphics[width=.23\textwidth]{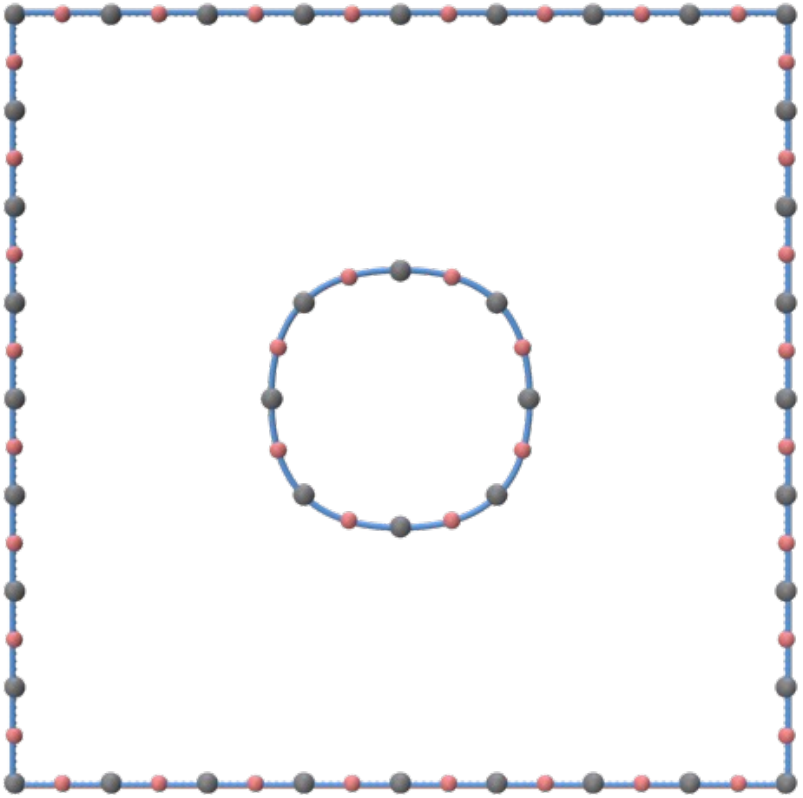}
  \includegraphics[width=.23\textwidth]{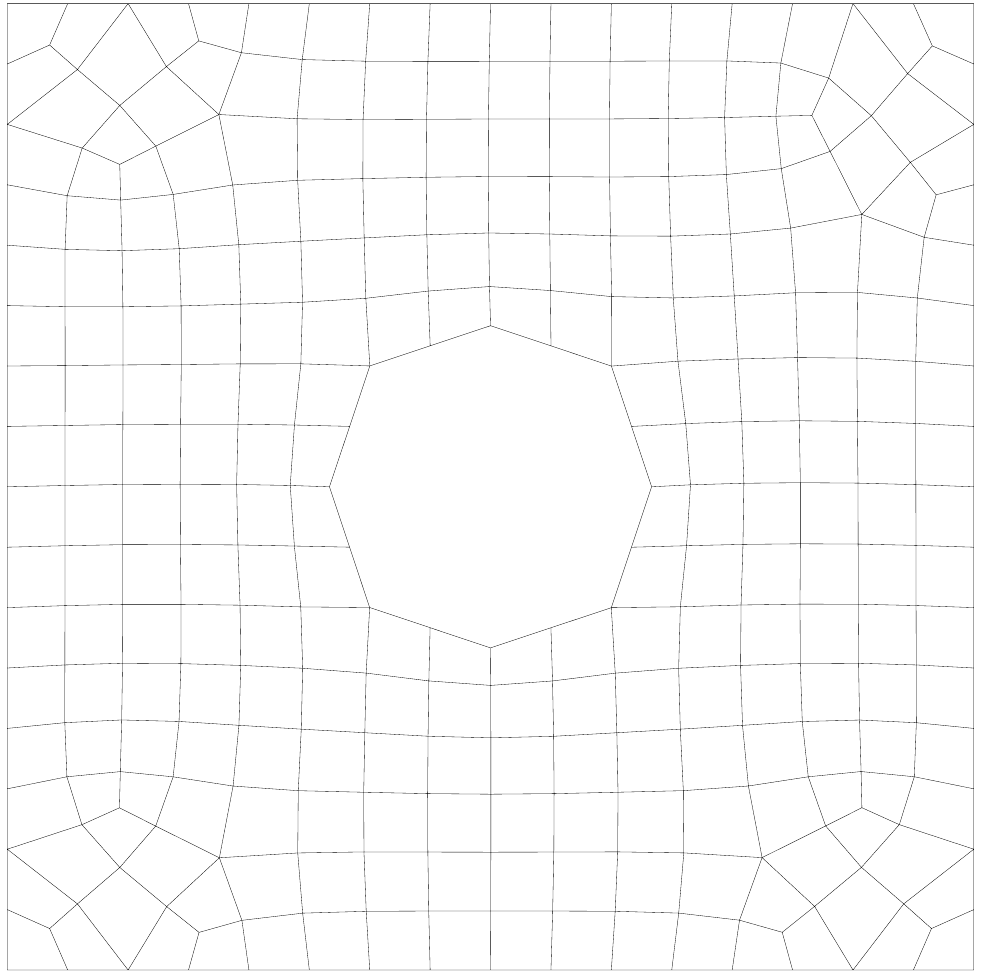}
    \includegraphics[width=.23\textwidth]{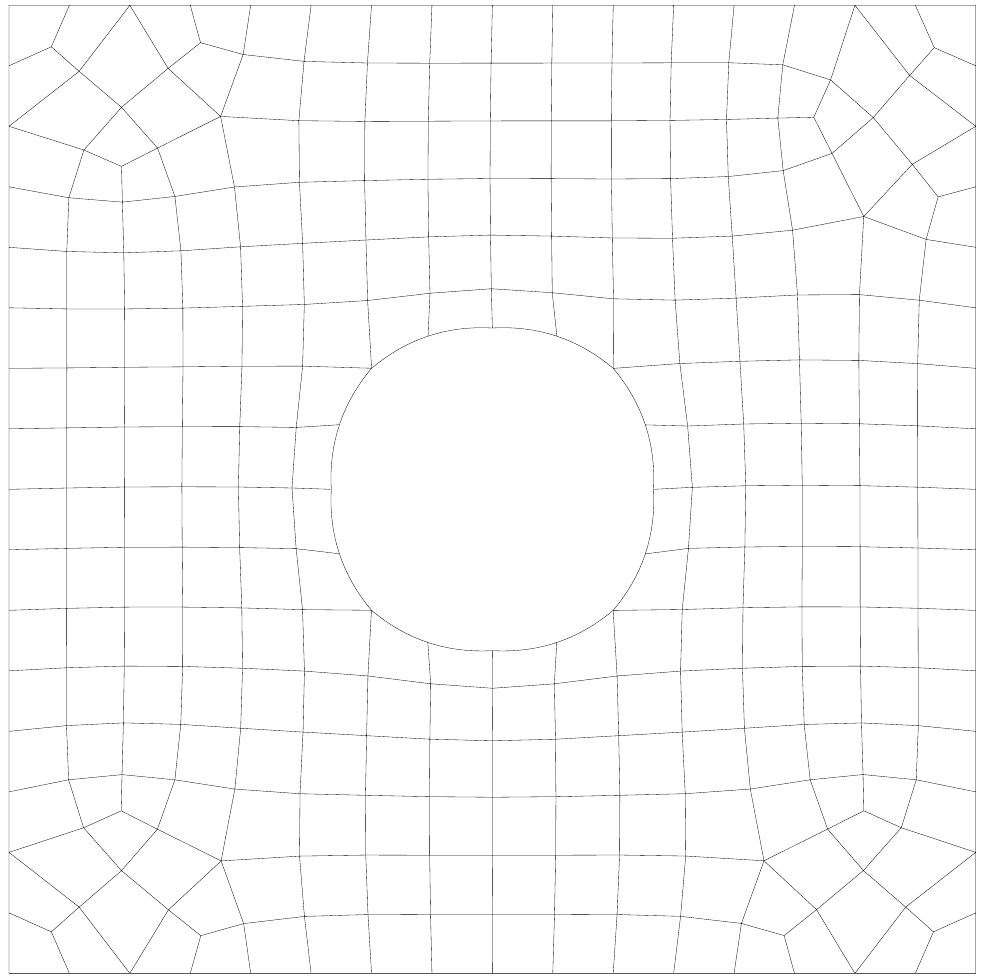}
   \parbox[t]{.23\textwidth}{\centering
            (a)
           }
  \parbox[t]{.23\textwidth}{\centering
           (b)
           }
  \parbox[t]{.23\textwidth}{\centering
           (c)
           }
  \parbox[t]{.23\textwidth}{\centering
           (d)
           }
  \caption{\label{pipeline}
           Algorithm pipeline. (a) Input curve; (b) curve reconstruction; (c) linear quadrilateral mesh; (d) high-order quadrilateral mesh. The gray points in (a) and (b) are the endpoints of the input curve, and the red points are the high-order control points.}
\end{figure}

\subsection{Geometric error-bounded curve reconstruction}\label{section_resampling}
In this section, we develop a geometric error-bounded curve reconstruction method for high-order mesh generation. It comprises three stages: (1) error-bounded curve approximation to keep the geometric deviation between the final curved mesh and the input boundaries and interfaces within a threshold $\epsilon$, (2) adaptive refinement to satisfy segment size, angular, and feature constraints, and (3) optimizing the distribution of endpoints of reconstructed curve segments introduced during adaptive refinement to enhance uniformity and smoothness of the resulting curved meshes.

\textbf{Error-bounded geometric approximation.}
First, we split $\mathcal{C}_{\text{ori}}$ at corner points with discontinuous tangents to obtain a refined segmentation $\mathcal{C}_{\text{cor}} = \{\mathbf{c}_i\}_{i=0}^j$, where each $\mathbf{c}_i$ is a degree-$n$ Lagrange curve satisfying $\mathbf{c}_i(1) = \mathbf{c}_{i+1}(0)$. $\mathcal{C}_{\text{cor}}$ and $\mathcal{C}_{\text{ori}}$ represent the same geometry, differing only in segmentation. For two points $\mathbf{P}_0,\mathbf{P}_1$ on $\mathcal{C}_{\text{ori}}$, $\mathbf{P}_0\mathbf{P}_1|_{\mathcal{C}_{\text{ori}}}$ denotes the arc of $\mathcal{C}_{\text{ori}}$ connecting them. 
If $\mathbf{c}_i(1)$ is not a corner, we construct a curve $\mathbb{H}(\mathbf{c}_i(0), \mathbf{c}_{i+1}(1))$ that approximates the arc $\mathbf{c}_i(0)\mathbf{c}_{i+1}(1)|_{\mathcal{C}_{\text{ori}}}$ by interpolating $\mathbf{c}_i(0)$ and $\mathbf{c}_{i+1}(1)$ as endpoints and uniformly redistributing the high-order interpolation points along this arc.
For each consecutive pair $\mathbf{c}_i, \mathbf{c}_{i+1} \in \mathcal{C}_{\text{cor}}$, the Hausdorff distance upper bound $\hat{b}_k(\mathbb{H}(\mathbf{c}_i(0),\mathbf{c}_{i+1}(1)), \mathbf{c}_i(0)\mathbf{c}_{i+1}(1)|_{\mathcal{C}_{\text{ori}}})$ is computed, with a maximum refinement iteration $k_0 = 20$ in Algorithm~\ref{alg:Framwork}. If $\hat{b}_k(\mathbb{H}(\mathbf{c}_i(0),\mathbf{c}_{i+1}(1)), \mathbf{c}_i(0)\mathbf{c}_{i+1}(1)|_{\mathcal{C}_{\text{ori}}}) < \epsilon$, $\mathbf{c}_i$ and $\mathbf{c}_{i+1}$ are replaced in $\mathcal{C}_{\text{cor}}$ by $\mathbb{H}(\mathbf{c}_i(0),\mathbf{c}_{i+1}(1))$. This process repeats until no further replacements occur.
 % This operation is applied throughout the subsequent approximation, refinement, and optimization steps to ensure high-quality meshes.
\begin{figure}
  \centering
\includegraphics[width=.2\textwidth]{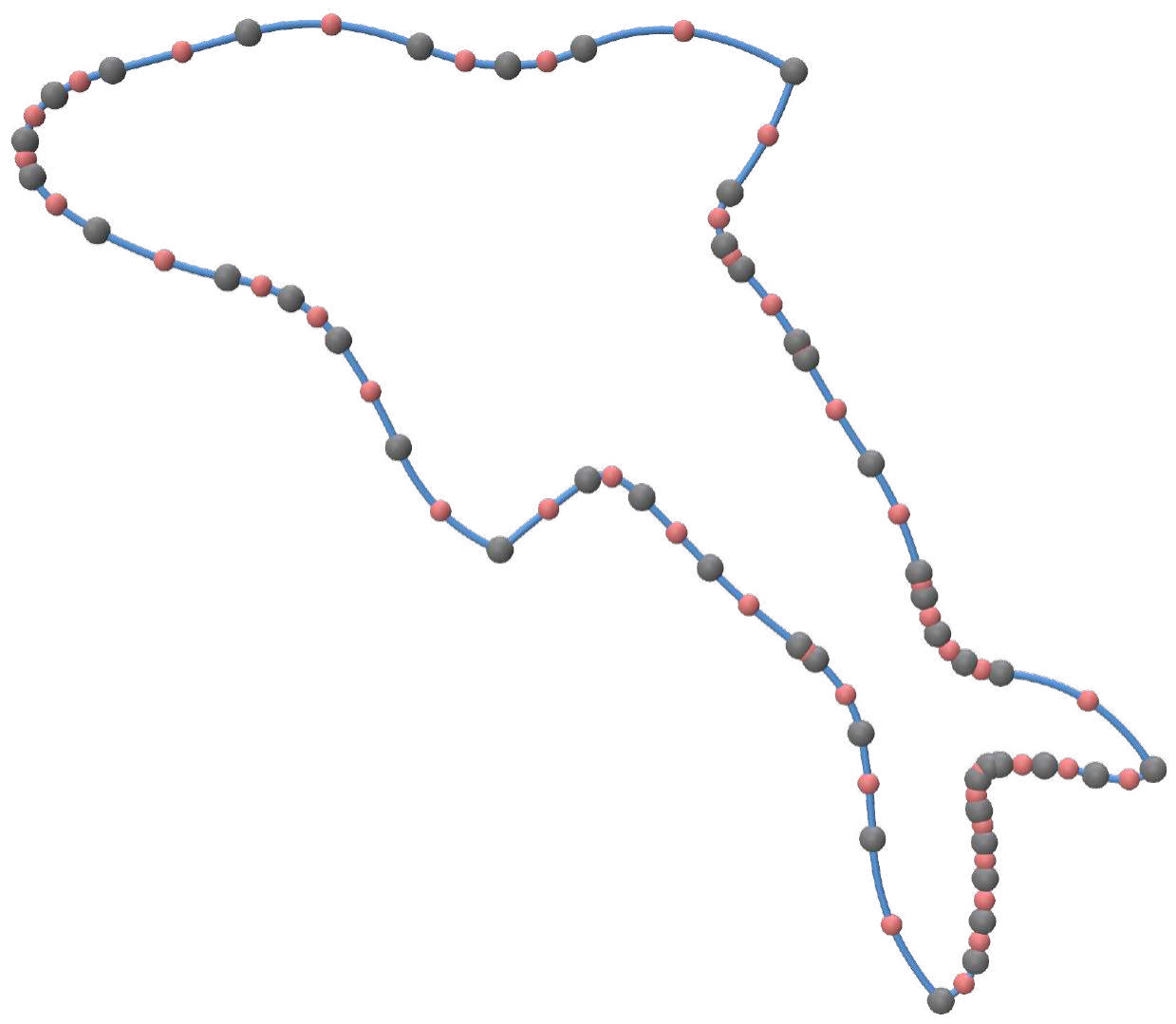}
\includegraphics[width=.2\textwidth]{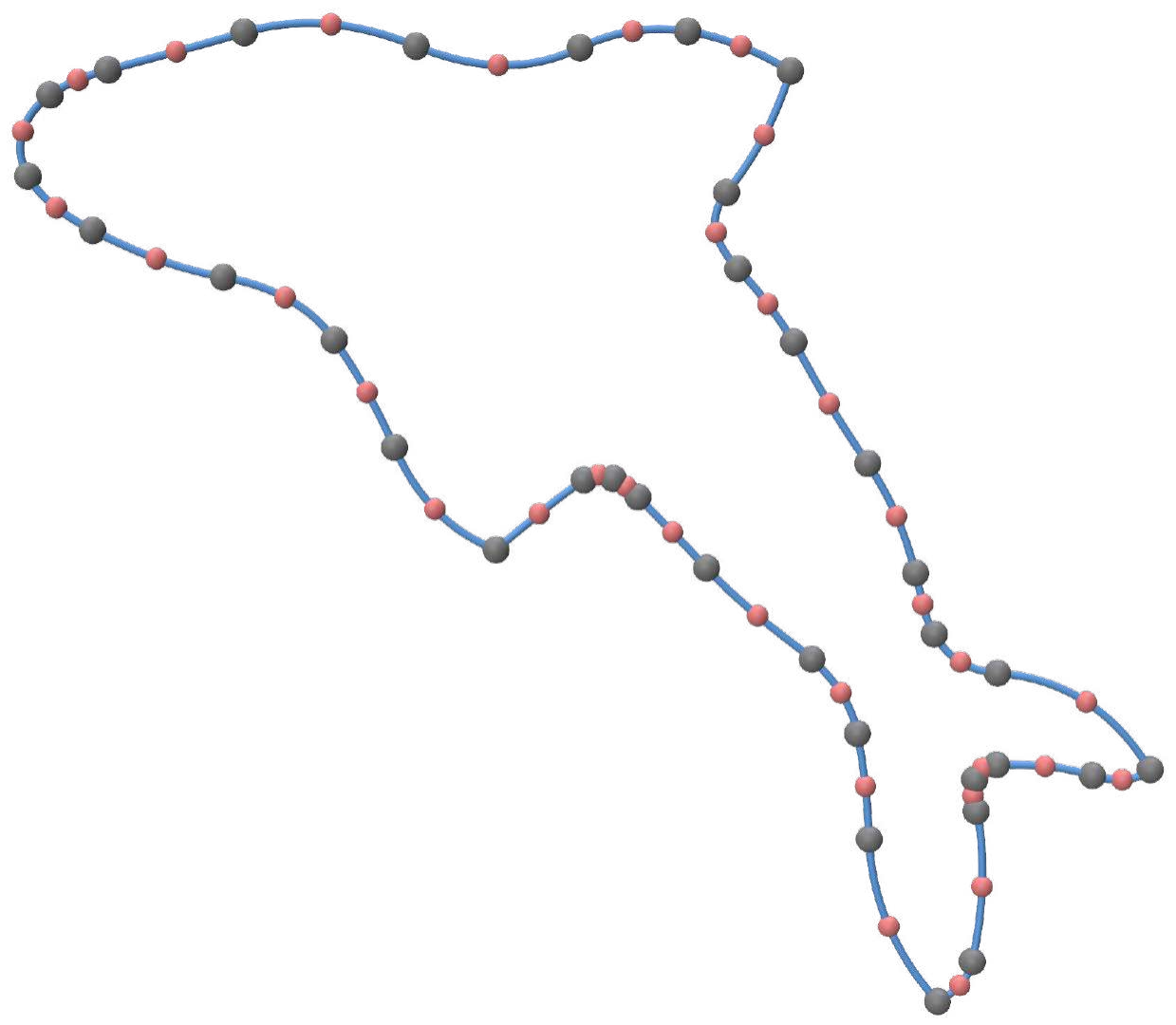} 
\includegraphics[width=.2\textwidth]{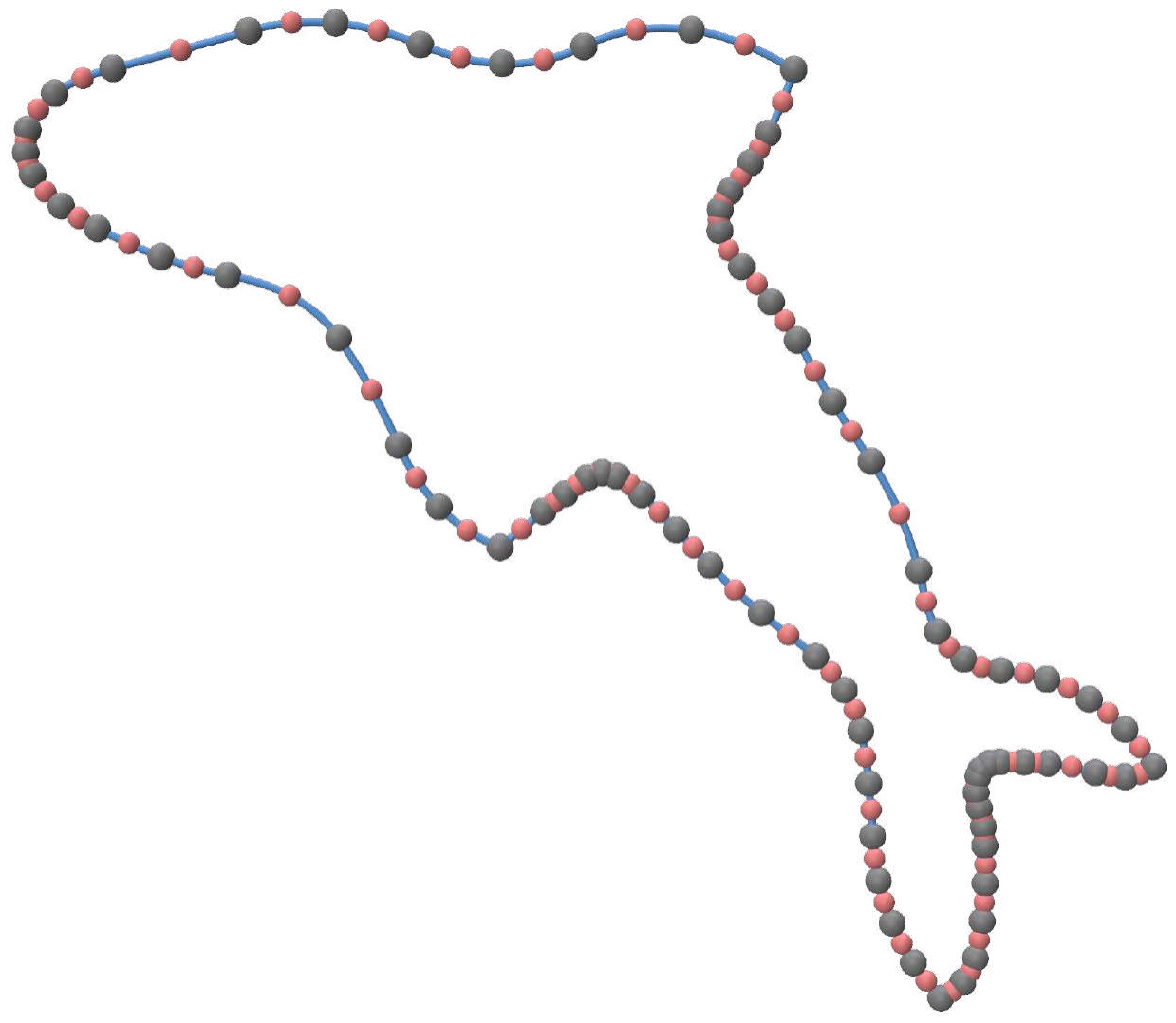}  
\includegraphics[width=.3\textwidth]{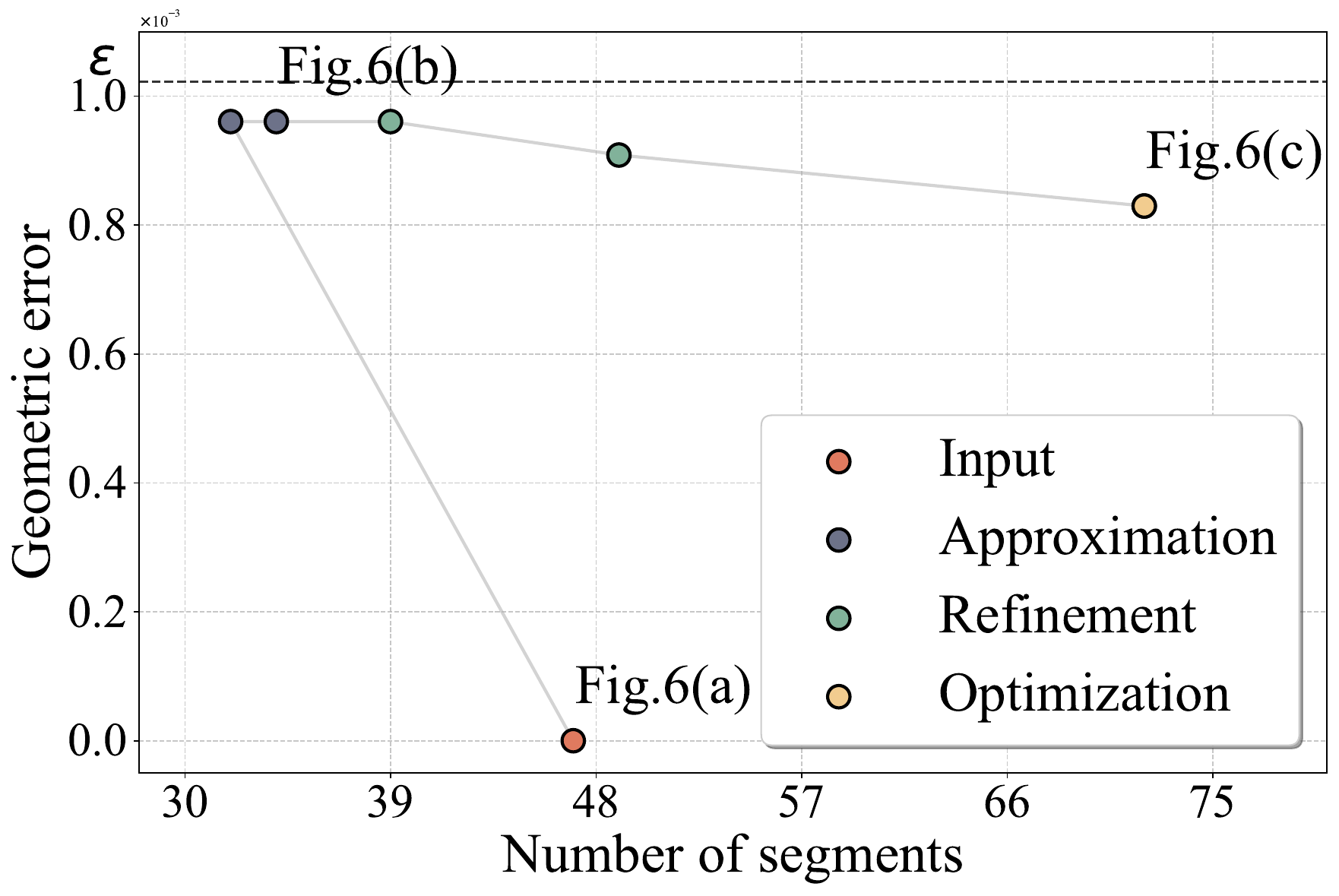}

  \parbox[t]{.2\textwidth}{\centering
            (a) 
           }
  \parbox[t]{.2\textwidth}{\centering
           (b) 
           }
    \parbox[t]{.2\textwidth}{\centering
       (c)            
       }
         \parbox[t]{.3\textwidth}{\centering
       (d)            
       }
           
  \caption{\label{resampling based geometric error}
  \textcolor{black}{Curve approximation with geometric error control and adaptive refinement. (a) Input curves (47 segments), (b) bounded-error approximation result (34 segments), (c) curve reconstruction after adaptive refinement (72 segments), and (d) approximation error versus segment number across three stages for the input curve in (a), showing bounded error, where the geometric tolerance is set to $0.001$ times the bounding-box diagonal.}
}
\end{figure}

For each $\mathbf{c} \in \mathcal{C}_{\text{cor}}$, if the geometric error $$\hat{b}_{k}(\mathbf{c}(0)\mathbf{c}(1)|_{\mathcal{C}_{\text{ori}}}, \mathbb{H}(\mathbf{c}(0),\mathbf{c}(1))) < \epsilon,$$ replace $\mathbf{c}$ with $\mathbb{H}(\mathbf{c}(0),\mathbf{c}(1))$; Otherwise, $\mathbf{c}$ is bisected into two segments, $\mathbb{H}(\mathbf{c}(0), \mathbf{P})$ and $\mathbb{H}(\mathbf{P}, \mathbf{c}(1))$, where $\mathbf{P}$ is the arc-length midpoint of the original arc $\mathbf{c}(0)\mathbf{c}(1)|_{\mathcal{C}_{\text{ori}}}$. This subdivision is applied recursively: we continue bisecting the segment $\mathbb{H}(\mathbf{c}(0),\mathbf{c}(1))$ until the approximation error satisfies the threshold, yielding a segment set $\mathcal{A}$ that replaces $\mathbf{c}$ in $\mathcal{C}_{\text{cor}}$. Successive bisection provides an asymptotically convergent approximation of $\mathcal{C}_{\text{ori}}$, ensuring the error eventually falls below $\epsilon$ and guaranteeing termination. As shown in Fig.~\ref{resampling based geometric error}(b), this bounded-error approximation reduces the number of short edges and irregular segments while respecting the prescribed geometric error tolerance.
%If the aforementioned operation is applied to the interface curve, the result from its first traversal is taken.

\textbf{Adaptive refinement.} 
 Let $\theta(\mathbf{c})$ be the angle of the sector spanned by the control vectors of $\mathbf{c} \in \mathcal{C}_{\text{cor}}$. By Theorem~\ref{valid high-order quad theorem}, larger $\theta(\mathbf{c})$ impedes valid high-order element construction; hence, we enforce an angular threshold $\tau = \frac{\pi}{4}$ to drive adaptive refinement and ensure inversion-free, high-quality meshes. For each $\mathbf{c} \in \mathcal{C}_{\text{cor}}$, we recursively bisect the curve until $\theta(\mathbf{c}) \leq \tau$ and its arc length is less than $2l_t$. This process is guaranteed to converge, as repeated refinement reduces the control vector angles asymptotically to zero. 
 % For high-order quadrilateral elements with boundary (interface) edges, we define the control vectors of boundary (interface) belong to the element’s 0-vector, i.e., $r \subseteq r_0$. By Theorem \ref{valid high-order quad theorem}, larger angular spans of $r$ hinder the construction of valid high-order elements.
 
% \begin{figure}[htb]
%   \centering
% \includegraphics[width=.45\textwidth]{figures/Figure6_merge.pdf}
% \includegraphics[width=.45\textwidth]{figures/figure7_resampling.pdf}     

%   \parbox[t]{.45\textwidth}{\centering
%             (a) 
%            }
%   \parbox[t]{.45\textwidth}{\centering
%            (b) 
%            }
           
%   \caption{\label{adaptive resampling}
%            Illustration of adaptive refinement. (a) Result of bounded-error curve approximation, and (b) curve reconstruction after adaptive refinement.}
% \end{figure}

\begin{figure}
  \centering
\includegraphics[width=.24\textwidth]{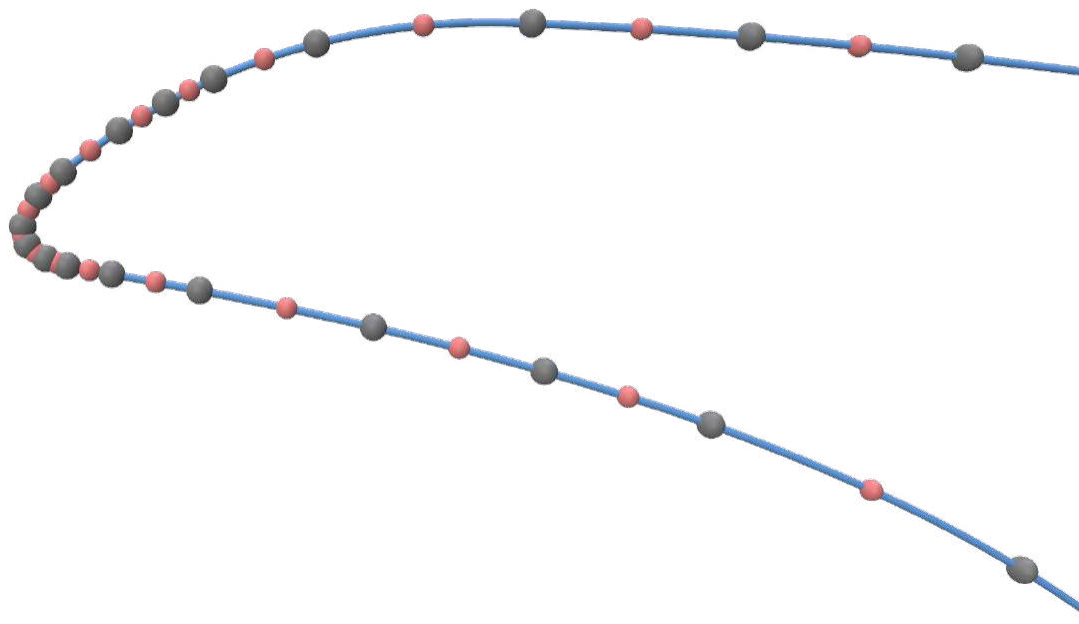}
\includegraphics[width=.24\textwidth]{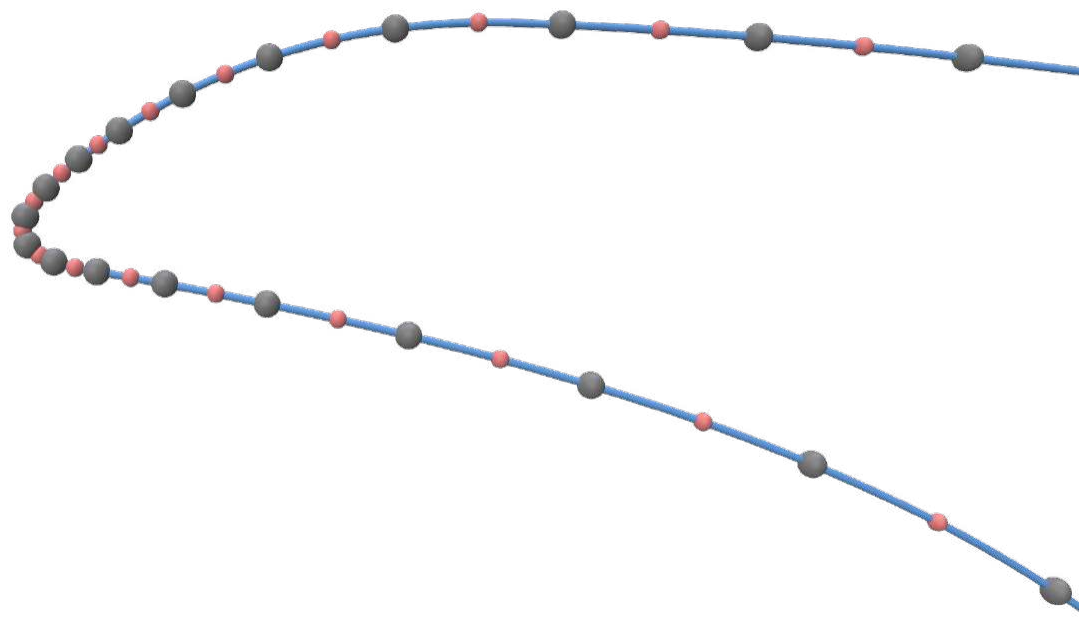}
\includegraphics[width=.22\textwidth]{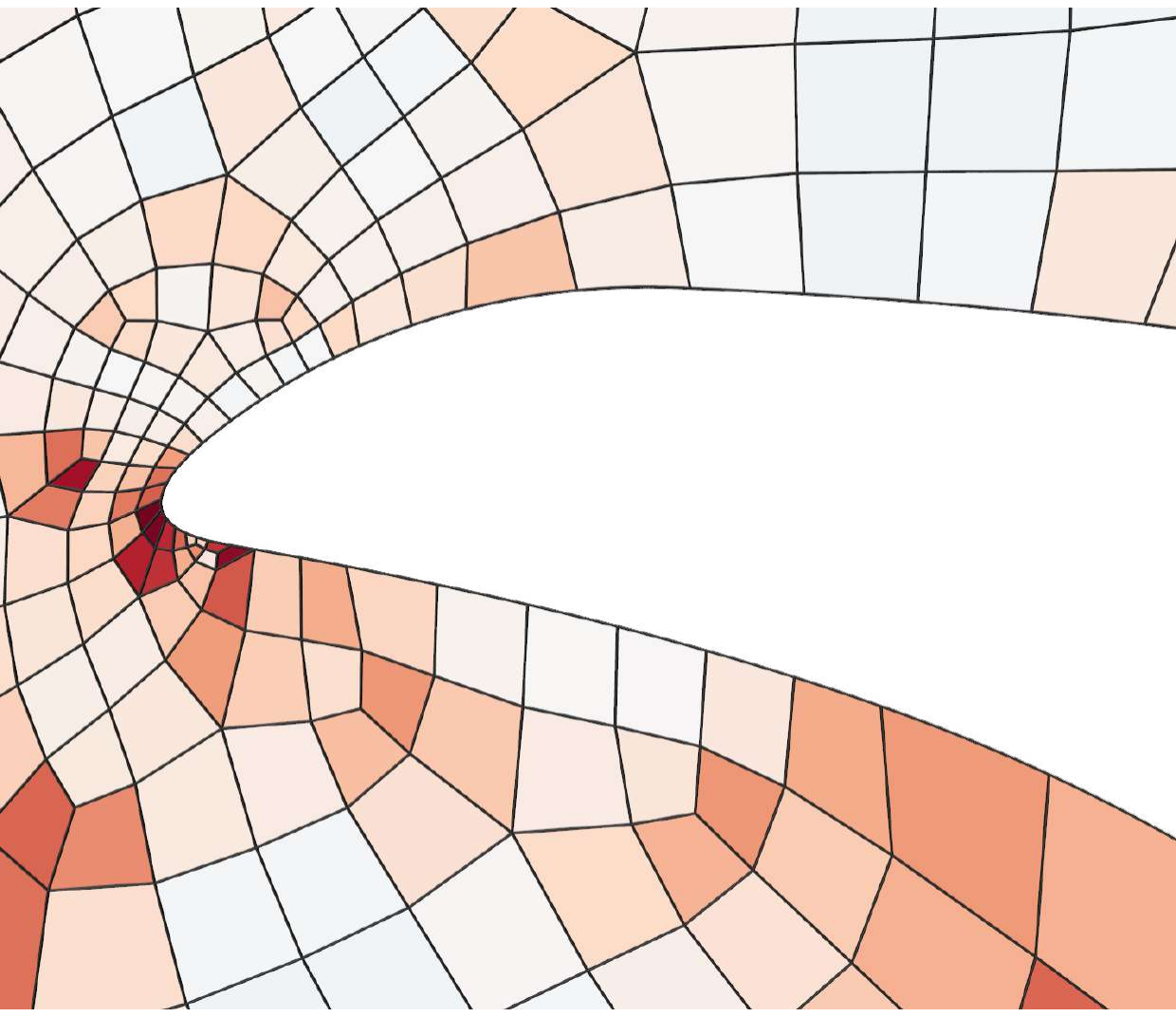}
\includegraphics[width=.258\textwidth]{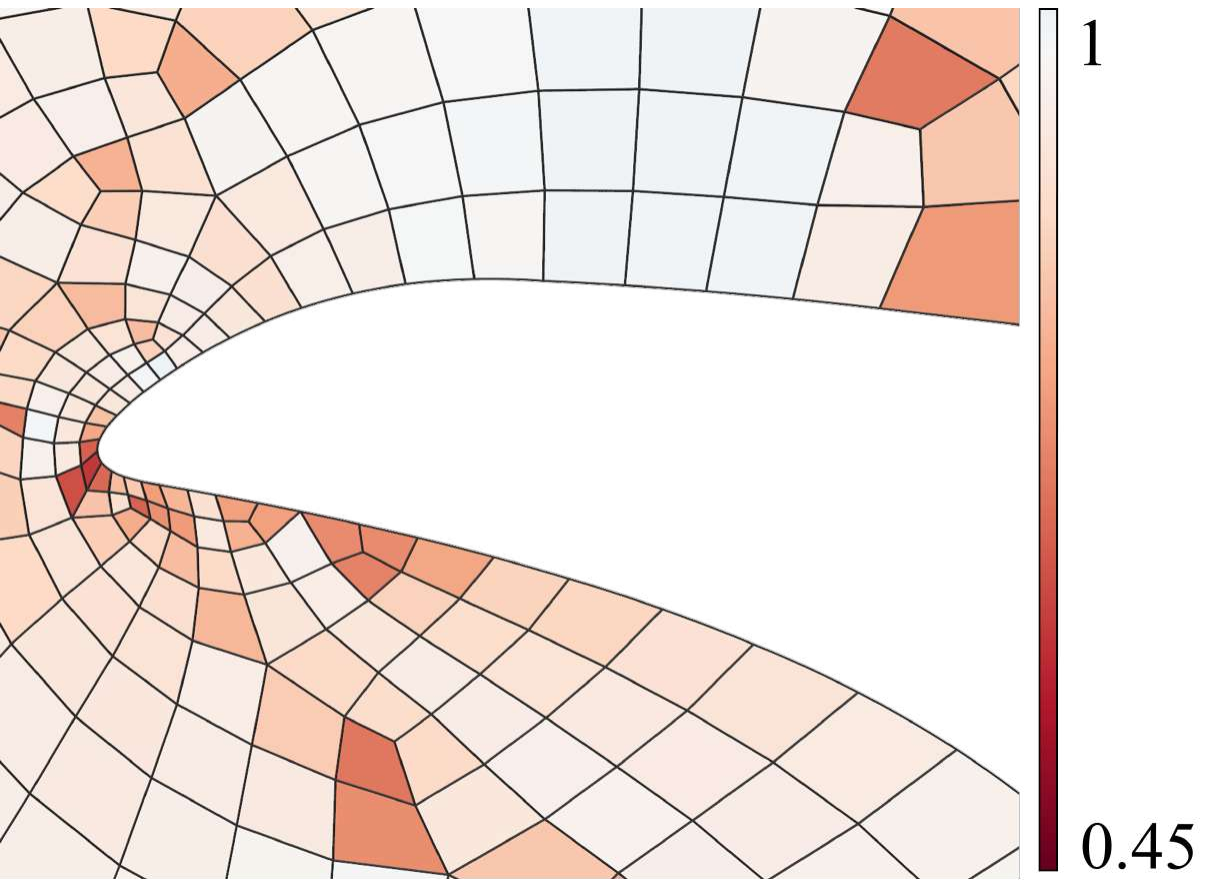}

     \parbox[t]{.24\textwidth}{\centering
            (a) 
           }
  \parbox[t]{.24\textwidth}{\centering
           (b) 
           }
     \parbox[t]{.24\textwidth}{\centering
            (c) 
           }
  \parbox[t]{.24\textwidth}{\centering
           (d) 
           }
           
  \caption{\label{opt resampling pipeline}
           Effect of optimization on curve reconstruction and mesh quality. (a,b) Reconstructed curves without and with optimization, respectively. (c,d) Corresponding meshes, where the color of each element represents its quality, indicated by the minimum shape measure $J_{m}$ (see Section~\ref{quality metrics}).}
\end{figure}

Refinement based solely on the previous criteria may produce large length disparities between adjacent curve segments and fails to adapt to geometric features such as narrow regions, often leading to poor-quality elements after high-order lifting. To address this limitation, we perform an additional adaptive refinement on each curve $\mathbf{c} \in \mathcal{C}_{\text{cor}}$. 
Let $\mathbf{c}_{k}$ be the nearest curve segment to $\mathbf{c}$ among $\mathcal{C}_{\text{cor}} \setminus \{ {\mathbf{c} \cup \mathbf{c}_{e}}\}$, where $\mathbf{c}_{e}$ denotes the set of segments adjacent to $\mathbf{c}$. Denote the arc length of $\mathbf{c}$ by $\|\mathbf{c}\|$. The parameter $\alpha$ controls segmentation density and mesh resolution, and is empirically set to $2.3$. Without loss of generality, assume $\|\mathbf{c}\| < \|\mathbf{c}_{k}\|$. Then $\mathbf{c}$ is bisected whenever $\frac{\|\mathbf{c}\|}{z(\mathbf{c})} > \alpha$, where $z(\mathbf{c})$ is defined as:
\begin{equation}
z(\mathbf{c}) =
\begin{cases} 
\alpha^{\gamma}\|\mathbf{c}_{k}\|, & d_{H}(\mathbf{c},\mathbf{c}_{k}) \geq \|\mathbf{c}_{k}\| \\
(1-\omega)\frac{\|\mathbf{c}\| + d_{H}(\mathbf{c},\mathbf{c}_{k})}{2} + \omega\frac{\|\mathbf{c}_{k}\| + \|\mathbf{c}\|}{2}, & \|\mathbf{c}\| < d_{H}(\mathbf{c},\mathbf{c}_{k}) < \|\mathbf{c}_{k}\| \\
d_{H}(\mathbf{c},\mathbf{c}_{k}), & d_{H}(\mathbf{c},\mathbf{c}_{k}) \leq \|\mathbf{c}\|,
\end{cases}
\end{equation}
with $\gamma$ being the largest integer satisfying $\dfrac{d_{H}(\mathbf{c},\mathbf{c}_{k})}{\|\mathbf{c}_{k}\|} > \sum_{k=0}^{\gamma} \alpha^{k}$, and weight $\omega = \frac{d_{H}(\mathbf{c},\mathbf{c}_{k}) - \|\mathbf{c}\|}{\|\mathbf{c}_{k}\| - \|\mathbf{c}\|}$. 

This adaptive refinement produces reconstructed curves whose segment lengths conform to geometric features while maintaining limited variation between neighboring segments, as illustrated in Fig.~\ref{resampling based geometric error}(c), thereby establishing a foundation for high-quality high-order quadrilateral mesh generation. Nevertheless, the refined curves may still exhibit non-uniform point transitions, as shown in Fig.~\ref{opt resampling pipeline}(a). 

\textbf{Endpoints distribution optimization.} To improve the uniformity of point distribution along the reconstructed curves, we employ a curve optimization step based on Centroidal Voronoi Tessellation (CVT) energy \cite{liu2009centroidal}, a standard tool in mesh generation for promoting uniform point placement. In our setting, Voronoi cells are defined along the curve, seed points are placed at the endpoints of the curve-segment set $\mathcal{C}_{\text{cor}}$, and arc length is used in place of Euclidean distance. To obtain a smoother transition in point distribution, we apply Lloyd's algorithm \cite{lloyd1982least}. At each iteration, the generator $\mathbf{P}_{i}$ (except those fixed as endpoints introduced during bounded-error approximation) is updated to the curvature-weighted centroid of its Voronoi cell, after which the optimized curve is reconstructed as $\mathbb{H}(\mathbf{P}_{i},\mathbf{P}_{i+1})$. The arc-length coordinate $s_{\rho}$ of the curvature-weighted centroid of a Voronoi cell $V$ is defined as:
\begin{equation}
    s_{\rho}=\frac{\int_V s\rho(s)ds}{\int_V \rho(s)ds},
\end{equation}
 where $s$ is the arc length, $\kappa(s)$ is the curvature of the curve segment at $s$, and $\rho(s)=1+\kappa(s)$. We perform a fixed number of $5$ Lloyd iterations to update the reconstructed curve set $\mathcal{C}_{\text{cor}}$. As illustrated in Fig.~\ref{opt resampling pipeline}, this procedure yields increasingly uniform point distributions and smoother transitions, making it well suited for high-quality high-order quadrilateral mesh generation.

Fig.~\ref{resampling based geometric error}(d) plots the approximation error of the reconstructed curves for the input curve in Fig.~\ref{resampling based geometric error}(a) across three stages. After stage~(1), the error is bounded by the reference tolerance and is further reduced through subsequent refinement and final optimization.

%  \begin{figure}[htb]
%   \centering
% \includegraphics[width=0.8\textwidth]{figures/error_vs_segments.pdf}    
%   \caption{\label{error vs segments}Approximation error versus segment number across three stages for the input curve in Fig.~\ref{resampling based geometric error}(a), showing bounded error, where the geometric tolerance is set to $0.001$ times the bounding-box diagonal.}
% \end{figure}

\subsection{Linear quadrilateral meshing}\label{section_modified_Blossom-quad}
The discrete segments formed by connecting the endpoints of each reconstructed curve segment in Section~\ref{section_resampling} are used as linear constraints for generating high-quality triangular meshes that preserve these boundaries~\cite{remacle2013frontal}. We then introduce a customized blossom-quad (BQ) algorithm to merge the triangles into quadrilaterals. The algorithm constructs the dual graph of the triangular mesh and applies minimum-weight perfect matching using the following cost function:
\begin{equation}
cost=\sum_{e\in E'} (1-\beta(q_{ij})) + \zeta(q_{ij}),
\end{equation}
where
\begin{equation}
\zeta(q_{ij}) =
\begin{cases}
10000, & \text{if $q_{ij}$ has two boundary/interface edges}, \\
0, & \text{otherwise}.
\end{cases}
\end{equation}
Here, $E'$ is the set of edges to be merged, $q_{ij}$ is the quadrilateral formed by triangles $t_i$ and $t_j$, and $\beta(q)$ is a linear quadrilateral quality metric.

The original BQ algorithm \cite{remacle2012blossom} uses only the quality term $1-\beta(q_{ij})$ and aggressively favors well-shaped quadrilaterals. However, on meshes containing $C^{1}$ boundary or interface chains, it often merges triangles into quads with two boundary/interface edges, which invariably degenerate after high-order boundary correction and produce elements with near-zero Jacobians (Fig.~\ref{bad element}(a)). By introducing a soft penalty $\zeta(q_{ij})$, our modified BQ algorithm avoids such unsafe mergers, effectively preventing degenerate elements during high-order deformation. Because the added constraint produce a small number of leftover triangles, we perform midpoint subdivision to obtain a pure quadrilateral mesh. During curve reconstruction, the target length is set to $2l_t$, and after subdivision the target becomes $l_t$, with all new midpoints projected onto the reconstructed curves. As shown in Fig.~\ref{bad element}(b), the modified BQ algorithm eliminates the degenerate elements produced by the original method and yields a topologically robust quadrilateral layout suitable for subsequent high-order meshing.

% {\color{orange}The discrete segments obtained by connecting the endpoints of each segment of the reconstructed curve from section \ref{section_resampling} serve as linear constraints to generate high-quality triangular meshes that maintain these constraints \cite{remacle2013frontal}. This section presents a customized blossom-quad (BQ) algorithm tailored for high-order mesh generation. This algorithm converts a triangular mesh into its dual graph and applies minimum weight perfect matching to merge triangles into quadrilaterals. In this algorithm, the cost function is defined as:
% \begin{equation}
% cost=\sum_{e\in E^{'}}(1-\beta(q_{ij}))+\zeta(q_{ij}),
% \end{equation}
% where
% \begin{equation}
% \zeta(q_{ij}) =
% \begin{cases}
% 10000, & \text{if $q_{ij}$ has two edges on the boundary}, \\
% 0, & \text{otherwise},
% \end{cases}
% \end{equation}
% where $E^{'}$ represents the set of edges being merged, $q_{ij}$ denotes the quadrilateral formed by merging triangle $t_{i}$ and triangle $t_{j}$, and $\beta(q)$ stands for a linear quality measure of the quadrilateral. Quadrilateral elements with adjacent $C^1$-continuous boundary edges inevitably produce degenerate, near-zero-quality elements after deformation (Fig.~\ref{bad element}). Unlike the standard BQ algorithm \cite{remacle2012blossom}, our tailored BQ for high-order quadrilateral mesh generation prevents such degeneracies by incorporating a soft constraint $\zeta(q_{ij})$ into the cost function to avoid elements with two boundary edges.

\begin{figure}
  \centering
\includegraphics[width=.6\textwidth]{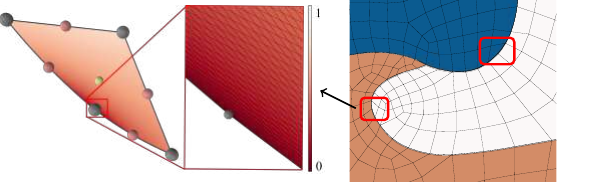} 
\includegraphics[width=.205\textwidth]{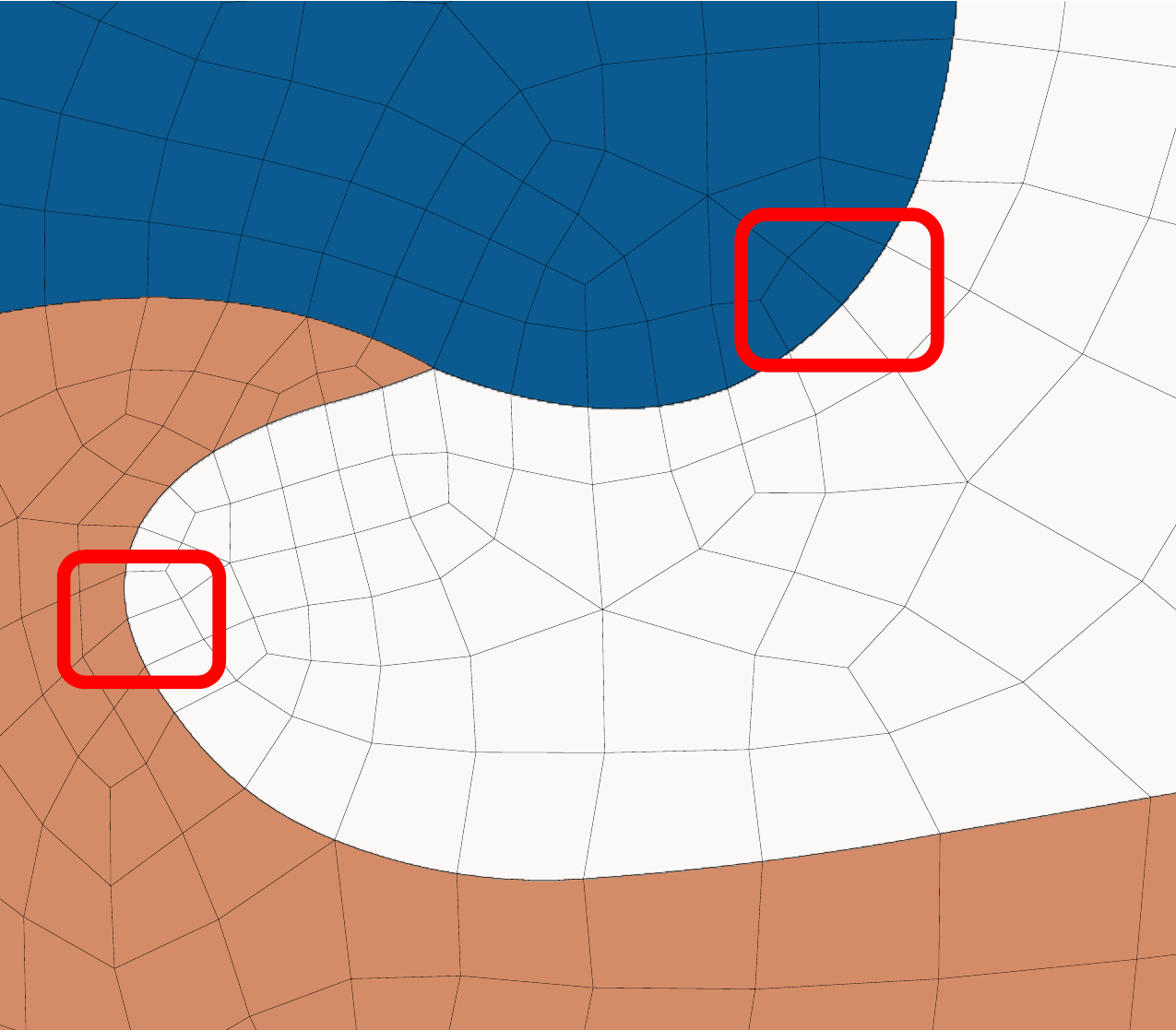}

     \parbox[t]{.6\textwidth}{\centering
            (a) 
           }
  \parbox[t]{.205\textwidth}{\centering
           (b) 
           }

  \caption{\label{bad element}
          High-order quadrilateral meshes ($n=2$) generated by the original (a) and modified (b) BQ methods. In (a), degenerate elements appear in regions highlighted by red boxes; a zoomed example is shown on the left, where color indicates $\det(J)$. In contrast, the modified BQ method prevents such elements during high-order deformation. Gray, red, and yellow points denote the endpoints of the input curve, high-order edge nodes, and high-order face nodes, respectively.}
\end{figure}

% Stricter constraints may yield a mixed triangle–quadrilateral mesh, so we apply midpoint subdivision to obtain a pure quadrilateral mesh. During curve reconstruction, the length threshold is $2l_t$; after subdivision, the upper bound of edge length becomes $l_t$, and new midpoints are projected onto the reconstructed curves. Fig.~\ref{modified blossom-quad} shows that our customized BQ method avoids degenerate elements, unlike the original BQ approach.
% }

% \begin{figure}
%   \centering
% \includegraphics[width=.2\textwidth]{figures/figure10_without_Modified_blossom.pdf}
% \includegraphics[width=.2\textwidth]{figures/figure10_Modified_blossom.pdf}

%      \parbox[t]{.2\textwidth}{\centering
%             (a) 
%            }
%   \parbox[t]{.2\textwidth}{\centering
%            (b) 
%            }
           
%   \caption{\label{modified blossom-quad}
%            High-order quadrilateral meshes ($n=2$) from the original (a) and tailored (b) BQ methods.}
% \end{figure}

\subsection{High-order quadrilateral meshing}
When upgrading the linear quadrilateral mesh generated in Section~\ref{section_modified_Blossom-quad} to higher orders, high-order nodes are first placed at degree-dependent equidistant positions on element edges or in the interior, as illustrated in Fig.~\ref{mvc}(a), ensuring consistency between the mesh degree and the input curve degree. 
Only elements whose edges are associated with boundary or interface curves require further deformation. 
For such boundary/interface elements, the high-order boundary interpolation points are moved to their corresponding positions on $\mathcal{C}_{\text{cor}}$, ensuring that the high-order mesh boundaries match $\mathcal{C}_{\text{cor}}$ exactly. 
However, adjusting boundary points alone may deteriorate element quality or even produce inverted elements, as shown in Fig.~\ref{mvc}(b), where shape measures below zero indicate inversion near the moved boundary point.

To prevent these degradations, we apply mean value coordinates (MVC) \cite{floater2003mean} to smoothly deform each affected high-order element. We first construct a polygon using the nodes $\{\mathbf{P}_i\}_{i=0}^{4n-1}$ on the four edges of the high-order quadrilateral element (Fig.~\ref{mvc}(a)), where $n$ denotes the degree of the input curve. 
The positions of interior node $\textbf{P}_{in}$ is defined using the MVC of this polygon:

\begin{equation}
\textbf{P}_{in}=\sum_{i=0}^{4n-1}\lambda_{i}\textbf{P}_{i},\sum_{i=0}^{4n-1}\lambda_{i}\equiv1,
\end{equation}
where,
\begin{equation}
\lambda_{i}=\frac{\tan(\frac{\angle \textbf{P}_{i-1}\textbf{P}_{in}\textbf{P}_{i}}{2})+\tan(\frac{\angle \textbf{P}_{i}\textbf{P}_{in}\textbf{P}_{i+1}}{2})}{||\textbf{P}_{in}-\textbf{P}_{i}||_{2}}.
\end{equation} Accordingly, when matching the boundary $\mathcal{C}_{\text{cor}}$, the interior nodes of high-order elements are updated using MVC, yielding high-quality high-order meshes. It is worth noting that a direct barycentric mapping may become non-injective, leading to element inversion. To avoid this issue, we employ a sequence of intermediate polygons so that the displacement between successive polygons remains small, ensuring a smooth and bijective composite mean value mapping~\cite{schneider2013bijective}. As shown in Fig.~\ref{mvc}(c), MVC-based deformation substantially improves the quality of high-order quadrilateral elements and prevents inversion, in contrast to simply projecting the high-order boundary points. %Thus, the MVC-based strategy enables robust deformation of high-order elements while preserving geometric fidelity and mesh quality.

\begin{figure}
  \centering
\includegraphics[width=.9\textwidth]{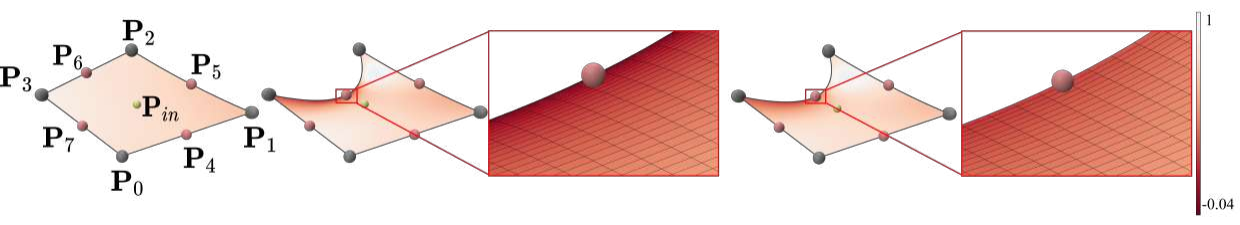}

     \parbox[t]{.2\textwidth}{\centering
            (a) 
           }
  \parbox[t]{.35\textwidth}{\centering
           (b) 
           }
    \parbox[t]{.35\textwidth}{\centering
           (c) 
           }
           
  \caption{\label{mvc}
           MVC-based deformation of a high-order quadrilateral element. (a) After degree elevation ($\min J_m=0.66$); (b) after boundary matching ($\min J_m=-0.04$); (c) final result after MVC deformation ($\min J_m=0.21$). Corner (gray), edge (red), and face (yellow) interpolation nodes are shown. The color of each point in the element is determined by the shape measure $J_m$ (see Section~\ref{quality metrics}).}
\end{figure}

%In summary, we present a method for generating high-quality, high-order unstructured quadrilateral meshes based on bounded-geometry-error curve reconstruction. Our approach transforms high-quality high-order mesh generation into a curve reconstruction problem with controlled geometric error, significantly improving efficiency. By employing a tailored Blossom-Quad algorithm and MVC-based high-order element deformation, the method maintains high mesh quality and avoids inverted or degenerate elements. Owing to the adoption of a meticulous curve reconstruction strategy and indirect mesh generation strategy naturally preserves boundary/interface features and ensures mesh conformity across interfaces.

\section{Experimental results}
In this section, we present the results of the proposed method for generating high-quality, high-order unstructured quadrilateral meshes. We compare our approach with existing high-order quadrilateral meshing techniques, including HOHQMesh~\cite{kopriva2024hohqmesh}, parametrization-based methods~\cite{xu2011parameterization,xu2013constructing,nian2016planar,pan2018low,ji2021constructing,wang2022tcb}, and the TMOP method~\cite{dobrev2019target}. Performance is evaluated in terms of the quality of the generated high-order quadrilateral meshes and computational runtime.

Our meshing pipeline is implemented in C++. Boundary- and interface-preserving triangular meshes are generated using the frontal-Delaunay for quads method~\cite{remacle2013frontal} integrated in Gmsh~\cite{geuzaine2009gmsh}. We employ a modified blossom-quad algorithm to prevent the formation of quadrilateral elements with two edges on boundaries or interfaces. All experiments and timing measurements are performed on a MacBook Air M2 with 4 performance cores and 8GB of memory.

Unless stated otherwise, the target element edge length is set to $l_{t}=0.05 \times \text{bbox}$, where $\text{bbox}$ is the diagonal length of the bounding box of the input curve. The geometric tolerance is set to $ \epsilon = 0.001 \times \text{bbox} $. In the adaptive refinement step, the control vector angle threshold is $\tau = \frac{\pi}{4}$ and segmentation density control parameter is $\alpha=2.3$. 

\begin{figure}
  \centering
  \includegraphics[width=1\textwidth]{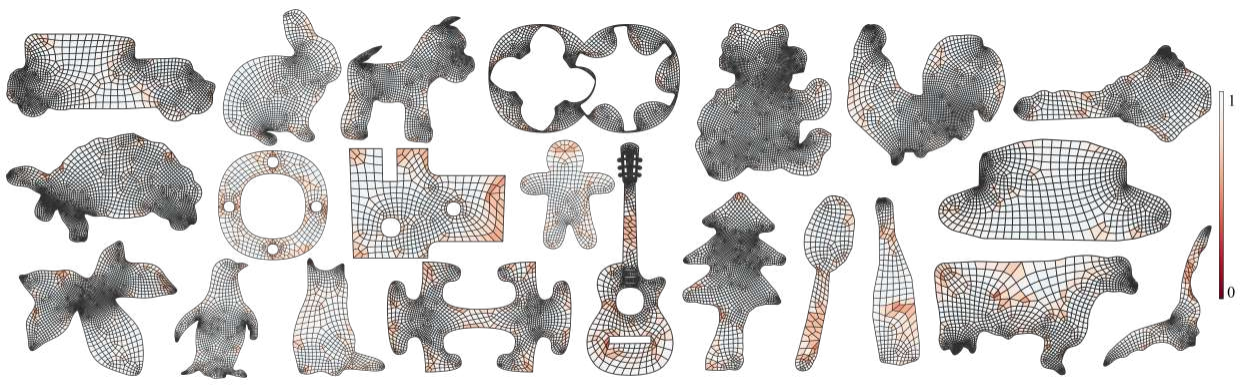}
  
  \caption{\label{data set without interface}
          Sample results from the high-order ($n=2,3$), high-quality mesh dataset without interfaces. Element color is indicated by the minimum shape measure $J_m$.}
\end{figure}

\begin{figure}
  \centering
  \includegraphics[width=1\textwidth]{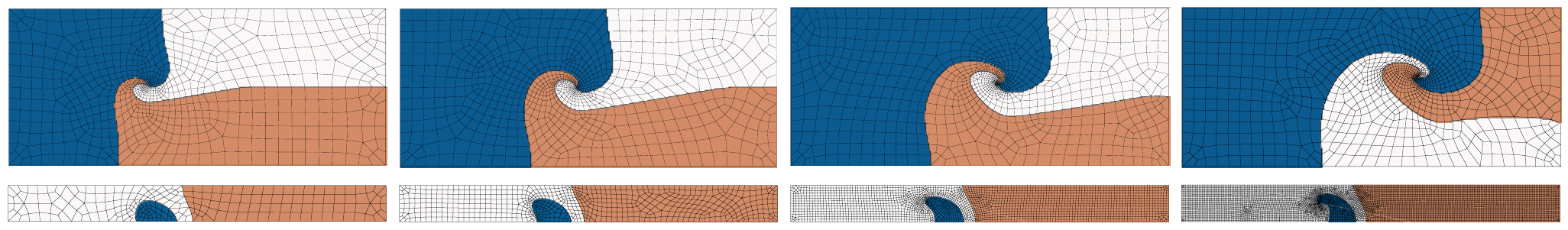}
  \caption{\label{data set with interface}
          High-order high-quality mesh dataset with material interfaces. Different colors indicate regions with distinct materials. Top: meshes of the triple-point problem at different time steps ($n=4$); bottom: meshes of the shock–bubble interaction at different time steps ($n=2$).}
\end{figure}

\subsection{Quality metrics}\label{quality metrics}
To quantitatively evaluate geometric fidelity and mesh quality of the generated high-order quadrilateral meshes, we use the following metrics: (i) the relative geometric error, and (ii) shape, skew, and singular-point measures~\cite{knupp2022geometric} for mesh quality.

\textbf{Relative geometric error $b_e$.} The relative geometric error between the original arcs $\mathcal{C}_\text{ori}$ and the reconstructed curve $\mathcal{C}_\text{cur}=\{\mathbf{c}_i\}$ is defined as
\[
  b_e = \frac{\max_i\{\hat{b}_{k}(\mathbf{c}_i,\mathbf{c}_i(0)\mathbf{c}_i(1)\!\mid_{\mathcal{C}_{\text{ori}}})\}}{\text{bbox}},
\]
where $\text{bbox}$ denotes the diagonal length of the bounding box of the input curve.

\textbf{Shape measure $J_m$.}
For a high-order quadrilateral element with Jacobian matrix $J$, the shape measure is defined as
\begin{equation}
    J_m = \frac{2 \det(J)}{\|J\|_F^2},
\end{equation}
where $\|J\|_F$ is the Frobenius norm of $J$.
The measure quantifies the deviation of the high-order element from the unit square, with values near $1$ indicating good shape quality (uniformity and orthogonality).

\textbf{Skew measure $J_k$.}
Let $\mathbf{j}_1$ and $\mathbf{j}_2$ denote the two column vectors of the Jacobian $J$.
The skew measure is defined as
\begin{equation}
    J_k = \frac{\det(J)}{\|\mathbf{j}_1\|_2 \, \|\mathbf{j}_2\|_2}.
\end{equation}
This metric evaluates only the orthogonality of the element, independent of edge-length disparity. 

Both $J_m$ and $J_k$ share the sign of $\det(J)$, and mesh validity requires $\det(J)>0$ everywhere.
We evaluate each element at $100$ Gauss-Lobatto points~\cite{karniadakis2013spectral} and report the minimum and average $J_m$ and $J_k$ values across all elements.

\textbf{Number of singular points $n_s$.} We report $n_s$, the number of singular vertices (those with valence not equal to four), to assess the topological regularity of the generated high-order quadrilateral meshes.

\subsection{Algorithm performance evaluation}
To evaluate the effectiveness of the proposed method, we first test it on a dataset of 110 models drawn from~\cite{zheng2019boundary,barroso2022efficient,liu2023topology}, including 90 boundary-only models and 20 models with internal interfaces. In our experiments, the generated elements used the same polynomial degree as the input geometry, focusing on 2nd- and 3rd-order models, which are the most commonly used in practice.

\begin{figure}
  \centering
\includegraphics[width=.24\textwidth]{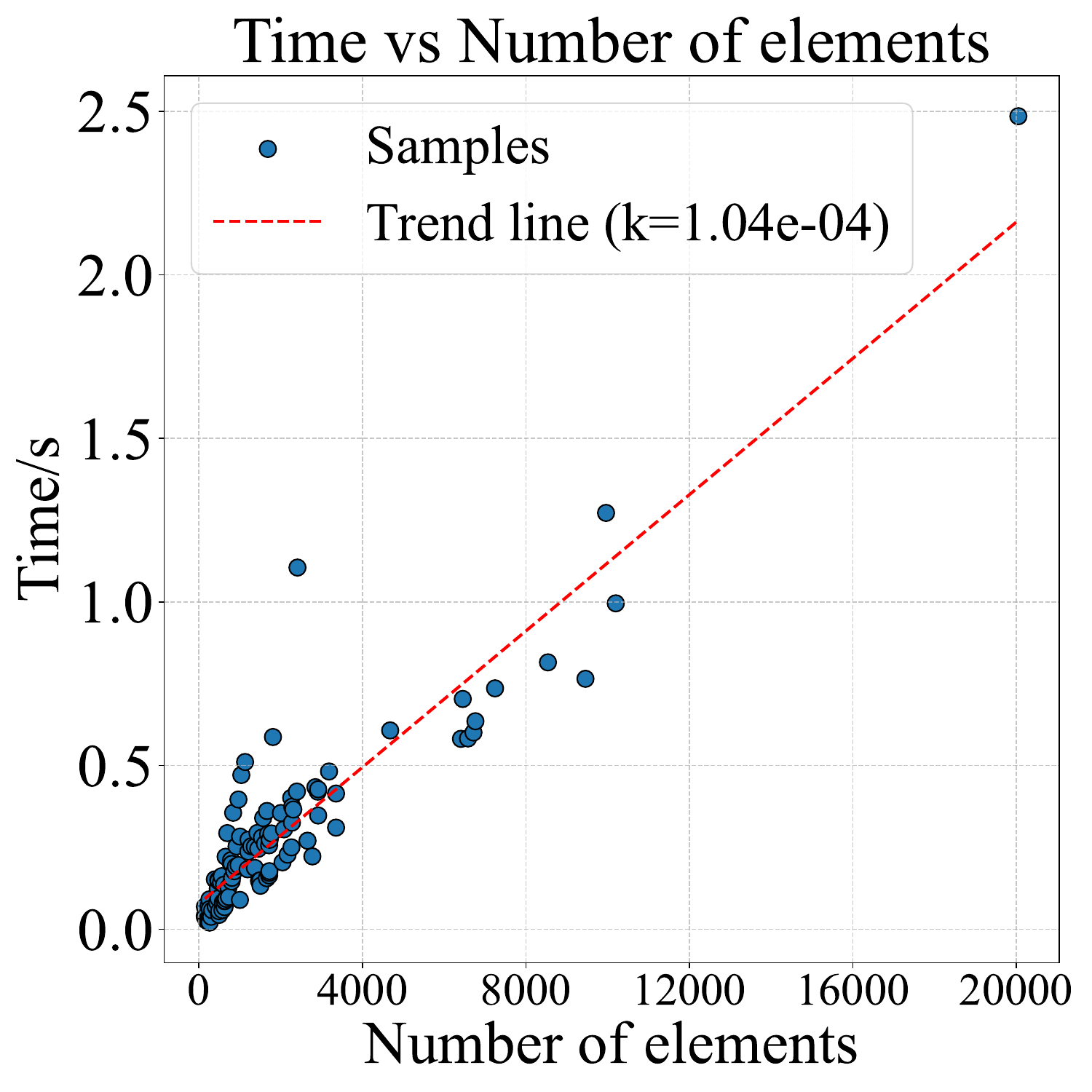}
\includegraphics[width=.24\textwidth]{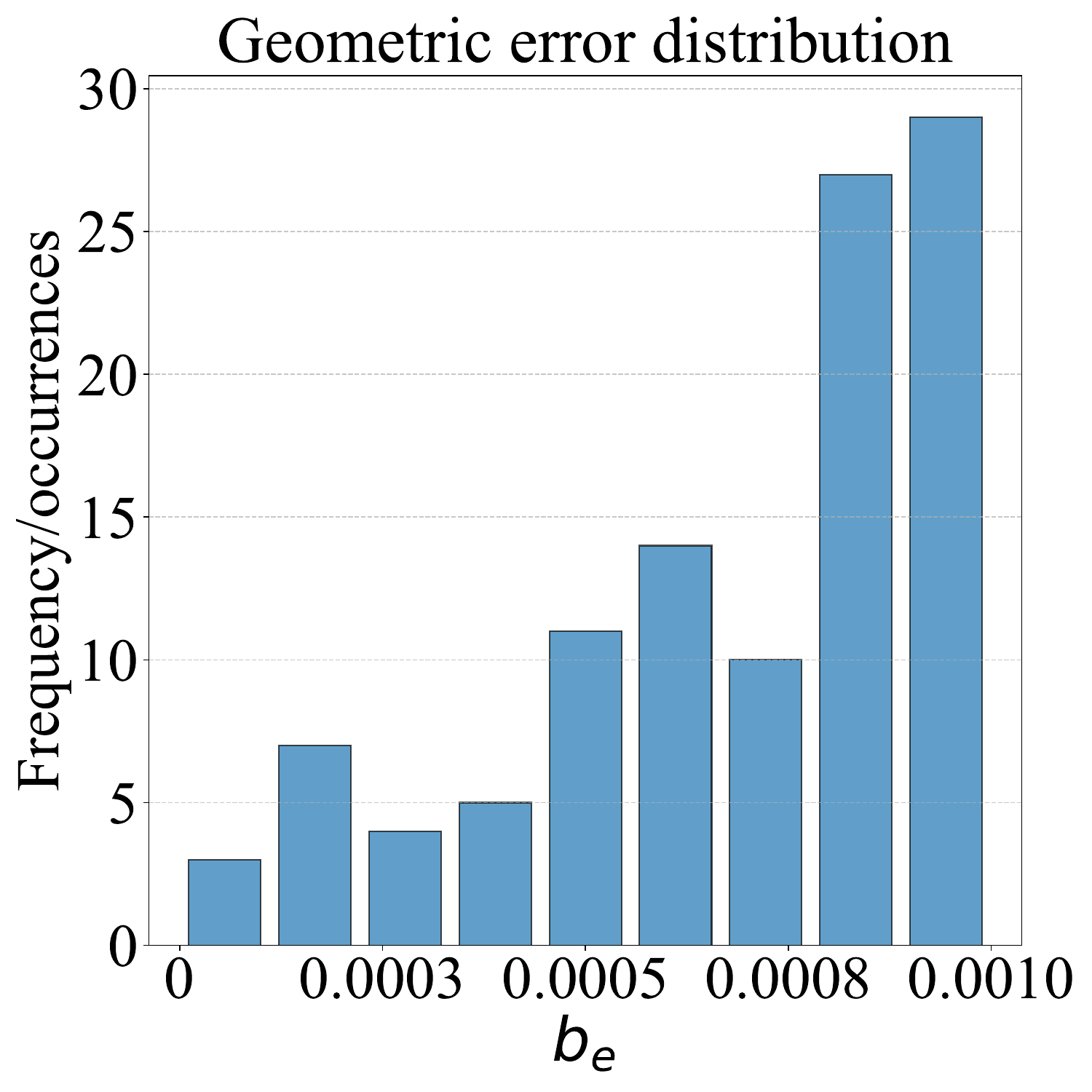}
\includegraphics[width=.24\textwidth]{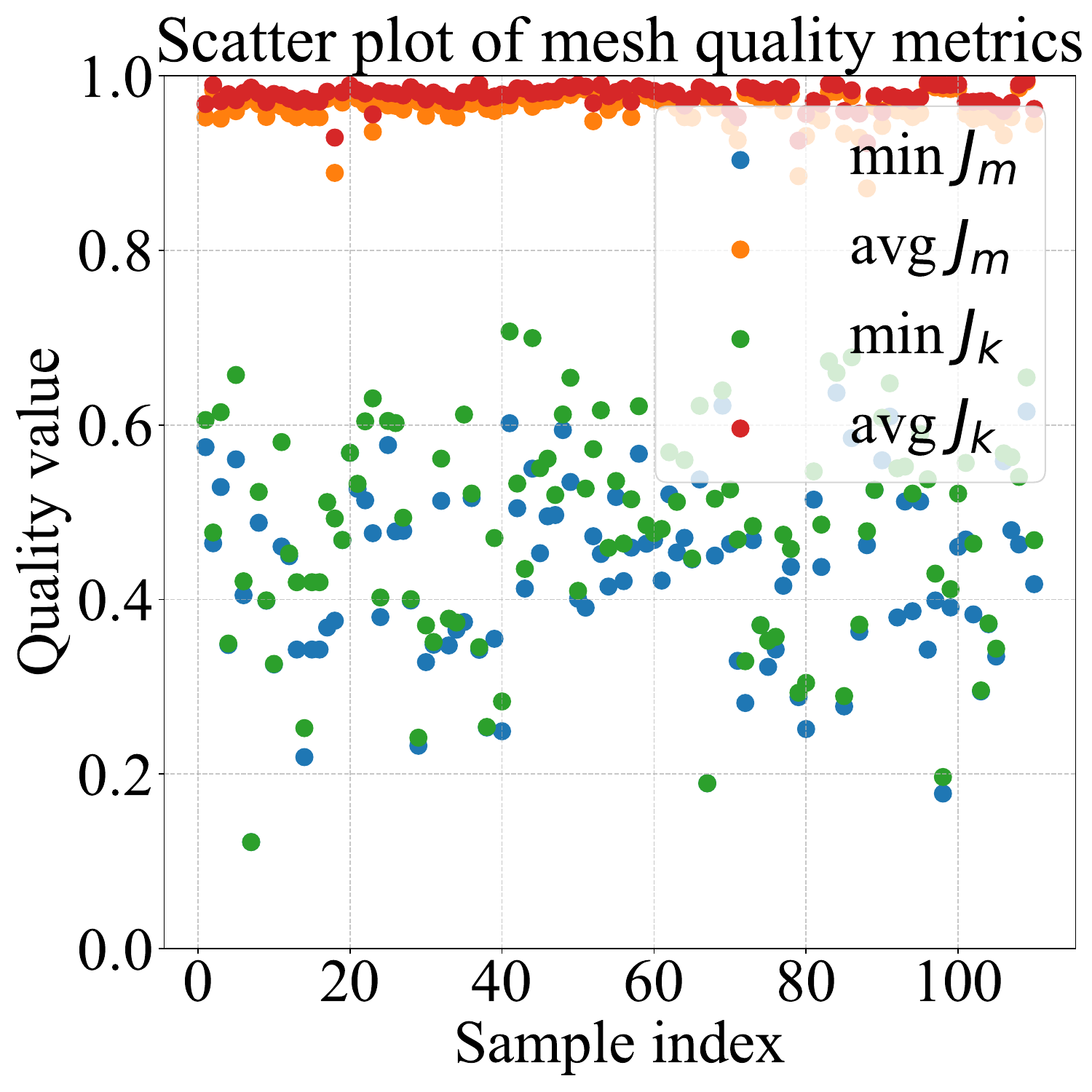}
\includegraphics[width=.24\textwidth]{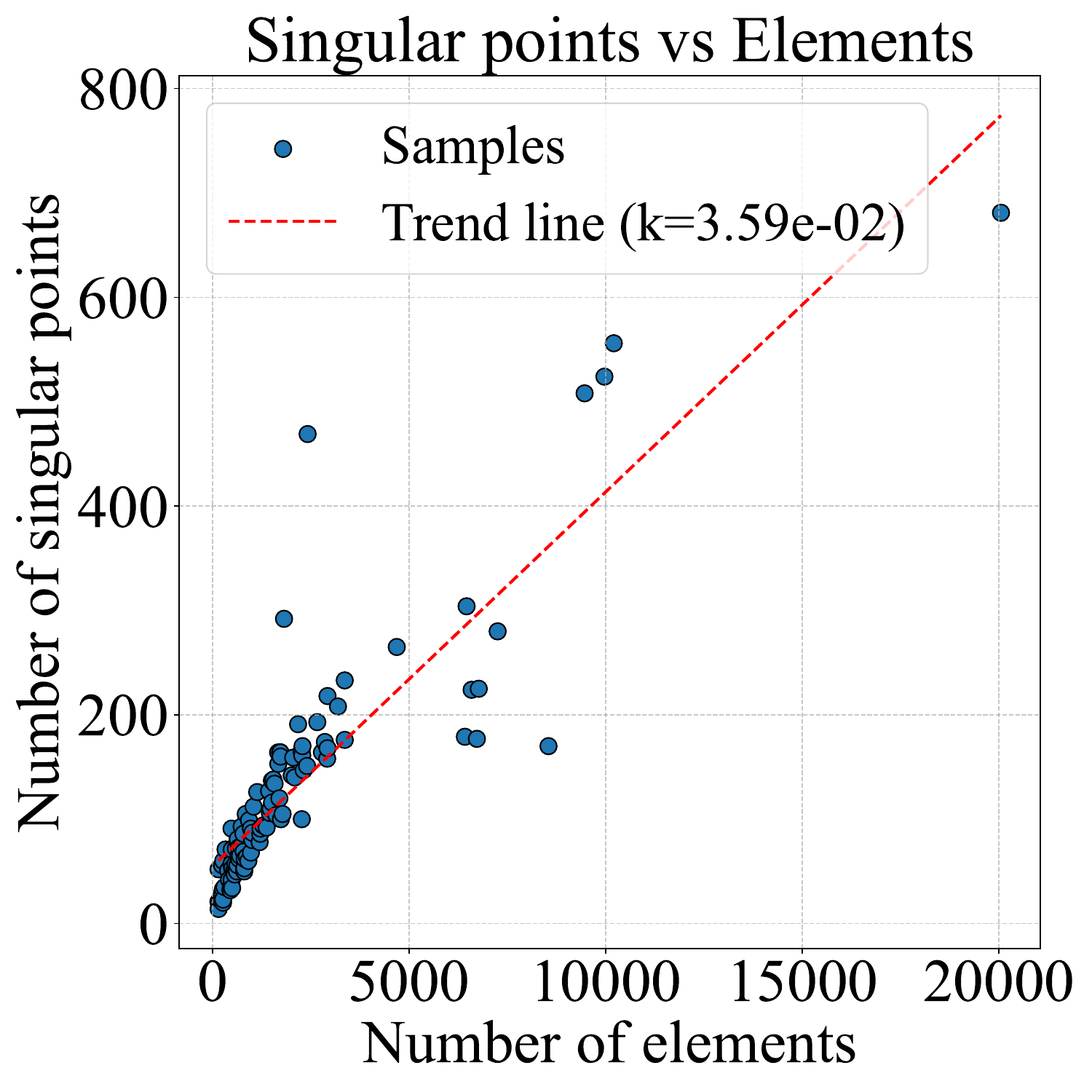}

     \parbox[t]{.24\textwidth}{\centering
            (a) 
           }
  \parbox[t]{.24\textwidth}{\centering
           (b) 
           }
   \parbox[t]{.24\textwidth}{\centering
           (c) 
           }
     \parbox[t]{.24\textwidth}{\centering
           (d) 
           }
           
  \caption{\label{data_set_data}
           Comprehensive statistics for 110 models. (a) Computation time vs. number of elements; (b) distribution of geometric error; (c) distribution of mesh quality; (d) number of singular points vs. number of elements.}
\end{figure}

\begin{figure}
  \centering
\includegraphics[width=.32\textwidth]{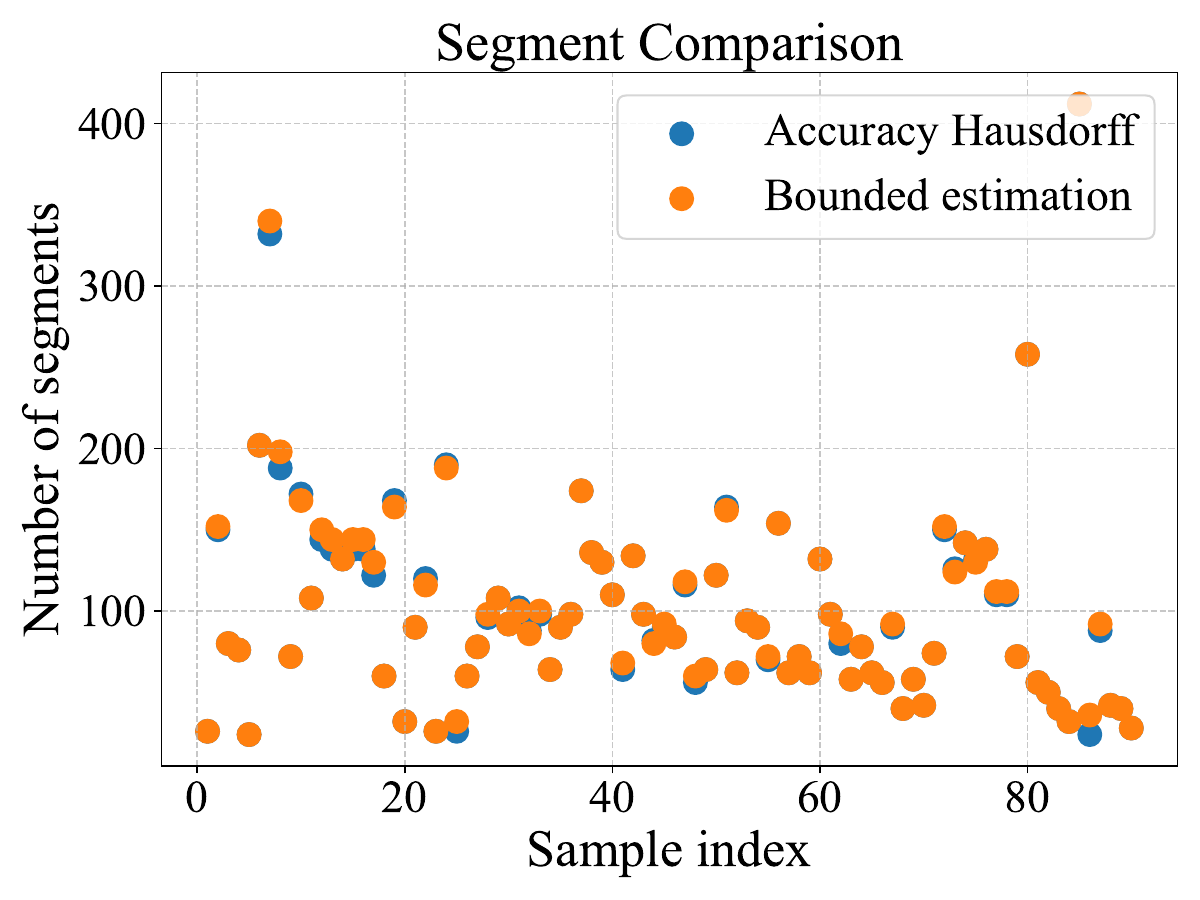}
\includegraphics[width=.32\textwidth]{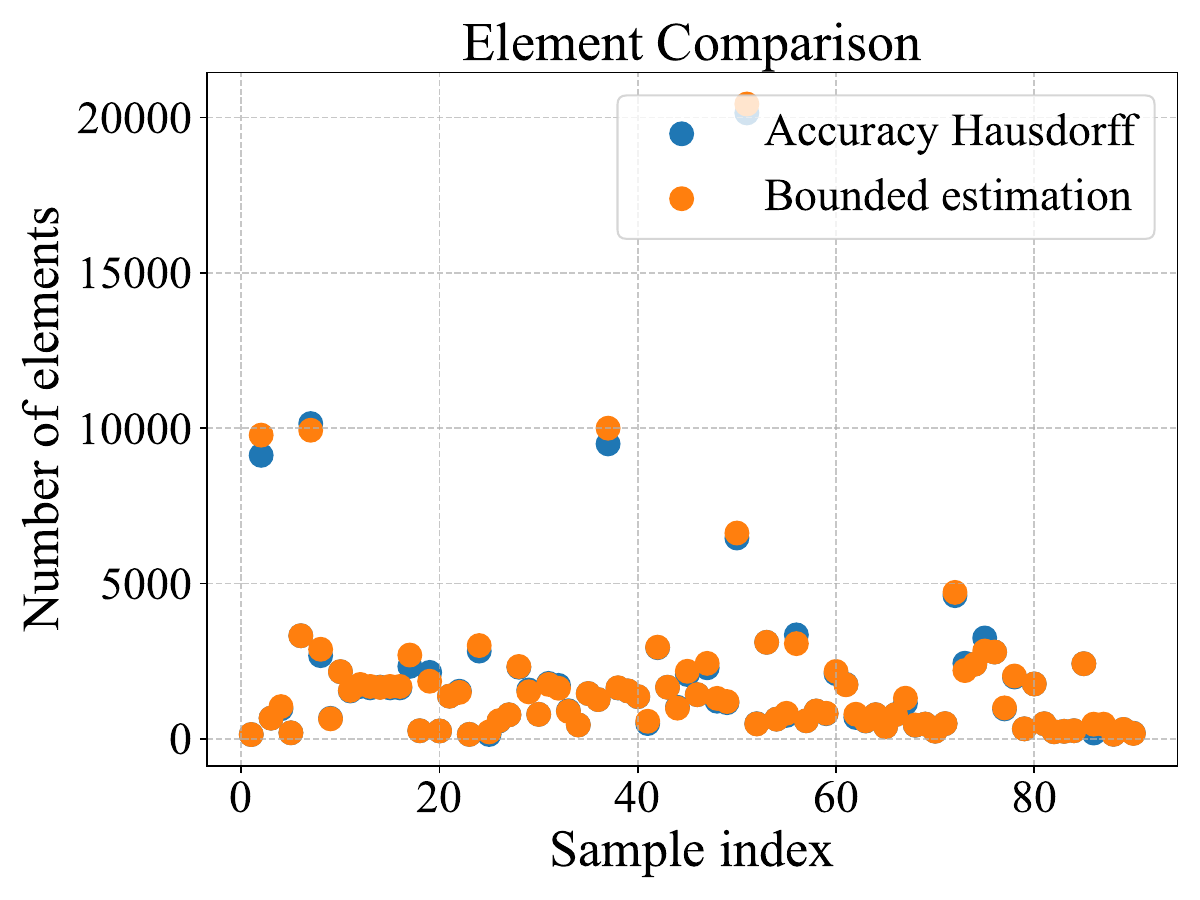}
\includegraphics[width=.32\textwidth]{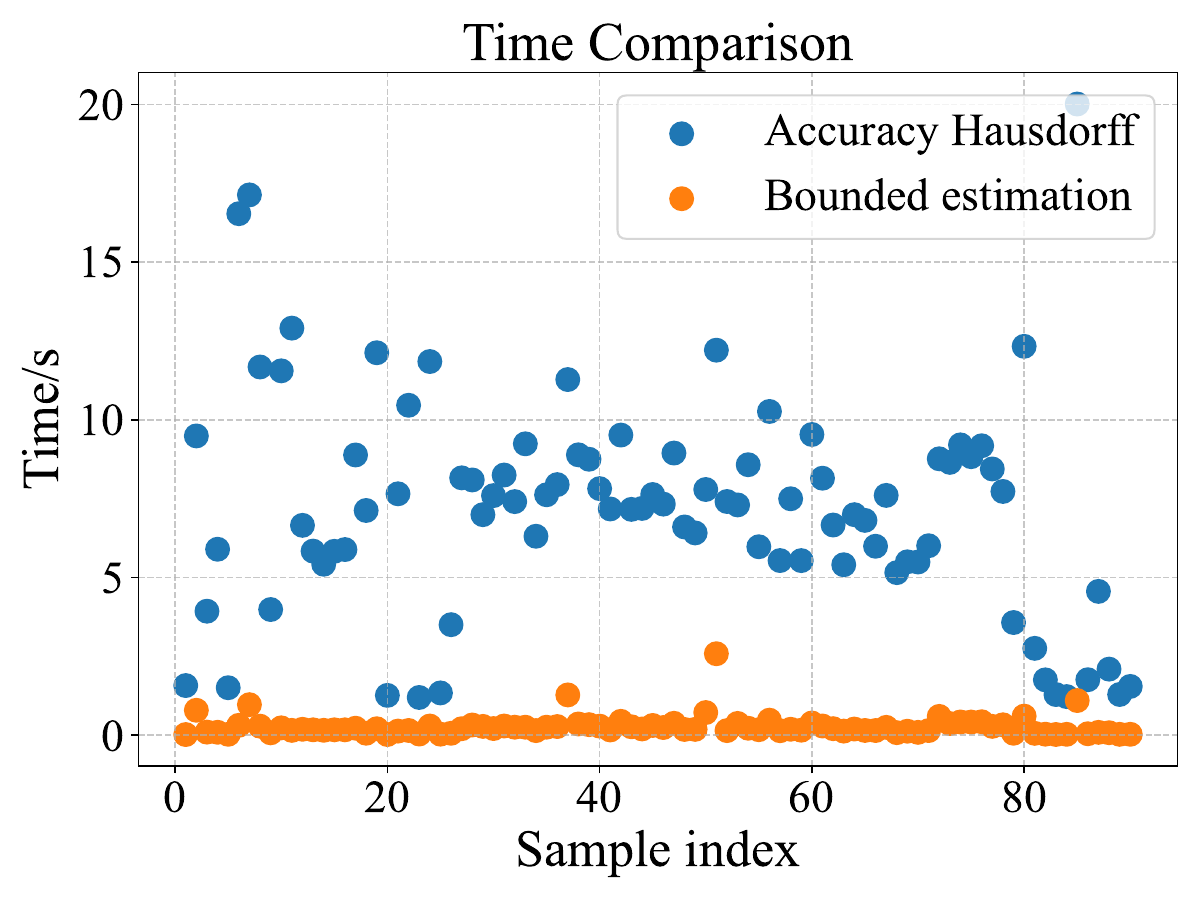}

     \parbox[t]{.32\textwidth}{\centering
            (a) 
           }
  \parbox[t]{.32\textwidth}{\centering
           (b) 
           }
     \parbox[t]{.32\textwidth}{\centering
            (c) 
           }
           
  \caption{\label{our bounded estimation vs accuracy}
           Effect of our bounded Hausdorff distance estimation versus exact Hausdorff distance estimation: (a) segment-level comparison, (b) element-level comparison, and (c) computational time.}
\end{figure}

Figs.~\ref{data set without interface} and~\ref{data set with interface} show quadratic and cubic high-order quadrilateral meshes generated by the proposed method for models without and with interfaces, illustrating accurate geometry preservation and interface conformity. Fig.~\ref{data_set_data}(a) shows that computation time scales nearly linearly with the number of high-order elements (fitting factor: \(1.04\times 10^{-4}\) s/element), demonstrating the efficiency of the optimization-based curve reconstruction. Fig.~\ref{data_set_data}(b) confirms bounded geometric errors, with \(b_e < 10^{-3}\) for all models. Fig.~\ref{data_set_data}(c) shows high mesh quality, with \(J_m\) and \(J_k\) averages near 1 and minima positive, ensuring no inverted or degenerate elements. Fig.~\ref{data_set_data}(d) reveals a near-linear increase of singular points with element count (fitting factor: \(3.59\times 10^{-2}\)/element).

We sample $1,000$ points on each $\mathbf{c}_i \in \mathcal{C}_{\text{cur}}$ and its original counterpart $\mathbf{c}_i(0)\mathbf{c}_i(1)|_{\mathcal{C}_{\text{ori}}}$, and compute the Hausdorff distance {\color{black}using efficient point-based computation method~\cite{taha2015efficient}} between these point sets as a proxy for the exact Hausdorff distance, replacing the upper-bound estimate proposed in this work. Fig.~\ref{our bounded estimation vs accuracy} compares results on $90$ boundary-only cases, showing that both the exact Hausdorff distance and the proposed upper bound produce similar numbers of final curve segments and high-order elements, while the proposed bound significantly improves computational efficiency.

The proposed method performs well on 2nd- to 4th-order meshes (Figs.~\ref{data set without interface} and~\ref{varies order}), which are commonly employed in high-order finite element computations. To demonstrate the scalability of our approach, we also show an example using elements of 10-th order, where the input consists of curves of degree 10, as illustrated in Figure~\ref{order10}. Such high-order elements may be required for high-accuracy applications, including spectral element methods~\cite{hafeez2023review}.

\begin{figure}
  \centering
\includegraphics[width=.75\textwidth]{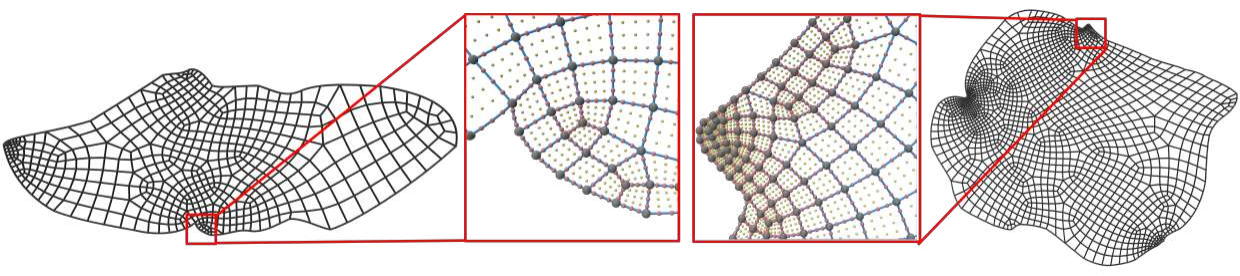}

  \caption{\label{varies order}
          \textcolor{black}{Illustration of a 4th-order quadrilateral mesh. Gray points denote element vertices, red points denote high-order edge nodes, and yellow points denote high-order face nodes.}}
\end{figure}

%In summary, extensive experiments demonstrate that the proposed method effectively generates high-quality, high-order quadrilateral meshes for complex geometries, preserving interface conformity and geometric features with bounded error, while producing inversion-free and non-degenerate elements. The optimization-based reconstruction strategy and the use of the Hausdorff distance upper bound significantly enhances computational efficiency.

\subsection{Parameter influence}

The proposed method introduces three parameters during curve reconstruction: the geometric error threshold $\epsilon$, the angular threshold $\tau$, and the segmentation density control parameter $\alpha$ for adaptive refinement. In the following, we analyze the influence of these parameters on the quality of the resulting high-order quadrilateral meshes through experiments.

\textbf{Geometric error threshold $\epsilon$.}
In the optimization-based curve reconstruction process, the geometric error threshold $\epsilon$ controls how closely the high-order mesh boundaries follow the input curve. Fig.~\ref{para epsilon} illustrates the input curve with fine geometric details, its reconstruction, and the resulting high-order quadrilateral meshes under different thresholds. When $\epsilon = 0$, the mesh strictly follows the input curve segmentation, resulting in unnecessarily dense elements in simple regions. For $\epsilon = 0.05 \times \text{bbox}$, the mesh is sparser, but the reconstructed curve deviates significantly from the original, failing to capture fine features. Using the default value $\epsilon = 0.001 \times \text{bbox}$ produces curves that closely match the input while maintaining a reasonable mesh density.

\begin{figure}
  \centering
\includegraphics[width=1.0\textwidth]{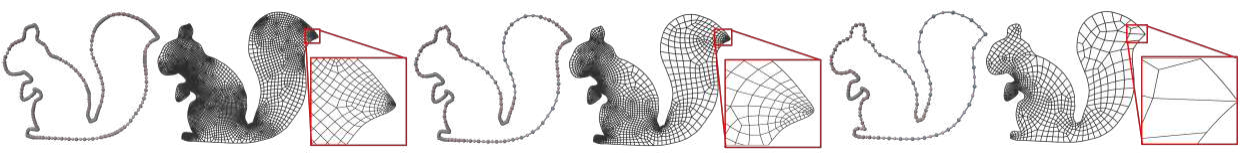}

\parbox[t]{.3\textwidth}{\centering
            (a) $\epsilon=0$
           }
\parbox[t]{.3\textwidth}{\centering
           (b) $\epsilon=0.001\times \text{bbox}$
           }
\parbox[t]{.3\textwidth}{\centering
           (c) $\epsilon=0.05\times \text{bbox}$
           }

  \caption{\label{para epsilon}
           Effect of given geometric error threshold $\epsilon$ ($n=2$).}
\end{figure}

\textbf{Angular threshold $\tau$.}
In the adaptive refinement process, the parameter $\tau$ controls the refinement level. According to Theorem~\ref{valid high-order quad theorem}, large $\tau$ may produce invalid high-order quadrilateral elements (Fig.~\ref{para R}(c)), whereas small $\tau$ can lead to over-refinement in curved regions, resulting in dense meshes (Fig.~\ref{para R}(a)) and reduced efficiency. Based on our extensive experiments, we adopt $\tau = \pi/4$, which balances validity and refinement (Fig.~\ref{para R}(b)) and is suitable for $n = 2, 3, 4$. For larger $n$, although $\tau = \pi/4$ remains valid, it tends to produce denser meshes (Fig.~\ref{order10}(a)), since higher-order elements can represent geometry more efficiently with fewer elements; thus, a larger value (e.g., $\tau = \pi/3$ for $n = 10$) is preferred. This is consistent with the degree elevation property of B\'{e}zier curves~\cite{farin2001curves}: as $n \to \infty$, the control polygon converges to the curve, causing the initial range $r_1$ to shrink toward the span of endpoint tangents while enlarging the admissible range for $r_0$. As shown in Fig.~\ref{order10}(b), increasing $\tau$ to $\pi/3$ yields a high-quality mesh with fewer elements while preserving geometric accuracy.

\begin{figure}
  \centering
\includegraphics[width=1.0\textwidth]{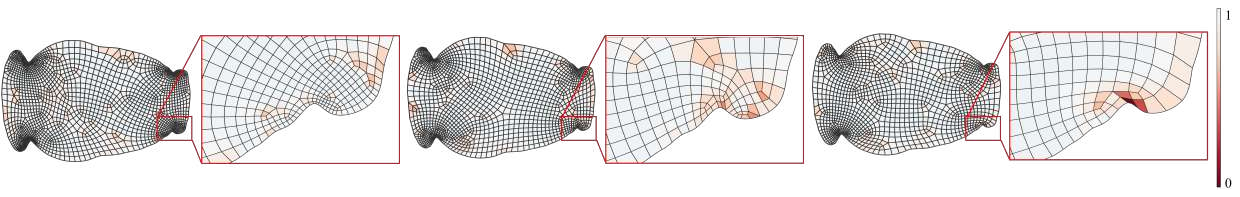}

\parbox[t]{.25\textwidth}{\centering
            (a) $\tau=\frac{\pi}{6}$
           }
\parbox[t]{.25\textwidth}{\centering
           (b) $\tau=\frac{\pi}{4}$
           }
\parbox[t]{.25\textwidth}{\centering
           (c) $\tau=\frac{\pi}{3}$
           }

  \caption{\label{para R}
           Effect of parameter $\tau$ in adaptive refinement ($n=3$). The color of element represents its minimum shape measure $J_m$. Inverted elements in (c) are highlighted in dark red.}
\end{figure}

\begin{figure}
  \centering
\includegraphics[width=.35\textwidth]{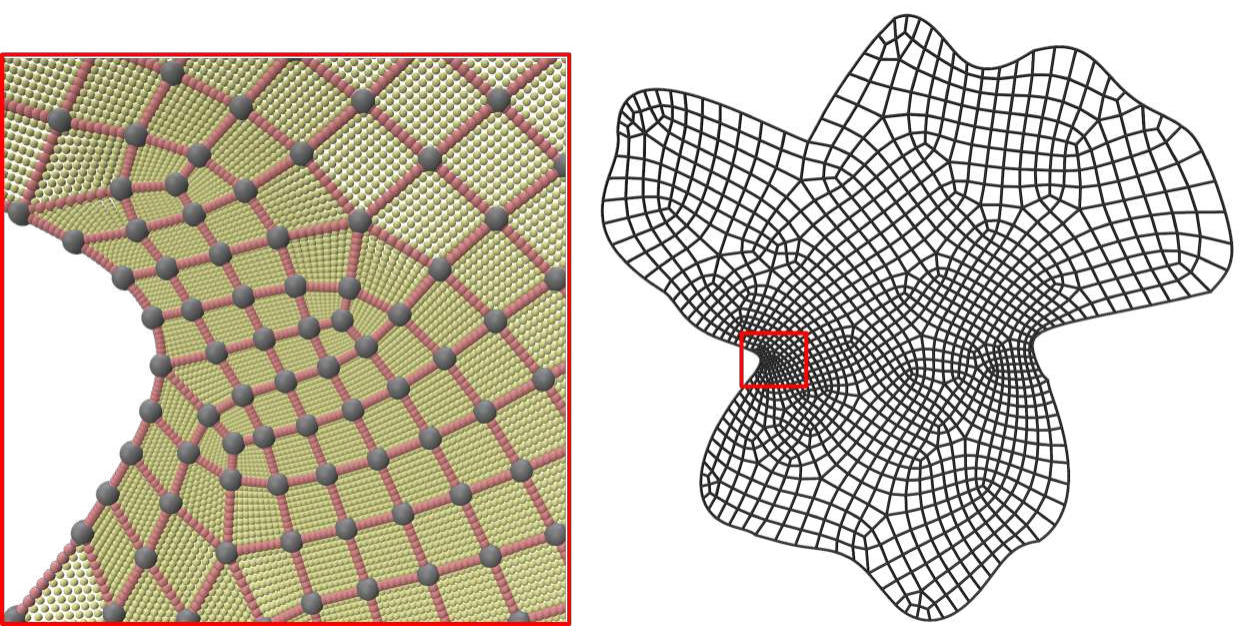}
\includegraphics[width=.35\textwidth]{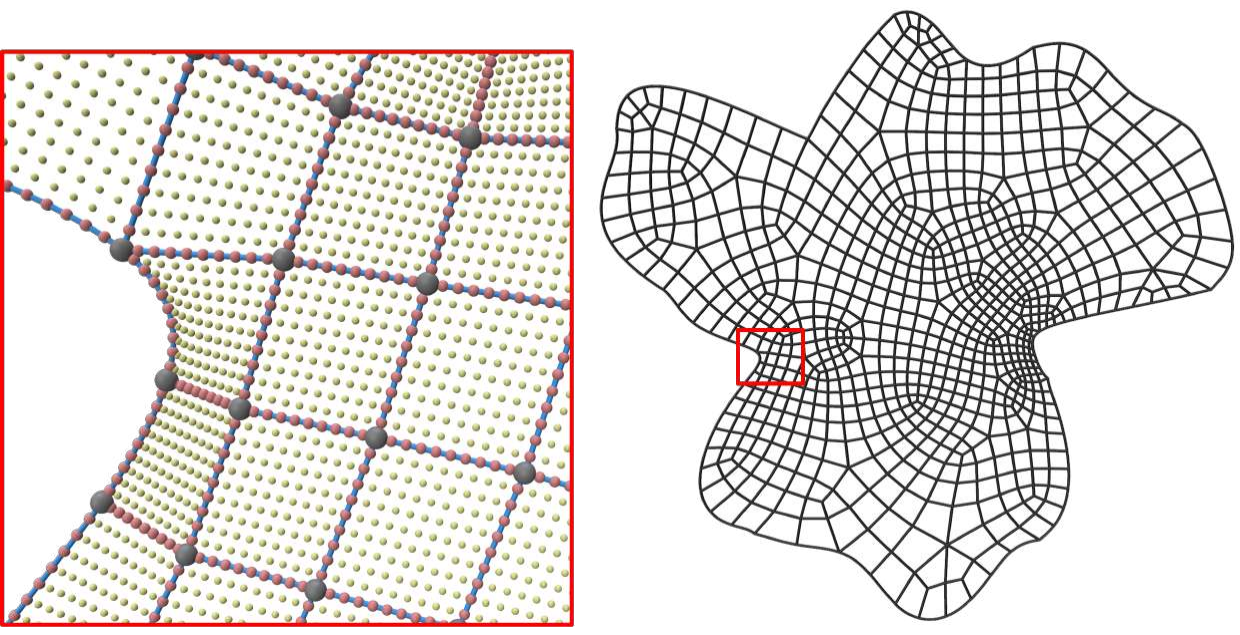}

\parbox[t]{.35\textwidth}{\centering
            (a) $\tau=\frac{\pi}{4}$
           }
\parbox[t]{.35\textwidth}{\centering
           (b) $\tau=\frac{\pi}{3}$
           }

  \caption{\label{order10}
          High-order meshes ($n=10$) with $\tau = \pi/4$ (left, $1426$ elements) and $\tau = \pi/3$ (right, $792$ elements). Zoom-ins highlight high-order nodes.}
\end{figure}

% \begin{figure}[htb]
%   \centering
% \includegraphics[width=.9\textwidth]{figures/figure16_para_l_0_3.pdf}
% \parbox[t]{.9\textwidth}{\centering
%             (a) $l_{t}=0.30$
%            }
           
% \includegraphics[width=.9\textwidth]{figures/figure16_para_l_bbox.pdf}
%  \parbox[t]{.9\textwidth}{\centering
%            (b) $l_{t}=0.62$
%            }
           
% \includegraphics[width=.9\textwidth]{figures/figure16_para_l_1_0.pdf}
%    \parbox[t]{.9\textwidth}{\centering
%            (c) $l_{t}=1.00$
%            }
           
%   \caption{\label{para l}
%            Effect of target element edge length $l_{t}$.}
% \end{figure}

% \textbf{Target element edge length $l_{t}$.} The target element edge length $l_{t}$ influences the final result of the generated high-order unstructured quadrilateral mesh. Figure~\ref{para l} illustrates the effect of different $l_{t}$ values on the mesh quality, where $l_{t} = 0.62$ corresponds to the default value $0.05 \times \text{bbox}$. As shown in the figure, using the default value produces geometrically adaptive meshes that are suitable for general models.

\textbf{Segmentation density control parameter $\alpha$.} The parameter $\alpha$ controls the density of curve segmentation, thereby influencing the final high-order quadrilateral mesh. Fig.~\ref{parameter alpha} illustrates its effect: smaller $\alpha$ values produce overly dense meshes, while larger values result in sparser meshes with abrupt element size transitions. The default value $\alpha = 2.3$ achieves a geometrically adaptive mesh with balanced density and smooth gradation.

\begin{figure}
  \centering
\includegraphics[width=.98\textwidth]{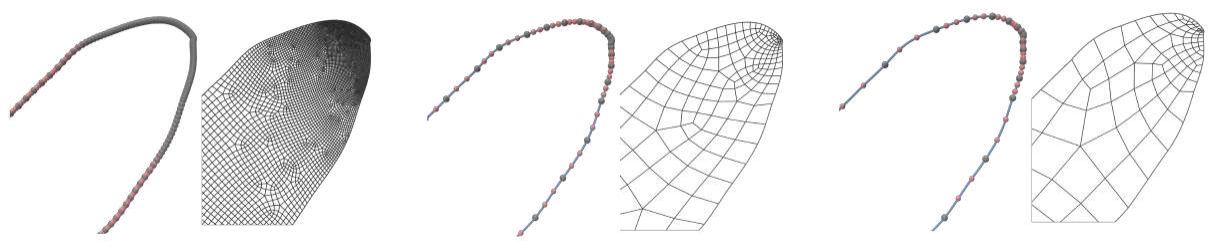}

\parbox[t]{.31\textwidth}{\centering
            (a) $\alpha=1.8$
           }
\parbox[t]{.31\textwidth}{\centering
           (b) $\alpha=2.3$
           }
\parbox[t]{.31\textwidth}{\centering
           (c) $\alpha=2.8$
           }

  \caption{\label{parameter alpha}
           Effect of parameter $\alpha$ in adaptive refinement ($n=3$).}
\end{figure}

%-------------------------------------------------------------------------
\subsection{Ablation study}
Ablation studies were conducted to evaluate the contribution of each step in the geometric error-bounded curve reconstruction. As shown in Fig.~\ref{ablation study figure}(a), skipping the error-bounded approximation produces high-quality meshes but with excessive element counts, reducing the efficiency of high-order representations. Omitting adaptive refinement and optimization, as in Fig.~\ref{ablation study figure}(b), yields the fewest elements but poor-quality or invalid meshes due to insufficient resolution. Performing only the first two steps without optimization, shown in Fig.~\ref{ablation study figure}(c), compromises mesh quality due to non-uniform endpoint distribution. In contrast, the full pipeline in Fig.~\ref{ablation study figure}(d) achieves a balanced combination of element count, high mesh quality, and bounded geometric error, demonstrating the necessity of each component.

\begin{figure}
  \centering
    \includegraphics[width=.2\textwidth]{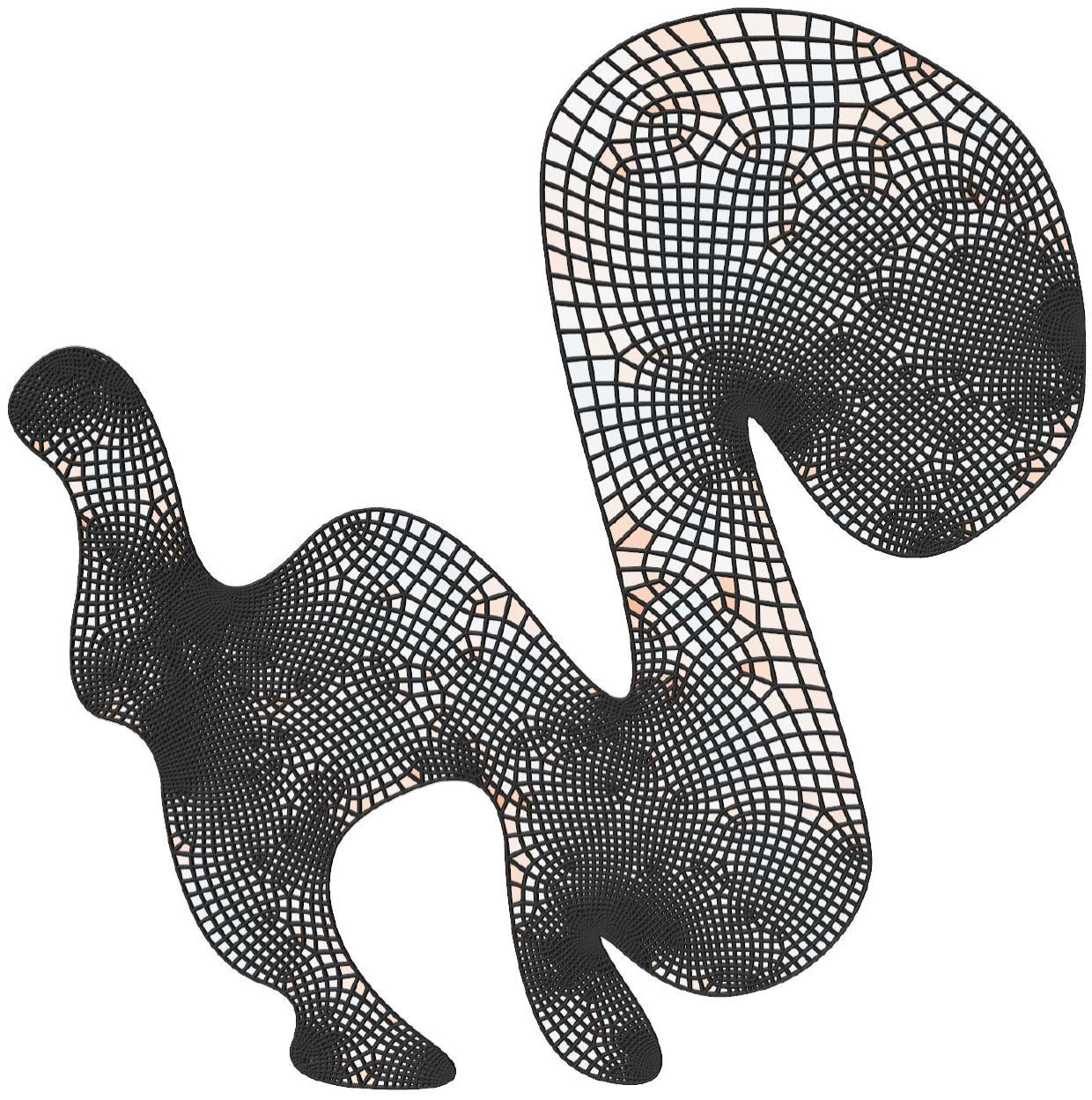}
  \includegraphics[width=.2\textwidth]{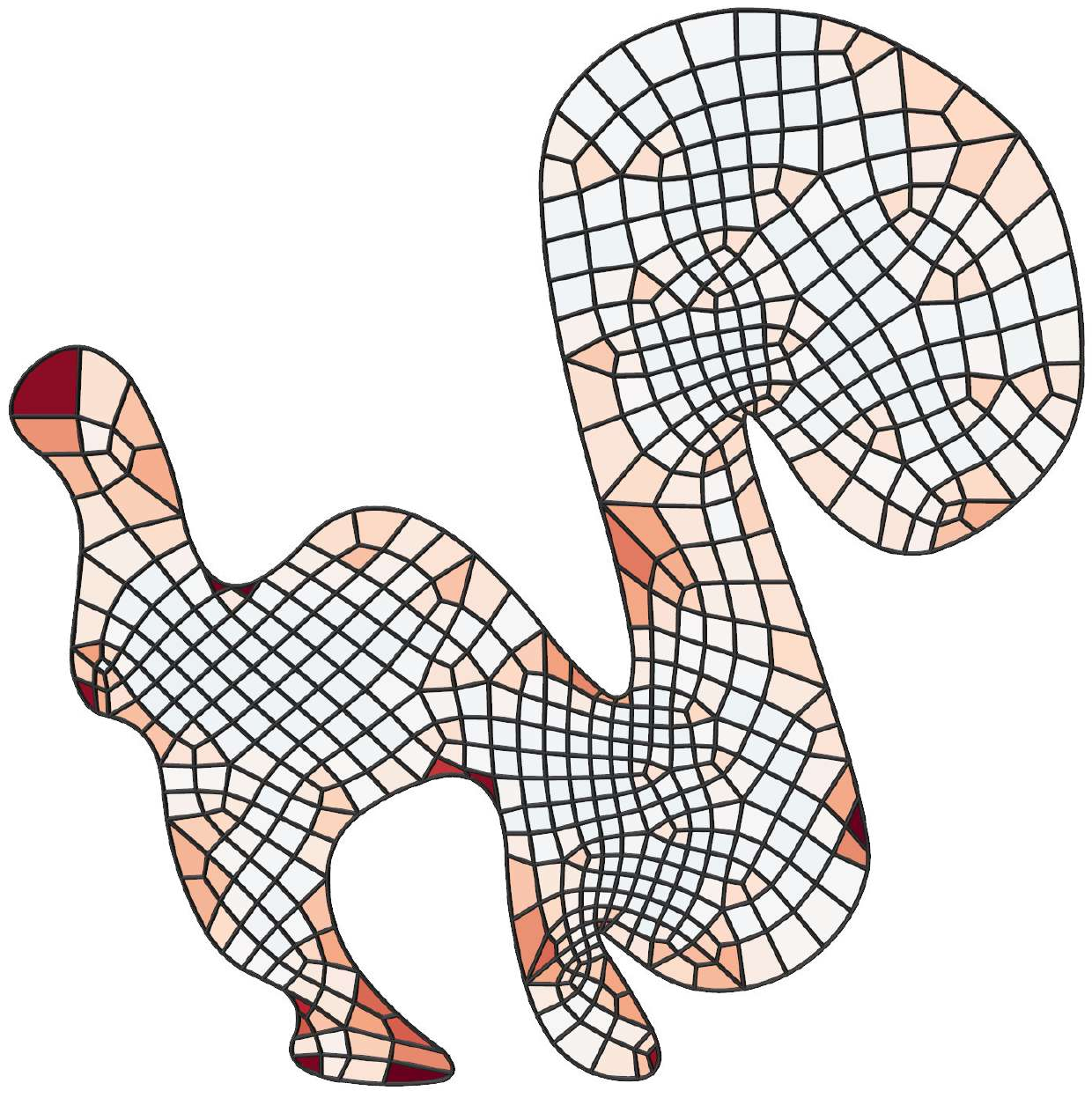}
  \includegraphics[width=.2\textwidth]{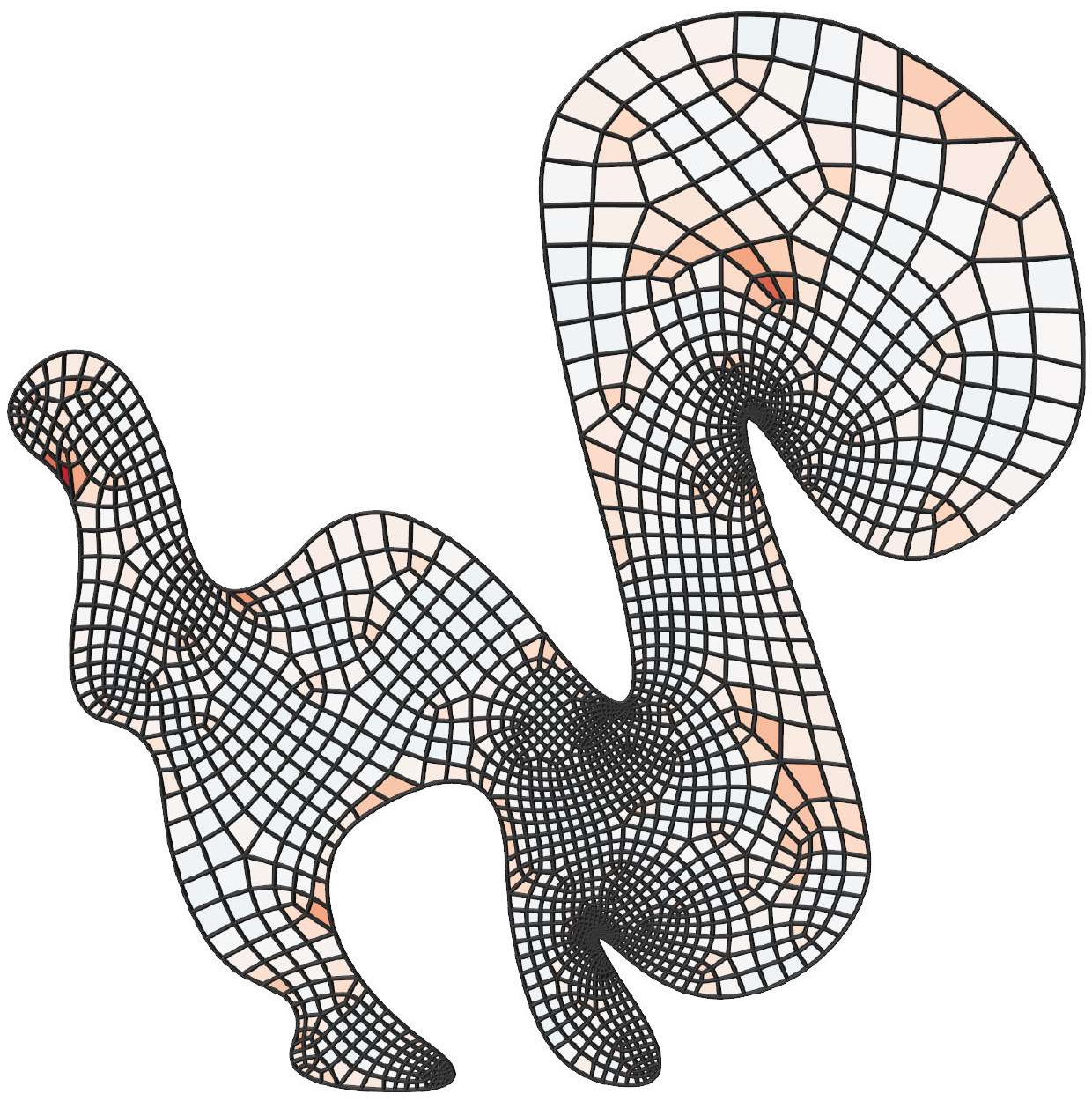}
    \includegraphics[width=.23\textwidth]{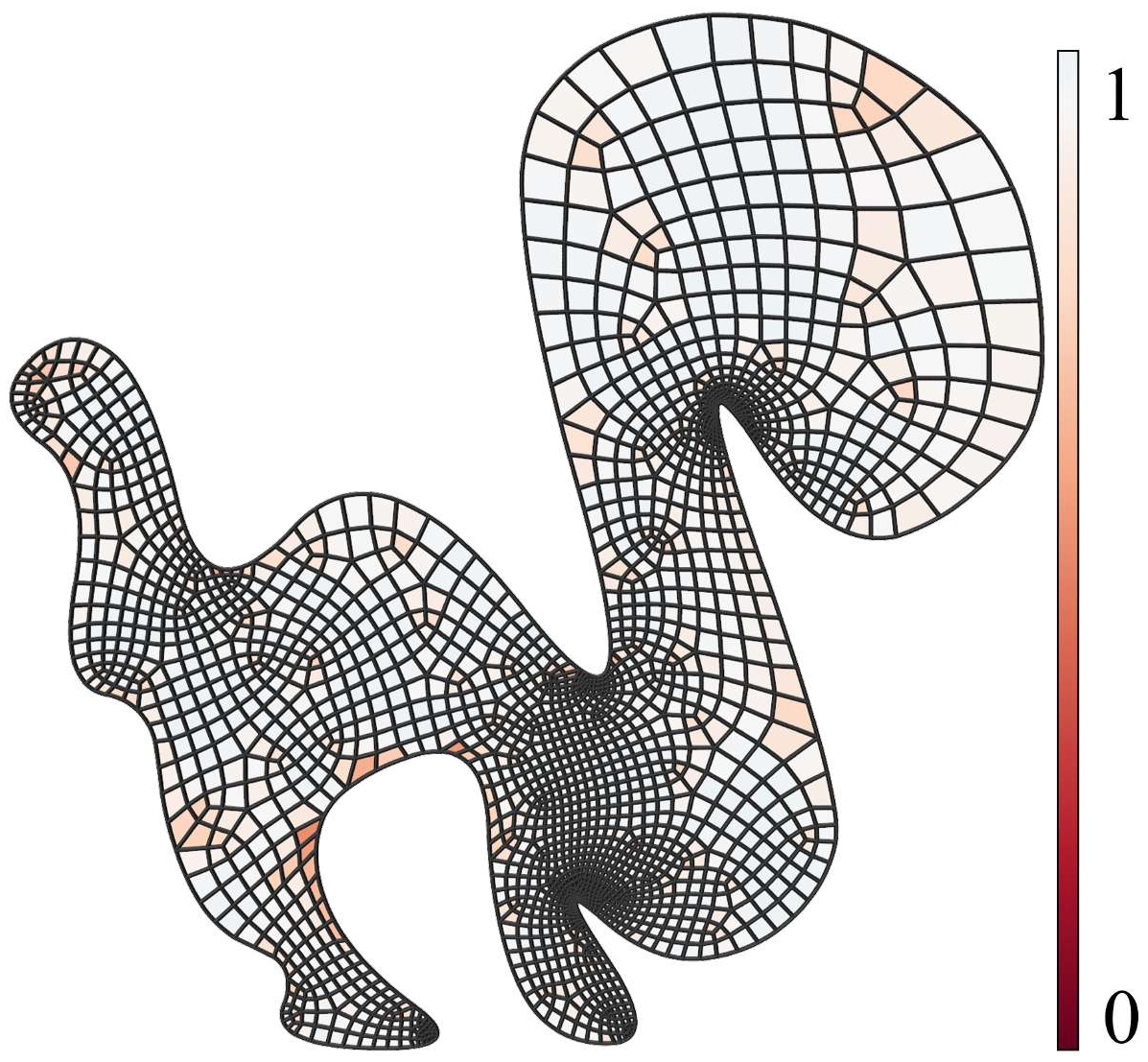}

    \parbox[t]{.2\textwidth}{\centering
            (a)
           }
  \parbox[t]{.2\textwidth}{\centering
           (b) 
           }
  \parbox[t]{.2\textwidth}{\centering
           (c)
           }
  \parbox[t]{.25\textwidth}{\centering
           (d) 
           }
           
  \caption{\label{ablation study figure}
           Ablation study ($n=2$). The color of element represents its minimum shape measure $J_m$. Inverted elements in (b) are highlighted in dark red.}
\end{figure}

\subsection{Comparative experiments}
We evaluate our method against linear quadrilateral meshing, the quadtree-based HOHQMesh~\cite{kopriva2024hohqmesh}, parametrization-based methods~\cite{xu2011parameterization,xu2013constructing,nian2016planar,pan2018low,ji2021constructing,wang2022tcb}, and the high-order mesh optimization approach TMOP~\cite{dobrev2019target}, demonstrating its advantages in mesh quality, interface conformity, and computational efficiency.

\textbf{Comparison with linear quadrilateral meshes.} We compare adaptive linear and high-order quadrilateral meshes under comparable geometric accuracy. The Hausdorff distances to the input curves are $0.0273$ and $0.0267$, respectively. As shown in Fig.~\ref{vs_ho_linear}(a), the linear mesh contains $3,713$ elements and $3,859$ nodes, whereas the high-order mesh in Fig.~\ref{vs_ho_linear}(b) has $536$ elements and $2,259$ nodes. This demonstrates that high-order quadrilateral meshes achieve superior geometric approximation with substantially fewer elements.

\begin{figure}
  \centering
\includegraphics[width=.3\textwidth]{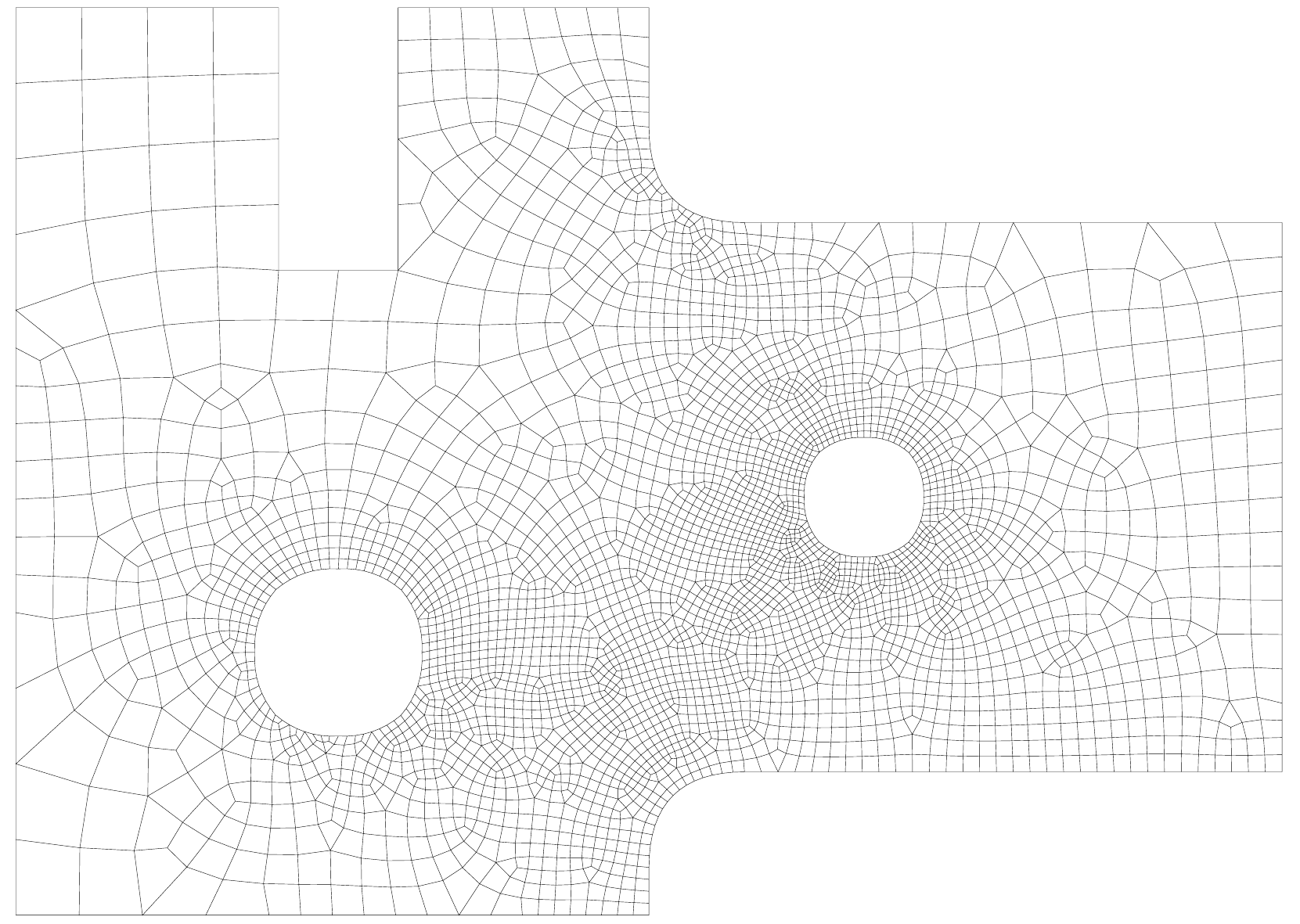}
\includegraphics[width=.3\textwidth]{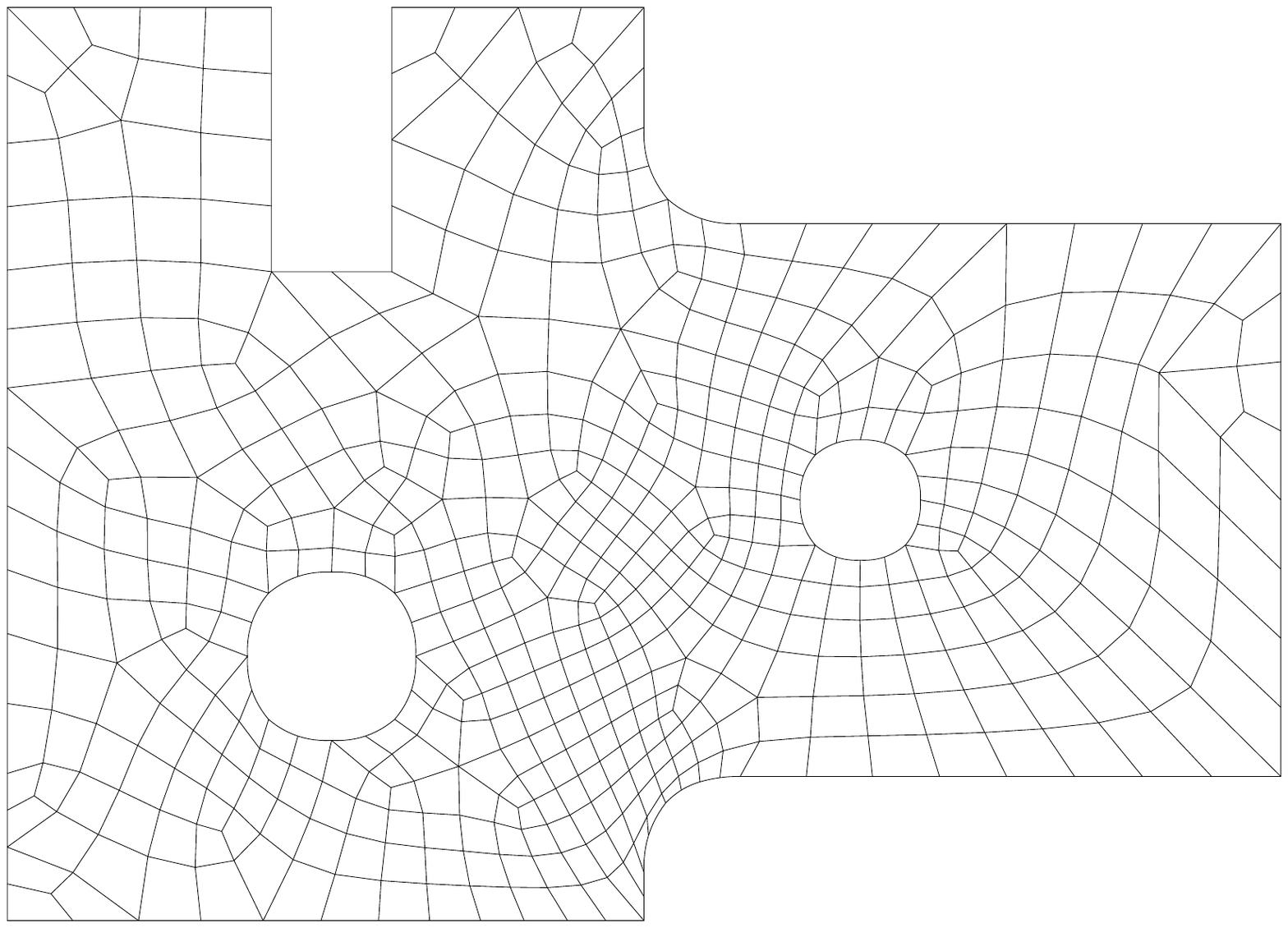}
  \caption{\label{vs_ho_linear}
           Comparison between linear (left) and quadratic (right) quadrilateral meshes.}
\end{figure}

\textbf{Comparison with HOHQMesh~\cite{kopriva2024hohqmesh}.} HOHQMesh generates unstructured all-quadrilateral and hexahedral meshes with high-order boundary information using a quadtree-based template subdivision. While effective for simple geometries, it struggles with complex models containing interfaces, often producing low-quality elements near interface regions and currently cannot handle non-closed interfaces (Fig.~\ref{data set with interface}). Fig.~\ref{vs_hohq_our} compares interface meshes generated by HOHQMesh and our method. HOHQMesh produces lower-quality elements near interfaces due to quadtree refinement, whereas our method maintains higher quality. Table~\ref{vs_HOHQMesh_data_quality} quantitatively confirms that our approach achieves superior mesh quality at comparable computational cost.

\begin{figure}
  \centering
\includegraphics[width=.95\textwidth]{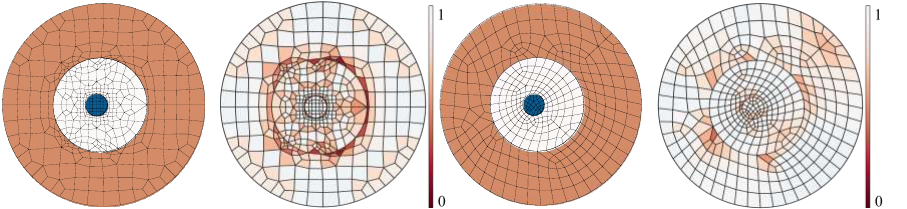}

\parbox[t]{.24\textwidth}{\centering
            (a) HOHQMesh (interface)
           }
  \parbox[t]{.24\textwidth}{\centering
           (b) HOHQMesh (quality)
           }
  \parbox[t]{.24\textwidth}{\centering
           (c) Our method (interface)
           }
  \parbox[t]{.24\textwidth}{\centering
           (d) Our method (quality)
           }
  \caption{\label{vs_hohq_our}
          Comparison between our method and HOHQMesh on an interface model ($n=2$): (a,c) interface meshes (colors denote materials) and (b,d) corresponding quality visualizations, with element color indicating the minimum shape measure $J_m$.}
\end{figure}

\begin{table} % 使用 table* 让表格跨两列
  \centering
  \caption{The mesh quality data of the example in Fig. \ref{vs_hohq_our} is presented.}
  \small
  \begin{tabular}{cccccccc}
    \toprule
Method & Number of elements & Times$/s$ & $\min J_{m}$&$\text{avg } J_{m}$ & $\min J_{k}$ & $\text{avg } J_{k}$ & $n_s$ \\
    \midrule
    HOHQMesh&464 &\textbf{0.05}
 & 0.18& 0.89 & 0.19& 0.95 & 163 \\
    Our method& \textbf{404}&0.06&\textbf{0.42}&\textbf{0.94}&\textbf{0.47}&\textbf{0.96} & \textbf{42} \\
    \bottomrule
    \multicolumn{8}{@{}l@{}}{\footnotesize Results in bold indicate superior performance.} \\
  \end{tabular}
  \label{vs_HOHQMesh_data_quality}
\end{table}

\textbf{Comparison with parametric methods.} Parametric methods for isogeometric analysis \cite{xu2011parameterization,xu2013constructing,nian2016planar,pan2018low,ji2021constructing,wang2022tcb} generate high-order quadrilateral meshes by constructing smooth mappings from physical domains to parametric domains. These methods typically enforce inter-element smoothness (e.g., $C^1$ continuity), which can hinder performance on complex geometries. In contrast, our approach avoids such constraints, producing high-quality, inversion-free, purely quadrilateral meshes that robustly handle complex boundaries and topologies.  

\begin{figure}
  \centering
\includegraphics[width=.16\textwidth]{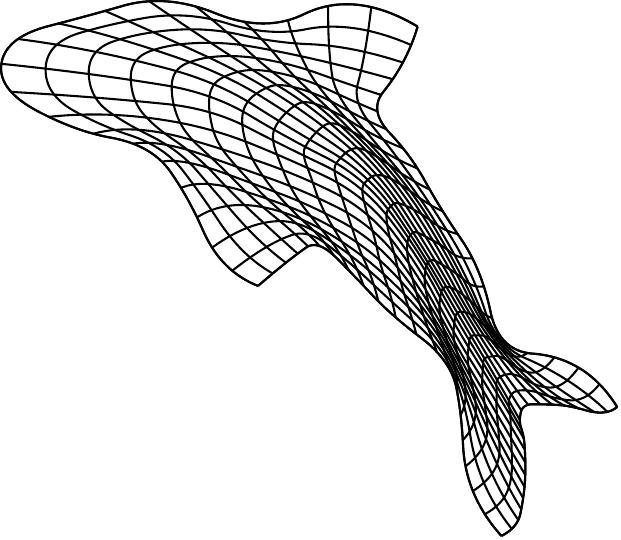}
\includegraphics[width=.16\textwidth]{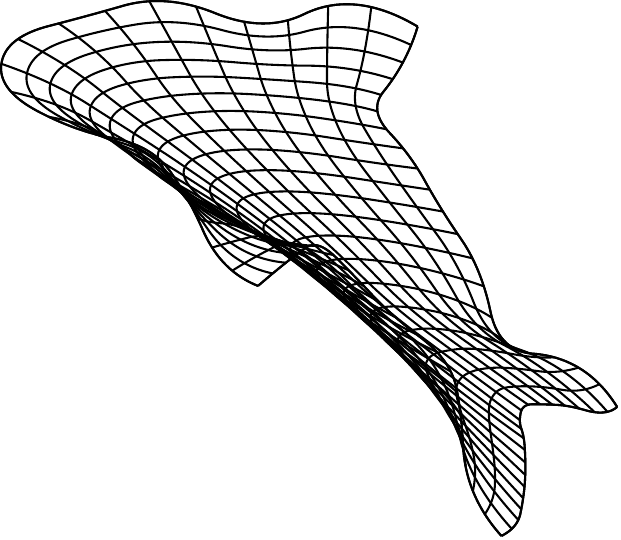}
\includegraphics[width=.16\textwidth]{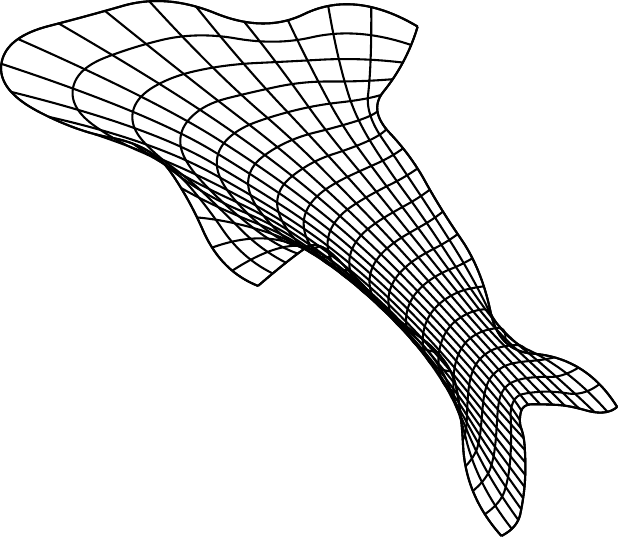}
\includegraphics[width=.16\textwidth]{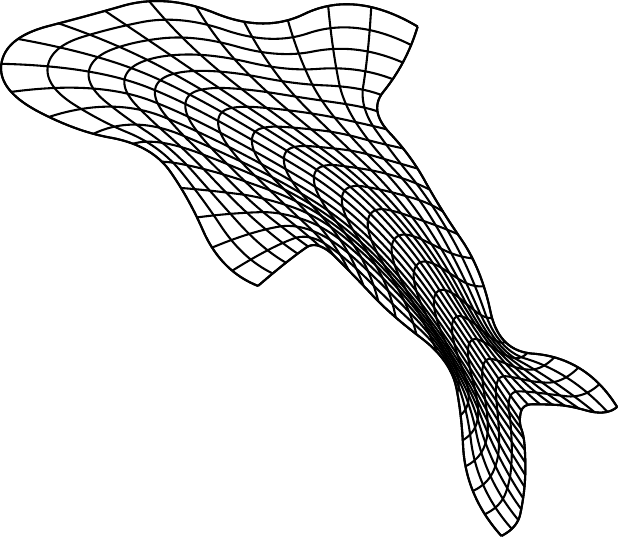}
\includegraphics[width=.16\textwidth]{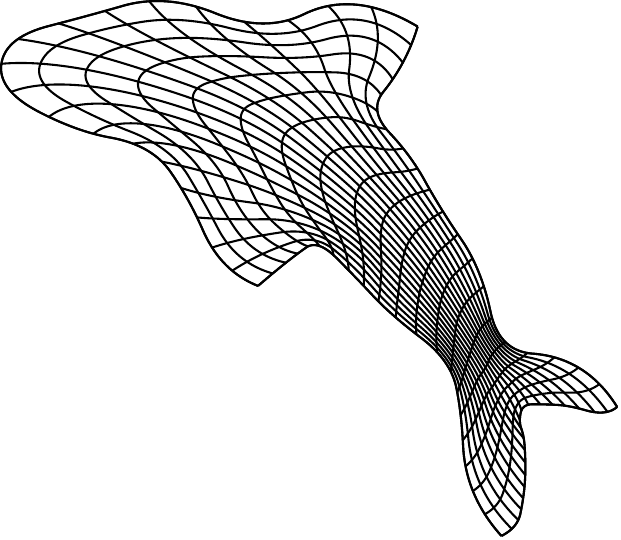}
\includegraphics[width=.16\textwidth]{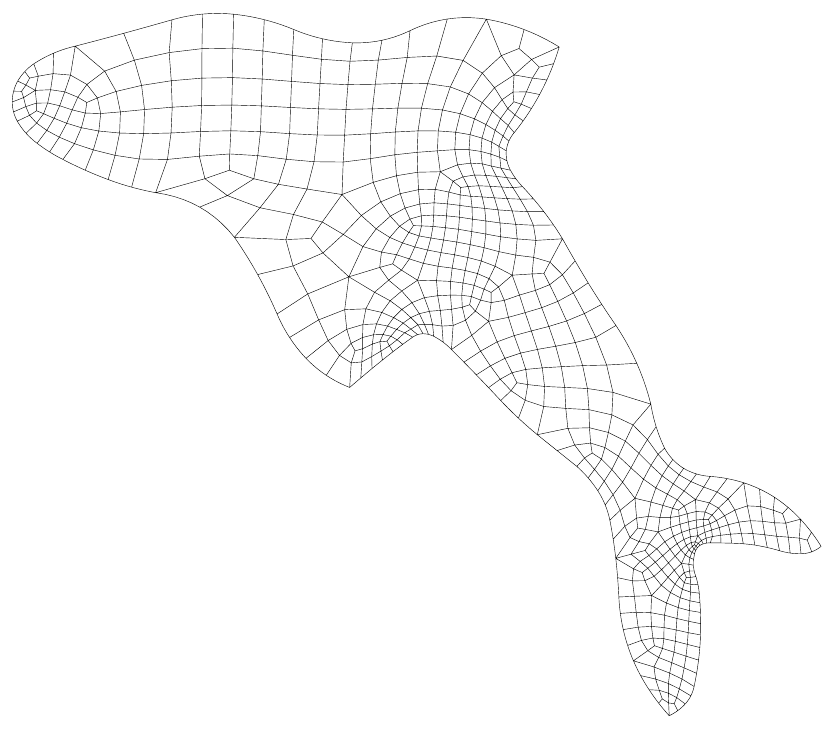}

     \parbox[t]{.16\textwidth}{\centering
            (a) 
           }
  \parbox[t]{.16\textwidth}{\centering
           (b) 
           }
   \parbox[t]{.16\textwidth}{\centering
           (c) 
           }
     \parbox[t]{.16\textwidth}{\centering
            (d) 
           }
  \parbox[t]{.16\textwidth}{\centering
           (e) 
           }
   \parbox[t]{.16\textwidth}{\centering
           (f) 
           }
           
  \caption{\label{vs_parametric_our}
          Comparison on the dolphin model with the parametric methods: (a) NCO~\cite{xu2011parameterization}, (b) VH~\cite{xu2013constructing}, (c) T-Map~\cite{nian2016planar}, (d) LRQC~\cite{pan2018low}, (e) CWA~\cite{ji2021constructing}, (f) our method ($n=2$).}
\end{figure}

\begin{table} % 使用 table* 让表格跨两列
  \centering
  \caption{Display of mesh quality data for dolphin example.}
  \small
  \begin{tabular}{ccccccc}
    \toprule
     & NCO~\cite{xu2011parameterization} & VH~\cite{xu2013constructing} & T-map~\cite{nian2016planar} & LRQC~\cite{pan2018low} & CWA~\cite{ji2021constructing} & Our \\
    \midrule
 $\text{min }J_{m}$&0.09 &-0.65
 & -0.78 & 0.11 & 0.18& \textbf{0.40} \\
    $\text{avg }J_{m}$& 0.42& 0.36&0.45&0.43&0.46& \textbf{0.95} \\
    \bottomrule
    \multicolumn{7}{@{}l@{}}{\footnotesize Results in bold indicate superior performance.} \\
  \end{tabular}
  \label{vs_parameterize_method}
\end{table}

\begin{figure}
  \centering
\includegraphics[width=.35\textwidth]{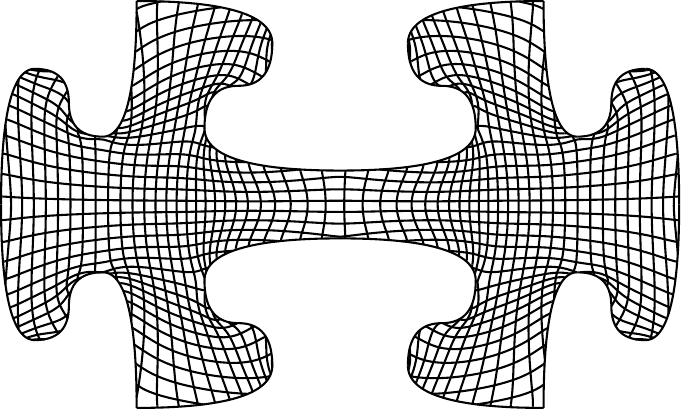}
\includegraphics[width=.35\textwidth]{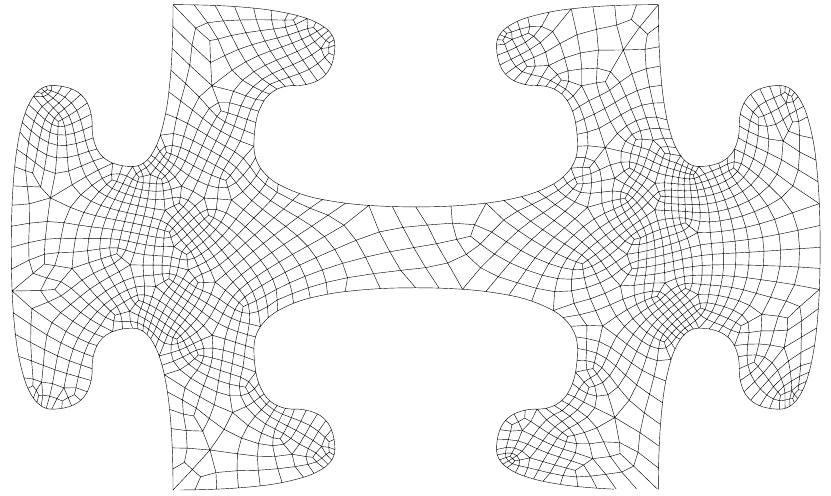}

     \parbox[t]{.35\textwidth}{\centering
            (a) 
           }
  \parbox[t]{.35\textwidth}{\centering
           (b) 
           }
           
  \caption{\label{vs_tcb_our}
           Comparison with TCB-spline-based parameterization. (a) TCB-spline result; (b) our method ($n=2$).
}
\end{figure}

\begin{table} % 使用 table* 让表格跨两列
  \centering
  \caption{Mesh quality comparison with TCB-spline-based method on 10 models.}
  \small
  \begin{tabular}{ccccccccccccc}
    \toprule
      & Method & Butterfly & Dragon & Jigsaw & Penguin & Dog & Scepter & Squirrel & Dolphin & Rabbit & Leaf\\
    \midrule
    $\text{min }J_{m}$ & TCB-spline & 0.16 & 0.13 & 0.12 & 0.31 & 0.12 & 0.20 & 0.19 & \textbf{0.40} & 0.13 & 0.10\\
                    & Our method & \textbf{0.46} & \textbf{0.33} & \textbf{0.45} & \textbf{0.47} & \textbf{0.49} & \textbf{0.51}& \textbf{0.38} &  \textbf{0.40} & \textbf{0.53} & \textbf{0.37}\\
    \midrule
    $\text{avg }J_{m}$ & TCB-spline & 0.97 & 0.93 & 0.89 & \textbf{0.98} & \textbf{0.97} & 0.93 & 0.95 & \textbf{0.98} & 0.96 & 0.95\\
                    & Our method & \textbf{0.98} & \textbf{0.97} & \textbf{0.96} & 0.97 & \textbf{0.97} & \textbf{0.97} & \textbf{0.97} & 0.95 & \textbf{0.97} & \textbf{0.97}\\
     \midrule
     $\text{min }J_{k}$ & TCB-spline & 0.37 & 0.23 & \textbf{0.55} & \textbf{0.47} & 0.23 & 0.22 & 0.23 & \textbf{0.46} & 0.17 & 0.20\\
                    & Our method & \textbf{0.48} & \textbf{0.33} & 0.45 & \textbf{0.47} & \textbf{0.52} & \textbf{0.60}& \textbf{0.40} &  0.40 & \textbf{0.53} & \textbf{0.51}\\
    \midrule
    $\text{avg }J_{k}$ & TCB-spline & 0.98 & 0.96 & 0.94 & \textbf{0.99} & \textbf{0.98} & 0.96 & \textbf{0.98} & \textbf{0.99} & \textbf{0.98} & 0.97\\
                    & Our method & \textbf{0.99} & \textbf{0.98} & \textbf{0.97} & 0.98 & \textbf{0.98} & \textbf{0.98} & \textbf{0.98} & 0.97 & \textbf{0.98} & \textbf{0.98}\\
    \bottomrule
    \multicolumn{12}{@{}l@{}}{\footnotesize Results in bold indicate superior performance.} \\
  \end{tabular}
  \label{vs_tcb_data}
\end{table}

Fig.~\ref{vs_parametric_our} compares meshes generated by our method and various B-spline-based parameterization methods \cite{xu2011parameterization,xu2013constructing,nian2016planar,pan2018low,ji2021constructing}, and Table~\ref{vs_parameterize_method} reports the corresponding shape quality $J_m$. Our method achieves higher minimum and average $J_m$, indicating significantly lower distortion. The TCB-spline-based method \cite{wang2022tcb} supports arbitrary polygonal domains, improving mapping flexibility. However, as shown in Fig.~\ref{vs_tcb_our}, it introduces numerous non-quadrilateral elements along physical boundaries, whereas our method maintains a pure quadrilateral topology. Table~\ref{vs_tcb_data} summarizes experiments on 10 representative models, demonstrating superior mesh quality and robustness for complex boundaries, with only a few simple models showing comparable performance.

\textbf{Comparison with Gmsh.} \textcolor{black}{Fig.~\ref{using_our_vs_gmsh} shows a comparison between the open-source software Gmsh~\cite{geuzaine2009gmsh} and the proposed algorithm for high-order quadrilateral mesh generation. During mesh generation, Gmsh does not explicitly consider the relationship between boundary segmentation and high-order element quality, resulting in a minimum element quality $\min J_m=0.015$. In contrast, the proposed algorithm accounts for the interplay among curve partitioning, high-order element quality, and bounded geometric error during boundary segmentation, leading to a significantly improved mesh with a minimum element quality of $J_m=0.47$.}

\begin{figure}
  \centering
\includegraphics[width=.9\textwidth]{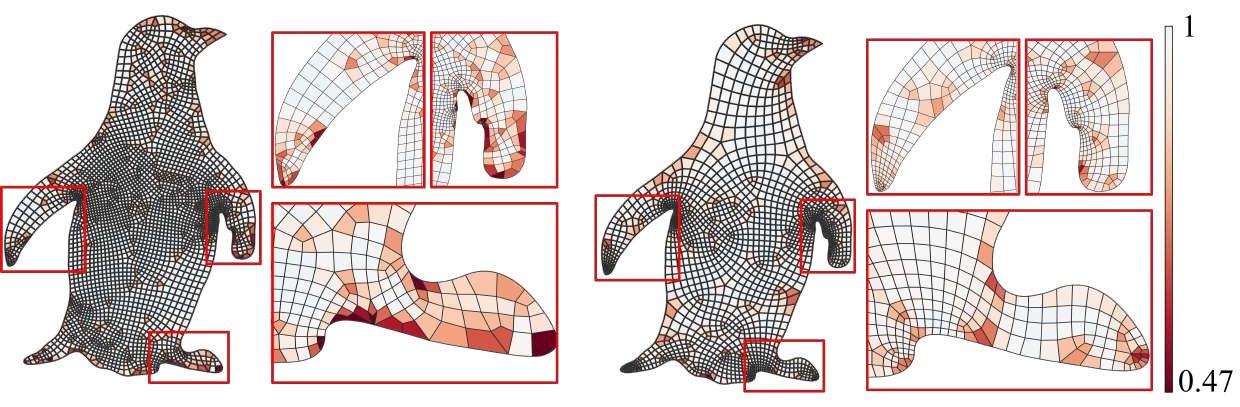}
           
  \caption{\label{using_our_vs_gmsh}
           Comparison of remeshing results ($n=2$) using Gmsh (left, 3,302 elements, $\min J_m = 0.015$) and our method (right, 2,046 elements, $\min J_m = 0.47$). The minimum shape measure $J_m$ is color-coded, with values below $0.47$ highlighted in dark red.}
\end{figure}

\textbf{Comparison with TMOP method.} TMOP~\cite{dobrev2019target} is a high-order mesh optimization method that improves mesh quality by minimizing an energy functional measuring the deviation between target and current element matrices. Fig.~\ref{remeshing_using_our} compares applying TMOP directly to a given high-order quadrilateral mesh versus remeshing it with our boundary/interface-aware approach. For the input mesh (Figs.~\ref{remeshing_using_our}(a–b)), TMOP preserves boundary/interface vertices and only relocates interior vertices without altering topology, yielding limited quality improvement for complex boundaries (Fig.~\ref{remeshing_using_our}(c)). In contrast, our geometry-preserving curve reconstruction-based remeshing significantly enhances mesh quality (Fig.~\ref{remeshing_using_our}(d)). By reformulating high-order optimization as boundary curve reconstruction, we reduce computational complexity—achieving a runtime of $0.11$s versus TMOP’s $1.18$s. Moreover, as shown in Fig.~\ref{using_our_to_opt}, further optimizing our result with TMOP yields negligible gains: Table~\ref{using_our_mesh_to_opt_quality} shows only a slight increase in average $J_m$ but a decrease in minimum $J_m$. These results demonstrate that our method produces higher-quality high-order quadrilateral meshes more effectively and efficiently than TMOP.

\begin{figure}
  \centering
\includegraphics[width=.24\textwidth]{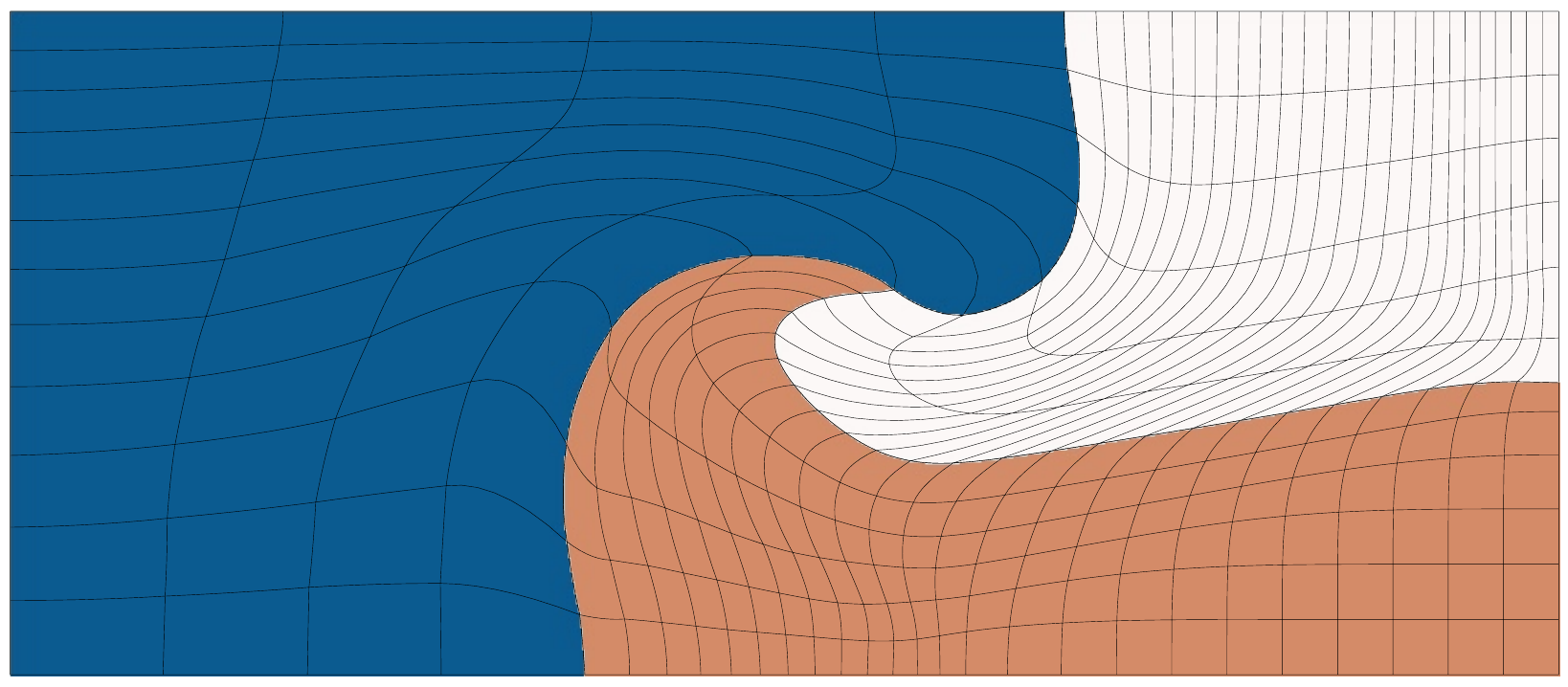}
\includegraphics[width=.24\textwidth]{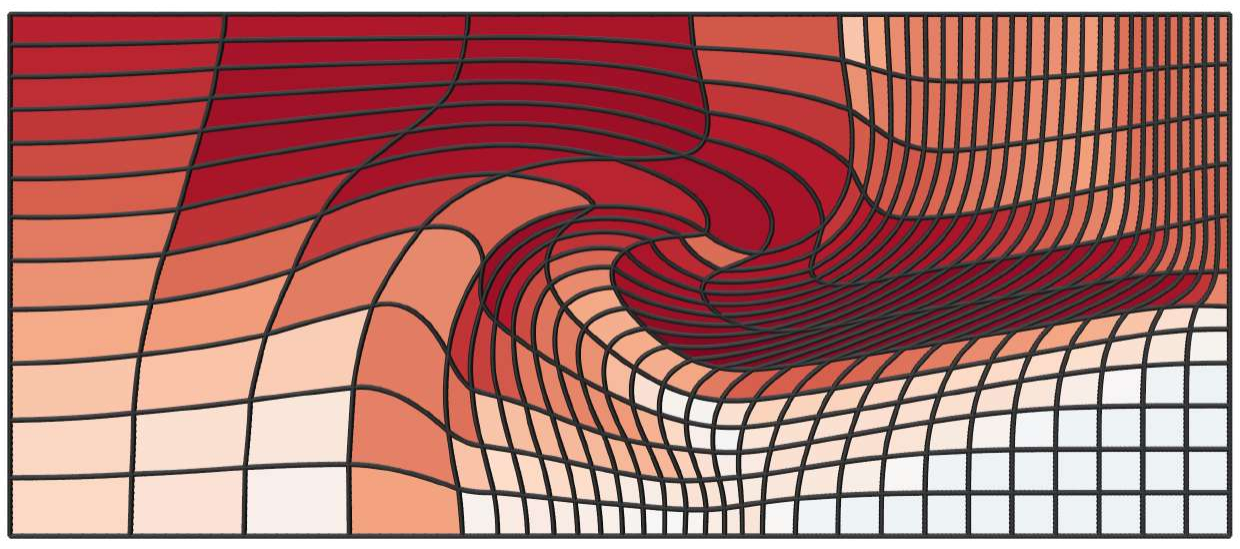}
\includegraphics[width=.24\textwidth]{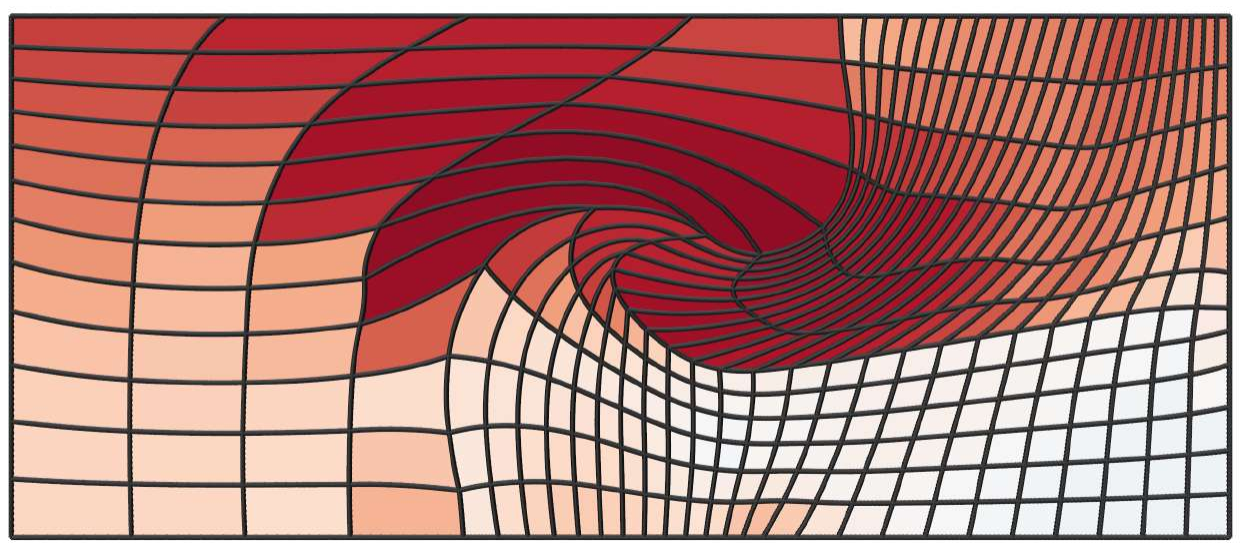}
\includegraphics[width=.257\textwidth]{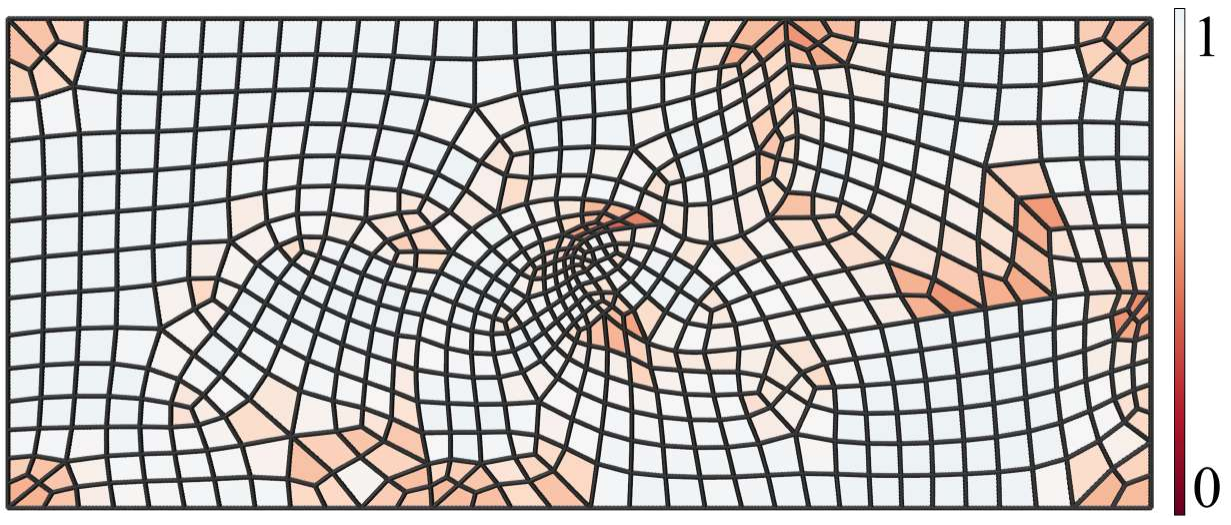}

     \parbox[t]{.24\textwidth}{\centering
            (a) 
           }
  \parbox[t]{.24\textwidth}{\centering
           (b) 
           }
        \parbox[t]{.24\textwidth}{\centering
            (c) 
           }
  \parbox[t]{.25\textwidth}{\centering
           (d) 
           }
           
   \caption{\label{remeshing_using_our}
           Comparison of mesh quality between our method and TMOP. (a) Input quadrilateral mesh colored by material; (b–d) meshes colored by minimum shape measure $J_m$: (b) input mesh, (c) TMOP result, (d) our result ($n=2$).}
\end{figure}

\begin{figure}
  \centering
  % \adjustbox{scale=0.9}{
\includegraphics[width=.30\textwidth]{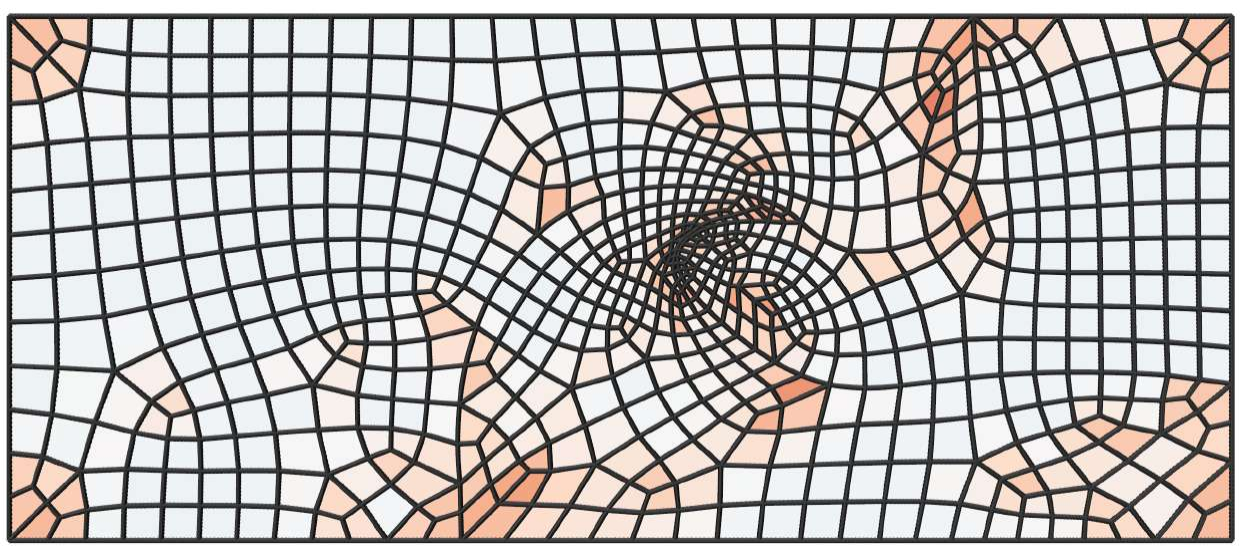}
\includegraphics[width=.315\textwidth]{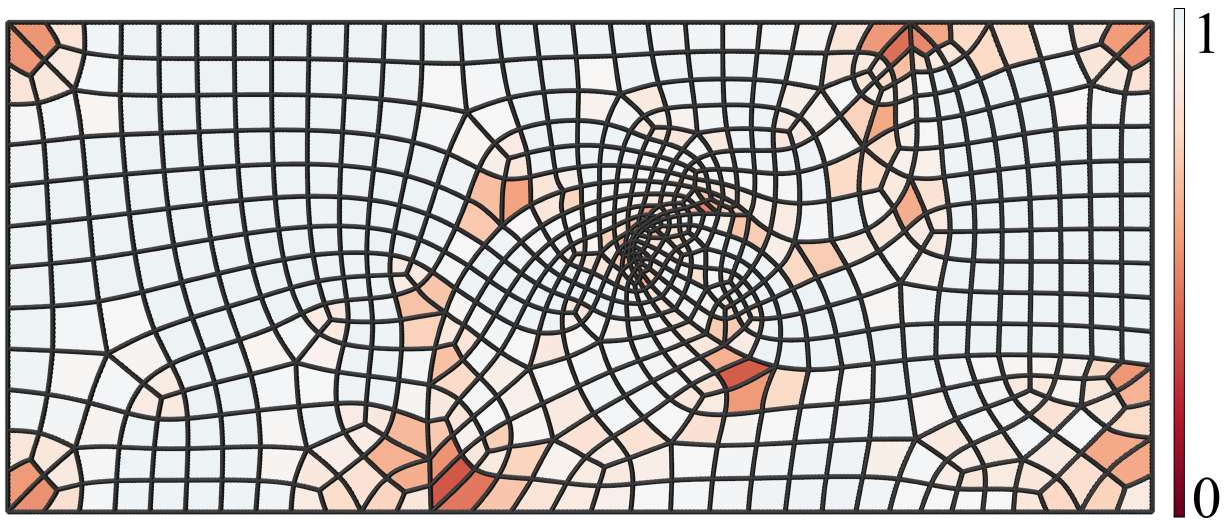}

     \parbox[t]{.30\textwidth}{\centering
            (a) 
           }
  \parbox[t]{.321\textwidth}{\centering
           (b) 
           }
           % }
  \caption{\label{using_our_to_opt}
           Comparison of the remeshing result ($n=2$) using our method (a) and the result after further optimization with TMOP (b). The color of element represents its minimum shape measure $J_m$.}
\end{figure}

\begin{table} % 使用 table* 让表格跨两列
  \centering
  \caption{The mesh quality data of the model in Fig. \ref{using_our_to_opt}.}
  \small
  \begin{tabular}{cccc}
    \toprule
      & Times$/s$ & $\min J_{m}$&$\text{avg } J_{m}$\\
    \midrule
    Our method& -&\textbf{0.48}&0.95 \\
    TMOP & 0.30 & 0.36 & \textbf{0.97}\\
    \bottomrule
    \multicolumn{4}{@{}l@{}}{\footnotesize Results in bold indicate superior performance.} \\
  \end{tabular}
  \label{using_our_mesh_to_opt_quality}
\end{table}

\section{Application in ALE method}

The high-order, high-quality quadrilateral meshes generated by our method are broadly useful for downstream simulations. As an example, in this section we integrate them with a high-order ALE scheme to enable robust, interface-preserving simulations of multi-material large-deformation problems.

High-order Lagrangian method~\cite{dobrev2012high} are a classical approach for multi-material large-deformation hydrodynamics, where the mesh moves with the material—naturally preserving clear interfaces and tracking motion accurately. However, severe mesh distortion or inversion under large deformations can restrict time steps, reduce efficiency, or cause failure. High-order ALE avoids this by decoupling mesh motion from material motion through an independent mesh update. The robustness of high-order ALE critically depends on interface-preserving, high-quality meshes. Existing high-order mesh generation approaches fall into two categories, neither of which fully satisfies this requirement: topology-preserving methods such as TMOP~\cite{dobrev2019target}, which maintain mesh connectivity but inevitably introduce low-quality elements when preserving complex interfaces (see Fig.~\ref{remeshing_using_our}); and topology-changing methods such as hr-adaptivity, which improve flexibility and mesh quality but often blur material interfaces~\cite{dobrev2022hr}. In contrast, our method achieves topology-changing remeshing while preserving sharp interfaces, enhancing the robustness of high-order ALE in multi-material large-deformation simulations.

\begin{figure}
  \centering
\includegraphics[width=1\textwidth]{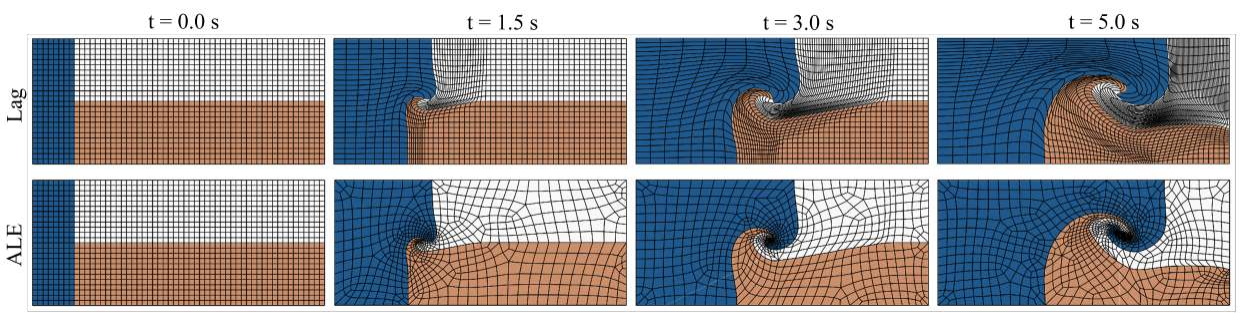}
\caption{\label{ale_application}
           Triple-point problem ($n=2$) solved using the high-order Lagrangian (Lag) method (top) and the high-order ALE method combined with our meshing approach (bottom). The Lag method is computed on the evolving mesh from the beginning, whereas the ALE method remeshes when mesh quality drops, achieving superior results with fewer elements, larger time steps, and shorter simulation times.}
\end{figure}

% ...

\begin{table}
\centering
\caption{Simulation and final mesh quality data for Lagrangian and ALE methods (see Fig.~\ref{ale_application}).}
\label{ale_application_table}
\begin{threeparttable}
\small
\begin{tabular}{ccccccc}
\toprule
         & \multirow{2}{*}{Step} & \multirow{2}{*}{$\min dt$} & \multirow{2}{*}{Time/s} & \multicolumn{3}{c}{Mesh at $t=5$} \\
         &                       &                            &                         & Number of elements & $\min J_m$ & $\text{avg } J_m$ \\
\midrule
Lagrange & 59,986                 & 0.000042                   & 65251.28                & 1,344              & 0.043      & 0.473      \\
ALE      & \textbf{10,269}        & \textbf{0.000279}          & \textbf{25056.66}       & \textbf{1,218}     & \textbf{0.204} & \textbf{0.815} \\
\bottomrule
\end{tabular}
\begin{tablenotes}
\footnotesize
\item Step: total number of time integration steps performed during the simulation; $\min dt$: the smallest time step size encountered over the entire simulation; Time/s: total simulation time. Results in bold indicate superior performance.
\end{tablenotes}
\end{threeparttable}
\end{table}
%\item[]

% \begin{table*}[htb]
% \centering
% \caption{Simulation and final mesh quality data for Lagrangian and ALE methods (see Fig.~\ref{ale_application}).}
%   \small
% \begin{tabular}{ccccccc}
% \toprule
%          & \multirow{2}{*}{Step} & \multirow{2}{*}{$\min dt$} & \multirow{2}{*}{Time/s} & \multicolumn{3}{c}{Mesh at $t=5$}              \\
%          &                       &                            &                         & Number of element & $\min J_m$ & $\text{avg } J_m$ \\
%          \midrule
% Lagrange & 59986                 & 0.000042                   & 65251.28                & 1344              & 0.043      & 0.473      \\
% ALE      & \textbf{10269}                 & \textbf{0.000279}                   & \textbf{25056.66}                & \textbf{1218}              & \textbf{0.204}      & \textbf{0.815}   \\  
% \bottomrule
% \multicolumn{7}{@{}p{\dimexpr\linewidth-2\tabcolsep}@{}}{%
%   \footnotesize
%   {\color{orange}Step: total number of time integration steps performed during the simulation; 
%   $\min \Delta t$: the smallest time step size encountered over the entire simulation; }
%   Time/s: total simulation time. Results in bold indicate superior performance.
% }\\
% \end{tabular}
% \label{ale_application_table}
% \end{table*}

We validate our method on the triple-point problem~\cite{galera2010two,kucharik2014conservative}, a two-material, three-state Riemann problem that generates vorticity and serves as a classic 2D multi-material hydrodynamics benchmark. This problem tests mesh distortion handling by producing fine-scale vortex structures that become increasingly intricate with mesh refinement~\cite{kenamond2021intersection}. Fig.~\ref{ale_application} shows that high-order ALE coupled with our mesh generation method preserves clear interfaces and resolves finer vortices compared to the pure Lagrangian approach. Table~\ref{ale_application_table} further demonstrates that high-order ALE with our adaptive meshes achieves superior results with fewer elements, larger time steps, fewer steps, and shorter simulation times.

\section{Conclusion}

In this paper, we presents an efficient method for generating high-quality, high-order quadrilateral meshes with bounded geometric error, based on curve reconstruction. By analyzing the relationship between boundary curve distribution and mesh quality, we propose a curve reconstruction strategy that preserves geometric accuracy, transforming the challenging mesh optimization problem into a more tractable boundary reconstruction task, thereby significantly improving efficiency. Experimental results demonstrate that the proposed method efficiently generates high-quality, high-order quadrilateral meshes with guaranteed geometric accuracy and supports complex models with interfaces. We further apply the our meshing method to ALE simulations, demonstrating enhanced robustness of high-order ALE and its ability to avoid numerical errors caused by inverted or low-quality elements in pure Lagrangian methods.

\textbf{Limitation and future work.} Proposed method is limited to cases where the input 2D curve degree matches the output mesh order, and future work will extend to generating high-order meshes with geometric error bounds for mismatched degrees, as well as to 3D domains. Moreover, although adaptive refinement and endpoint optimization rarely increase geometric error, we ensure the error never exceeds the prescribed threshold by checking and bisecting any curve segment that violates it after each step. Our method produces high-quality, inversion-free high-order quadrilateral meshes for all models in our dataset, but lacks rigorous theoretical guarantees against inverted or low-quality elements. Future work will focus on generating provably valid, high-quality quadrilateral meshes. Furthermore, our method has been extensively tested for moderate orders ($n \leq 4$) and its scalability has been demonstrated for higher orders (e.g., $n = 10$). However, the use of equidistant nodes in the current formulation may lead to numerical instability for very high-order meshes. In future work, we plan to adopt non-equidistant node distributions, such as Chebyshev nodes, to improve numerical stability in very-high-order mesh generation.

%% else use the following coding to input the bibitems directly in the
%% TeX file.

%% Refer following link for more details about bibliography and citations.
%% https://en.wikibooks.org/wiki/LaTeX/Bibliography_Management

% \begin{thebibliography}{00}

%% For numbered reference style
% \bibitem{egbibsample}
%% Text of bibliographic item

% \bibitem{lamport94}
%   Leslie Lamport,
%   \textit{\LaTeX: a document preparation system},
%   Addison Wesley, Massachusetts,
%   2nd edition,
%   1994.

% \end{thebibliography}

%% The Appendices part is started with the command \appendix;
%% appendix sections are then done as normal sections
\appendix
\section{Proof of Lemma \ref{lemma 1}}\label{Proof of Lemma 1}
\begin{proof}
By the definition of Hausdorff distance,  
\begin{equation}
d_H(\mathbf{l}_1, \mathbf{l}_2) = \max\left\{\adjustlimits\sup_{\mathbf{x} \in \mathbf{l}_1} \inf_{\mathbf{y} \in \mathbf{l}_2} \|\mathbf{x} - \mathbf{y}\|_2,\,\adjustlimits\sup_{\mathbf{y} \in \mathbf{l}_2} \inf_{\mathbf{x} \in \mathbf{l}_1} \|\mathbf{x} - \mathbf{y}\|_2 \right\},
\end{equation}
which implies  
\begin{equation}
d_H(\mathbf{l}_1, \mathbf{l}_2) \leq \max_t \|\mathbf{l}_1(t) - \mathbf{l}_2(t)\|_2.
\end{equation} 
Expressing the B\'ezier curves in terms of Bernstein basis functions $B_i^n(t)$, we have  
\begin{equation}
\mathbf{l}_j(t) = \sum_{i=0}^n \mathbf{Q}_{ij} B_i^n(t), \quad j = 1, 2.
\end{equation} 
Substituting this into the norm yields  
\begin{align}
\max_t \|\mathbf{l}_1(t) - \mathbf{l}_2(t)\|_2 
&= \max_t \left\| \sum_{i=0}^n (\mathbf{Q}_{i1} - \mathbf{Q}_{i2}) B_i^n(t) \right\|_2 \\
&\leq \max_t \sum_{i=0}^n B_i^n(t) \|\mathbf{Q}_{i1} - \mathbf{Q}_{i2}\|_2 \\
&\leq \max_i \|\mathbf{Q}_{i1} - \mathbf{Q}_{i2}\|_2,
\end{align}
where the inequality follows from the convexity of the norm and the fact that $\sum_{i=0}^n B_i^n(t) = 1$, $B_i^n(t) \geq 0$.  
Therefore, $d_H(\mathbf{l}_1, \mathbf{l}_2) \leq d_{m}(\mathbf{l}_{1},\mathbf{l}_{2})$, as required. 
\end{proof}

\section{Proof of Lemma \ref{lemma 2}}\label{Proof of Lemma 2}
\begin{proof}
We observe that $\mathbf{l}_2^1 \subseteq \mathbf{l}_2$ implies  
\begin{equation}
\adjustlimits\sup_{\mathbf{x} \in \mathbf{l}_1^1} \inf_{\mathbf{y} \in \mathbf{l}_2} d(\mathbf{x}, \mathbf{y}) \leq \adjustlimits\sup_{\mathbf{x} \in \mathbf{l}_1^1} \inf_{\mathbf{y} \in \mathbf{l}_2^1} d(\mathbf{x}, \mathbf{y}).
\end{equation} 
% \begin{equation}
% \sup_{x \in l_1^1} \inf_{y \in l_2} d(x, y) = \min\{\sup_{x \in l_1^1} \inf_{y \in l_2^1} d(x, y),\sup_{x \in l_1^1} \inf_{y \in l_2^2} d(x, y)\}.
% \end{equation} 
Similarly, $\mathbf{l}_2^2 \subseteq \mathbf{l}_2$ yields  
\begin{equation}
\adjustlimits\sup_{\mathbf{x} \in \mathbf{l}_1^2} \inf_{\mathbf{y} \in \mathbf{l}_2} d(\mathbf{x}, \mathbf{y}) \leq \adjustlimits\sup_{\mathbf{x} \in \mathbf{l}_1^2} \inf_{\mathbf{y} \in \mathbf{l}_2^2} d(\mathbf{x}, \mathbf{y}).
\end{equation} 
% \begin{equation}
% \sup_{x \in l_1^2} \inf_{y \in l_2} d(x, y) =\min\{ \sup_{x \in l_1^2} \inf_{y \in l_2^1} d(x, y),\sup_{x \in l_1^2} \inf_{y \in l_2^2} d(x, y)\}.
% \end{equation}
Therefore,  
\begin{align}
\adjustlimits\sup_{\mathbf{x} \in \mathbf{l}_1} \inf_{\mathbf{y} \in \mathbf{l}_2} d(\mathbf{x}, \mathbf{y}) 
&= \max\left\{ \adjustlimits\sup_{\mathbf{x} \in \mathbf{l}_1^1} \inf_{\mathbf{y} \in \mathbf{l}_2} d(\mathbf{x}, \mathbf{y}),\, \adjustlimits\sup_{\mathbf{x} \in \mathbf{l}_1^2} \inf_{\mathbf{y} \in \mathbf{l}_2} d(\mathbf{x}, \mathbf{y}) \right\} \\
&\leq \max\left\{\adjustlimits\sup_{\mathbf{x} \in \mathbf{l}_1^1} \inf_{\mathbf{y} \in \mathbf{l}_2^1} d(\mathbf{x}, \mathbf{y}),\, \adjustlimits\sup_{\mathbf{x} \in \mathbf{l}_1^2} \inf_{\mathbf{y} \in \mathbf{l}_2^2} d(\mathbf{x}, \mathbf{y})\} \right\}.
\end{align}
Analogously,  
\begin{equation}
\adjustlimits\sup_{\mathbf{y} \in \mathbf{l}_2} \inf_{\mathbf{x} \in \mathbf{l}_1} d(\mathbf{x}, \mathbf{y}) \leq \max\left\{ \adjustlimits\sup_{\mathbf{y} \in \mathbf{l}_2^1} \inf_{\mathbf{x} \in \mathbf{l}_1^1} d(\mathbf{x}, \mathbf{y}),\, \adjustlimits\sup_{\mathbf{y} \in \mathbf{l}_2^2} \inf_{\mathbf{x} \in \mathbf{l}_1^2} d(\mathbf{x}, \mathbf{y}) \right\}.
\end{equation} 
Combining both bounds, we conclude  
\begin{equation}
d_H(\mathbf{l}_1, \mathbf{l}_2) \leq \max\left\{ d_H(\mathbf{l}_1^1, \mathbf{l}_2^1),\, d_H(\mathbf{l}_1^2, \mathbf{l}_2^2) \right\}.
\end{equation} 
Based on the splitting scheme in Algorithm 1 and Lemma \ref{lemma 1}, we can reuse the above proof process to obtain
\begin{align}
d_H(\mathbf{l}_1, \mathbf{l}_2) 
&\leq \max\left\{ d_H(\mathbf{l}_1^1, \mathbf{l}_2^1),\, d_H(\mathbf{l}_1^2, \mathbf{l}_2^2) \right\} \\
&\leq \max_{i}\left\{ d_H(\mathbf{l}_1^i, \mathbf{l}_2^i) \right\} \label{equation in lemma 2}\\
&\leq \max_{i}\left\{ d_m(\mathbf{l}_1^i, \mathbf{l}_2^i) \right\}.
\end{align}

% Expressing the B\'ezier curves in terms of Bernstein basis functions $B_i^n(t)$, we have  
% \begin{equation}
% \mathbf{l}_j^{k}(t) = \sum_{i=0}^n \mathbf{Q}_{ij}^{k} B_i^n(t), \quad j = 1, 2.
% \end{equation} 

% By the de Casteljau algorithm, the control vertices after splitting are convex combinations of the original curve's control points, 
% \begin{equation}
% \mathbf{Q}_{i_{0}j}^{k} = \sum_{i=0}^{i_0}\alpha_{ij}^{k} \mathbf{Q}_{ij}, \quad \alpha_{ij}^{k}\geq 0 ,\quad\sum_{i=0}^{i_0}\alpha_{ij}^{k}=1.
% \end{equation} 
% Thus, the distance between corresponding control vertices after splitting is less than or equal to that before splitting,
% \begin{align}
% \|\mathbf{Q}_{i_{0}1}^{k}-\mathbf{Q}_{i_{0}2}^{k}\|_{2}
% &=\|\sum_{i=0}^{i_0}\alpha_{i1}^{k} \mathbf{Q}_{i1}-\sum_{i=0}^{i_0}\alpha_{i2}^{k} \mathbf{Q}_{i2}\|_{2}\\
% &\leq \sum_{i=0}^{i_0}\sum_{i=0}^{i_0}\alpha_{i1}^{k}\alpha_{i2}^{k}\|\mathbf{Q}_{i1}-\mathbf{Q}_{i2}\|_{2}\\
% &\leq \max_i{\|\mathbf{Q}_{i1}-\mathbf{Q}_{i2}\|_{2}}.
% \end{align}

Finally, as the number of subdivisions increases ($k \to \infty$), the maximum distance between corresponding control points, $b_k = \min\{b_{k-1},\max\limits_i\{d_m(\mathbf{l}_1^i, \mathbf{l}_2^i)\}\}$, decreases monotonically and is bounded below by $d_H(\mathbf{l}_1, \mathbf{l}_2)$. 
% Hence, by the monotone convergence theorem, the sequence $\{b_k\}$ converges.
\end{proof}

\section{Proof of Theorem \ref{hausdorff geometric error theorem}}\label{proof of theorem hausdorff geometric error}
\begin{proof}
According to Algorithm \ref{alg:Framwork}, we have $ d_{H}(\mathbf{l}_{1},\mathbf{l}_{2}) < b_{k} $ and $ d_{H}(\mathbf{l}_{1},\mathbf{l}_{2}) < b'_{k} $, from which it follows directly that $ d_{H}(\mathbf{l}_{1},\mathbf{l}_{2}) < \hat{b}_{k} $. 
\end{proof}

\section{Proof of Theorem \ref{hausdorff geometric error theorem for multicurve}}\label{proof of theorem hausdorff geometric error}
\begin{proof}
According to the Eq.(\ref{equation in lemma 2}) of Lemma \ref{lemma 2} and Theorem \ref{hausdorff geometric error theorem}, it follows that $ d_H(\mathcal{A}, \mathbf{l}) \leq \max\limits_{i}\{d_{H}(\mathbf{l}_i, \mathbf{l}^*_i)\} \leq \max\limits_i \, \hat{b}_k(\mathbf{l}_i, \mathbf{l}^*_i) $. 
\end{proof}

\section{Proof of Theorem \ref{valid high-order quad theorem}}\label{proof of theorem valid high-order quad}
\begin{proof}
The Jacobian matrix of a high-order quadrilateral element $\psi$ is given by $J = \left( \frac{\partial\psi}{\partial\xi}, \frac{\partial\psi}{\partial\eta} \right)$. When $\psi$ is expressed in the Bernstein basis, the partial derivatives are:

\begin{equation}
\frac{\partial\psi}{\partial\xi} = n \sum_{j=0}^{n} \sum_{i=0}^{n-1} \Delta_{ij}^{0} B_{i}^{n-1}(\xi) B_{j}^{n}(\eta),
\end{equation}
\begin{equation}
\frac{\partial\psi}{\partial\eta} = n \sum_{i=0}^{n} \sum_{j=0}^{n-1} \Delta_{ij}^{1} B_{i}^{n}(\xi) B_{j}^{n-1}(\eta).
\end{equation}

 Since $\sum_{i=0}^{n} \sum_{j=0}^{m} B_{i}^{n}(\xi) B_{j}^{m}(\eta) \equiv 1$ and $B_{i}^{n}(\xi),B_{j}^{m}(\eta)\in[0,1]$, the derivatives $\frac{\partial\psi}{\partial\xi}$ and $\frac{\partial\psi}{\partial\eta}$ are convex combinations of the vectors $\{\Delta_{ij}^{0}\}$ and $\{\Delta_{ij}^{1}\}$, respectively. Hence, $\frac{\partial\psi}{\partial\xi} \in r_0$ and $\frac{\partial\psi}{\partial\eta} \in r_1$, where $r_0$ and $r_1$ are the sectors spanned by the 0-vectors and 1-vectors. If $r_0 \cap r_1 = \emptyset$ and both sectors lie strictly on the same side of a line, then $\frac{\partial\psi}{\partial\xi}$ and $\frac{\partial\psi}{\partial\eta}$ are linearly independent and form a right-handed system, which implies $J > 0$.

\end{proof}

% Biography
%\bio{}
% Here goes the biography details.
%\endbio

%\bio{pic1}
% Here goes the biography details.
%\endbio

% \textbf{Acknowledgements.} This work was supported by the National Key R&D Program of  

%% If you have bib database file and want bibtex to generate the
%% bibitems, please use
%%
%% Loading bibliography style file
\bibliographystyle{model1-num-names}
% \bibliographystyle{cas-model2-names}

% Loading bibliography database
\bibliography{cas-refs}

\end{document}